\newtheorem{thm}{Theorem}[section]
\newtheorem{lem}[thm]{Lemma}
\newtheorem{cor}[thm]{Corollary}
\newtheorem{prop}[thm]{Proposition}
\theoremstyle{definition}
\newtheorem{defn}[thm]{Definition}
\theoremstyle{remark}
\newtheorem*{clm}{Claim}
\renewcommand{\leq}{\leqslant}
\renewcommand{\geq}{\geqslant}
\begin{document}
%
\title{Large Scale Geometries of Infinite Strings}

\author{\IEEEauthorblockN{Bakh Khoussainov}
\IEEEauthorblockA{Department of Computer Science\\
The University of Auckland\\
Auckland, New Zealand\\
Email: bmk@cs.auckland.ac.nz}
\and
\IEEEauthorblockN{Toru Takisaka}
\IEEEauthorblockA{Research Institute for Mathematical Sciences\\
Kyoto University\\
Kyoto, Japan\\
Email: takisaka@kurims.kyoto-u.ac.jp}
}


%


\IEEEoverridecommandlockouts
\IEEEpubid{\makebox[\columnwidth]{978-1-5090-3018-7/17/\$31.00~
\copyright2017 IEEE \hfill} \hspace{\columnsep}\makebox[\columnwidth]{ }}

\maketitle

\begin{abstract}
We introduce geometric consideration into 
the theory of formal languages. We aim  to shed light on our understanding of global patterns 
that occur on infinite strings. We utilise methods of geometric group theory.
Our emphasis is on  large 
scale geometries. 
Two infinite  strings have the same large scale
geometry if there are colour preserving bi-Lipschitz maps with  
distortions between the strings. 
Call these maps
quasi-isometries. Introduction of  large scale geometries poses several
questions. The first question asks to study the partial order induced by  quasi-isometries.   
This partial order compares large scale geometries; as such  it presents  
an algebraic tool for classification of  global patterns. We prove there is a greatest large scale geometry  
and infinitely many minimal large scale geometries.
The second question is related to understanding the quasi-isometric
maps on various classes of strings.
The third question investigates the sets of large
scale geometries of strings accepted by 
 computational models, e.g.  B\"uchi automata. 
 We provide an algorithm that describes  large scale
geometries of strings accepted by B\"uchi automata. This links
large scale geometries with automata  theory.
The fourth question studies  the complexity of the quasi-isometry problem.
We show the 
problem is $\Sigma_3^0$-complete thus providing a bridge
with computability theory.
Finally, the fifth question asks to
build  algebraic
structures that are invariants of large scale geometries. 
We invoke asymptotic cones, a key concept in geometric group theory, 
defined via model-theoretic notion of ultra-product. Partly, we study asymptotic cones of algorithmically random strings
thus connecting the topic with algorithmic randomness.
\end{abstract}


%
\IEEEpeerreviewmaketitle

\section{Introduction}

Our goal is to introduce geometric considerations into the theory of formal languages.
We  emphasise  the study of large scale geometries of infinite strings. 
Our hope is to shed light on our understanding of global large scale patterns that occur on infinite strings.
Our motivation comes from geometric group theory, a cutting edge research area in group theory 
linked with 
geometry, topology, automata, logic, probability, complexity, etc. 


In geometric group theory, an important 
concept is that of 
 quasi-isometry between metric spaces.
Let $\mathcal M_1=(M_1, d_1)$ and $\mathcal M_2=(M_2, d_2)$ be metric spaces. 

\vspace{-1mm}

\begin{defn} \label{Defn:QIM}
A function $f:M_1\rightarrow M_2$ is called an {\bf $(A,B)-$quasi-isometry} from $\mathcal M_1$ to $\mathcal M_2$, where $A\geq 1$ and $B\geq 0$, if for all $x,y\in M_1$ we have 
$$
(1/A) \cdot d_1(x,y) -B \leq d_2(f(x), f(y))\leq A \cdot d_1(x,y) +B,
$$
and for all $y\in M_2$ there exists an element  $x\in M_1$ such that $d_2(y, f(x))\leq A$. 
\end{defn}

\vspace{-1mm}

Note that when $B=0$, the mapping becomes bi-Lipschitz (and hence continuous).
Thus, a quasi-isometry $f: \mathcal M_1\rightarrow \mathcal M_2$ behaves like a bi-Lipschitz map with distortion $B$ between the metric spaces. For instance, the metric spaces  $\mathbb Z$ (integers) and  $\mathbb R$ (reals) with their natural metric are quasi-isometric  Informally, two metric spaces $\mathcal M_1$ and $\mathcal M_2$ are quasi-isometric if these spaces (such as  $\mathbb Z$ and  $\mathbb R$) look the same from far away. 
The quasi-isometry relation forms 
an equivalence relation on the class of all metric spaces.



Studying quasi-isometry invariants of groups turned out to be crucial in solving many problems in group theory \cite{Gromov1} \cite{Gromov2} \cite{Gromov3}. 
Therefore, finding quasi-isometry invariants  has become an important theme in geometric group theory. 
Examples of quasi-isometry invariants are: being virtually nilpotent, virtually free,  hyperbolic, having polynomial growth rate, 
being finitely presentable, having decidable word problem, asymptotic cones  \cite{Dries} \cite{DrutuKap} \cite{Gromov3}. 


In  formal language theory and logic, one of the main  objects are infinite strings $\alpha$ 
over finite alphabets $\Sigma$. These objects are somewhat boring from a geometry point of view. The strings $\alpha$ possess  the natural metric inherited from the set of natural numbers $\omega$. 
The quasi-isometry type of $\omega$ is the metric space $\mathbb R_{\geq 0}$ of all positive reals. 
So, from a quasi-isometry view point, $\omega$ viewed as a metric space is also somewhat uneventful. 
However, one crucial difference from the setting in geometric group theory  is that the metric spaces $\alpha$ are coloured. Namely,  for every position $i\in \omega$ in $\alpha$, 
the colour of the position $i$ is $\sigma$ 
when $\alpha(i)=\sigma$.
These observations suggest that 
the notions and methods of geometric group theory (e.g. quasi-isometry) can be applied to coloured metric spaces. Here we investigate quasi-isometries of  coloured metric spaces with strings $\alpha$ as our primary objects, thus initiating the study of large scale patterns on strings.


A {\bf coloured metric space} $\mathcal M$ is a tuple $(M; d, C)$, where $(M;d)$ is the {\bf underlying 
metric space} with metric $d$, $C$ is a colour function $C:M \rightarrow 2^{\Sigma}$, and $\Sigma$ is a finite set of colours that we call an alphabet. If $\sigma \in C(m) $ then we say that {\em $m$ has colour $\sigma$}. 
As mentioned above, infinite strings $\alpha$ 
are  coloured metric spaces of the form  
$(\omega; d, \alpha)$, \ 
where $d$ is the natural metric (so, $d(i,j)=|i-j|$) and 
$\alpha: \omega \rightarrow \Sigma$ is the colour function.  Every element in $\alpha$ has a unique colour.
We always assume that the cardinality $|\Sigma|$ of $\Sigma$ is at least $2$.



We denote the set of all infinite strings over $\Sigma$, considered as coloured metric spaces, by $\Sigma^{\omega}$.  
We now adapt Definition \ref{Defn:QIM} for coloured metric spaces.

\begin{defn}
Assume we are given coloured metric spaces $\mathcal M_1=(M_1; d_1, C_1)$ and $\mathcal M_2=(M_2; d_2, C_2)$. 
A {\em colour preserving} quasi-isometry from $(M_1; d_1)$ into $(M_2; d_2)$ is called a
{\bf quasi-isometry} from $\mathcal M_1$ into $\mathcal M_2$.
\end{defn}

We write  $\mathcal M_1\leq_{QI} \mathcal M_2$ if there is a quasi-isometry  from $\mathcal M_1$ into $\mathcal M_2$. 
For metric spaces the relation $\leq_{QI}$ is an equivalence relation, and hence it is symmetric. In contrast, for 
coloured metric spaces the relation $\leq_{QI}$ is not symmetric.
For instance, the mapping $f: \omega\rightarrow \omega$ defined as $f(i)=2i$ is a quasi-isometry from the string $0^{\omega}=00000\ldots$ into  the string $(01)^{\omega}=010101\ldots$.
There is no quasi-isometric mapping in the opposite direction.


If $\mathcal M_1 \leq_{QI} \mathcal M_2$  and $\mathcal M_2 \leq_{QI} \mathcal M_1$ then we write  
this by $\mathcal M_1\sim_{QI} \mathcal M_2$.
The relation $\sim_{QI}$ is an equivalence relation in the class of all coloured metric spaces, in particular on the set $\Sigma^{\omega}$. 
Formally:

\begin{defn} \label{Dfn:QI-type}
The equivalence classes of $\sim_{QI}$ are called  {\bf the quasi-isometry types} or, equivalently,  {\bf the large scale geometries}. For the set $\Sigma^{\omega}$, we define 
$\Sigma_{QI}^{\omega}=\Sigma^{\omega}/\sim_{QI}$. 
Denote the large scale geometry of string $\alpha$  by $[\alpha]$. Thus, $\Sigma_{QI}^{\omega}=\{[\alpha]\mid \alpha \in \Sigma^{\omega}\}$.
\end{defn}

 Non-symmetry of $\leq_{QI}$ on coloured metric spaces allows 
us to compare large scale geometries of these spaces, and consider the partial order $\leq_{QI}$ on the quasi-isometry 
types. Importantly, the partial order $\leq_{QI}$ can be restricted to the large scale geometries of coloured metric spaces over a fixed  underlying metric space (e.g. $\omega$ with its natural metric). In this sense the quasi-isometry on coloured metric spaces is a much refined version of the usual quasi-isometry relation on unclouded metric space. This is because the quasi-isometry type of every uncoloured 
metric space trivialises  to a singleton. 


Introduction of large scale geometries  and the quasi-isometry relation $\leq_{QI}$ poses a vast amount of natural questions. The contribution of this  paper consists of investigating  the following questions:


{\em The first  question} is  concerned with understanding the partial order $\leq_{QI}$ on the set  
$\Sigma_{QI}^{\omega}$ of all large scale geometries.  The order presents an algebraic tool aimed at classification of global patterns that occur on strings. In Section \ref{S:order} we investigate properties of this partial
order. Among several results, we prove that the order has the greatest large scale geometry, infinite chains and infinite 
anti-chains. We show that there are infinitely many minimal
large scale geometries.

 {\em The second question} is related to understanding 
the quasi-isometry relation $\sim_{QI}$. This can be done by either restricting the relation $\sim_{QI}$ on various subclasses  of infinite strings or by 
simplifying the definition of quasi-isometry. In Section \ref{S:EP} we restrict the relation $\sim_{QI}$ to eventually periodic words,
and give a full description of quasi-isometry types in this class. Section \ref{S:Colour-equivalence} gives a
intuitively more easier yet equivalent definition  of the relation
$\leq_{QI}$ that we call component-wise reducibility. We also give more refined version of $\leq_{QI}$,
colour-equivalence, that
implies quasi-isometry. It is shown that colour-equivalence  is strictly stronger than quasi-isometry. It is natural to ask if quasi-isometric maps between strings can be replaced with order preserving quasi-isometric embeddings. 
We give a negative answer to this question; however, we prove that the answer is positive modulo $\sim_{QI}$ equivalence relation.

 {\em The third question} is related to describing sets of large scale geometries. We call such sets atlases. Let $L$ be a language of infinite strings. One 
 considers the atlas $[L]$ of all large scale geometries of strings in $L$. 
So, the  question is related to understanding the atlas $[L]$ given a description of $L$. In particular, a natural question is if one can decide that $[L_1]=[L_2]$ given descriptions of languages $L_1$ and $L_2$. In Section \ref{S:Buchi}, we use B\"uchi automata as a description language and provide a full characterisation of
the atlases defined by B\"uchi recognisable languages. As a consequence, we show that for B\"uchi automata recognisable languages $L_1$ and $L_2$ there is a linear time algorithm 
for deciding the equality of the atlases $[L_1]$ and $[L_2]$.  This contrasts with the PSPACE completeness of  the equality problem for B\"uchi languages. This part of the work links large scale geometries with automata theory and complexity theory.

 {\em The fourth question} addresses the complexity of quasi-isometry relation on infinite strings in terms of arithmetical hierarchy,
 thus connecting the topic with computability theory. 
 Recall that isometry is colour preserving and distance preserving map. So,  quasi-isometry relation is weaker than isometry relation. 
Hence, one  might expect that quasi-isometry is easier to detect than the isometry. We prove that the quasi-isometry relation on computable infinite strings is $\Sigma_3^0$-complete. This is in contrast to $\Pi_1^0$-completeness of 
the isometry relation on computable strings. These are explained in Section \ref{S:complexity}.

 {\em The fifth question} asks how one encodes large scale geometries of coloured metric spaces $\alpha$ 
into ``limit" structures.  
For this, we define structures obtained from ``looking at $\alpha$ from far away". 
 Dries and Wilkie  \cite{Dries}, using ultra-filters, 
 formalised this intuitive notion through the concept of asymptotic cone. Their work gave a logical context to Gromov's work in \cite{Gromov1} \cite{Gromov3}.
We invoke the concept of asymptotic cone 
and study relationship between large scale geometries of infinite strings and their asymptotic cones. In Section \ref{Cones} we prove theorems akin to results on asymptotic cones in geometric group theory, and we show that asymptotic cones of Martin-L\"of random strings coincide when scaling factors are computable. These results bridge the topic of this paper with algorithmic randomness and model theory.


\vspace{-1mm}

\section{The partial order $(\Sigma_{QI}^{\omega}, \leq_{QI})$}\label{S:order}

\subsection{Basic Properties}

\vspace{-1mm}

The quasi-isometry relation $\leq_{QI}$ naturally induces the partially ordered set $(\Sigma_{QI}^{\omega}, \leq_{QI})$ 
on the set of all large scale geometries $[\alpha]$.   Denote this partial order by $\Sigma_{QI}^{\omega}$ thus identifying it with its domain.  
Say that $[\beta]$ {\em covers} $[\alpha]$ if $[\alpha]\neq [\beta]$, $[\alpha]\leq_{QI} [\beta]$,  and for all  $[\gamma]$ 
such that $[\alpha]\leq_{QI} [\gamma] \leq_{QI} [\beta]$ we have $[\gamma]=[\alpha]$ or $[\gamma]=[\beta]$. An element is an {\em atom} if it covers a minimal element.


\begin{prop}\label{Prop:elementary}
The partial order $\Sigma_{QI}^{\omega}$ has the following properties:
(1) There exists a greatest element; (2) There exist at least $|\Sigma|$ minimal elements; (3) 
There exist at least $|\Sigma|\cdot (|\Sigma|-1)/2$ atoms.
\end{prop}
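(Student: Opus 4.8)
The plan is to exhibit explicit witnesses for the three clauses. The workhorse is the following elementary observation about a colour-preserving quasi-isometry $f\colon\beta\to\gamma$: every colour occurring in $\beta$ occurs in $\gamma$, and if $\gamma$ has only finitely many positions of colour $\sigma$, then so does $\beta$ and, moreover, the $\sigma$-positions of $\beta$ all lie in a single bounded window (two $\sigma$-positions of $\beta$ at distance $>AB$ would be sent to $\sigma$-positions of $\gamma$ at distance $0$, violating the lower Lipschitz estimate of Definition~\ref{Defn:QIM}).

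For (1), write $\Sigma=\{a_1,\dots,a_m\}$ and put $\alpha_0=(a_1a_2\cdots a_m)^{\omega}$; I claim $[\alpha_0]$ is the greatest element. Given an arbitrary $\beta$, cut it into maximal monochromatic blocks $B_1B_2B_3\cdots$ and build $f\colon\beta\to\alpha_0$ block by block: the $\ell_j$ positions of $B_j$ (of colour $c_j$) are sent, in order, to the first $\ell_j$ positions of colour $c_j$ in $\alpha_0$ that lie to the right of the image of $B_{j-1}$. Since each colour occurs in $\alpha_0$ along an arithmetic progression of common difference $m$, the resulting $f$ is strictly increasing, $m$-Lipschitz, and has $m$-dense image, hence is an $(m,m)$-quasi-isometry; this gives $\beta\leq_{QI}\alpha_0$. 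The only point requiring attention is the degenerate shapes of $\beta$ (eventually constant, or $\beta=\sigma^{\omega}$), for which the same recipe works with the unique infinite block.

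For (2), fix a colour $\sigma$. If $\gamma\leq_{QI}\sigma^{\omega}$ then colour-preservation forces $\gamma(i)=\sigma$ for every $i$, so $\gamma=\sigma^{\omega}$; hence $[\sigma^{\omega}]$ is minimal. For distinct colours $\sigma,\tau$ there is no colour-preserving map either between $\sigma^{\omega}$ and $\tau^{\omega}$, so these $|\Sigma|$ classes are pairwise incomparable, in particular distinct. For (3), given distinct colours $\sigma,\tau$, I claim $[\sigma\tau^{\omega}]$ covers the minimal element $[\tau^{\omega}]$, hence is an atom. The shift $i\mapsto i+1$ witnesses $[\tau^{\omega}]\leq_{QI}[\sigma\tau^{\omega}]$, and this is strict since no colour-preserving map sends $\sigma\tau^{\omega}$ into $\tau^{\omega}$. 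If $\gamma\leq_{QI}\sigma\tau^{\omega}$, the observation above says $\gamma$ uses only the colours $\sigma,\tau$ and has finitely many $\sigma$'s; if it has none then $\gamma=\tau^{\omega}$, and otherwise $\gamma=w\tau^{\omega}$ for a finite word $w$ over $\{\sigma,\tau\}$ containing a $\sigma$, and a direct argument (collapse the bounded window carrying the $\sigma$'s onto the single $\sigma$ of $\sigma\tau^{\omega}$, shift the $\tau$-tail) gives $\gamma\sim_{QI}\sigma\tau^{\omega}$. So nothing lies strictly between $[\tau^{\omega}]$ and $[\sigma\tau^{\omega}]$. Finally, $[\sigma\tau^{\omega}]$ covers exactly the minimal element $[\tau^{\omega}]$ and uses exactly the colours $\sigma$ and $\tau$, so $(\sigma,\tau)\mapsto[\sigma\tau^{\omega}]$ is injective on ordered pairs of distinct colours, yielding $|\Sigma|(|\Sigma|-1)\geq|\Sigma|(|\Sigma|-1)/2$ atoms.

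The main obstacle is clause (1): one has to verify that the single map $f$ constructed above is genuinely a quasi-isometry for \emph{every} $\beta$, and in particular that coarse surjectivity holds uniformly --- this is exactly why $\alpha_0$ is chosen so that every colour occurs with gap precisely $m$ --- and that the eventually-periodic and constant cases are absorbed by the same construction. Clauses (2) and (3) are then routine, granted the one structural lemma about quasi-isometries into strings in which some colour occurs only boundedly often.
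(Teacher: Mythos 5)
Your proposal is correct and follows essentially the same route as the paper: the same witnesses $(\sigma_1\cdots\sigma_k)^{\omega}$, $\sigma^{\omega}$, and $\sigma\tau^{\omega}$ are used for the greatest element, the minimal elements, and the atoms, respectively, with the only difference being the explicit map in part (1) (the paper sends position $n$ of $\beta$ to the matching colour inside the $n$-th block of $(\sigma_1\cdots\sigma_k)^{\omega}$, while you build a greedy monotone block-by-block map -- both are $(|\Sigma|,|\Sigma|)$-quasi-isometries). You also spell out the atom argument (bounded window for the $\sigma$-positions, explicit reverse map) in more detail than the paper and observe the slightly stronger count $|\Sigma|(|\Sigma|-1)$, which is consistent with the stated bound.
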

\begin{proof} 
Assume that $\Sigma = \{\sigma_1, ... , \sigma_k\}$. 
For part (1),  we claim that  $\alpha = (\sigma_1...\sigma_k)^\omega$ is the greatest element in $\Sigma_{QI}$. Indeed, take any $\beta \in \Sigma^{\omega}$.  We write $\alpha$ as $v_0 v_1 \ldots$ where each $v_i$ is $\sigma_1\ldots \sigma_k$. Define $f:\beta \rightarrow \alpha$ such that $f$ maps any position $n$ (in $\beta$) to
the position $k_n$ in the portion $v_n$ of the string $\alpha$ such that $\beta(n)=\alpha(k_n)$.  
For part (2), consider the quasi-isometry types $[\sigma_{i}^{\omega}]$. These 
form minimal elements in $\Sigma_{QI}^{\omega}$.  Indeed, if $\alpha\leq_{QI} \sigma^{\omega}_i$, then each element of  
$\alpha$ has  colour $\sigma_i$. For part (3) ,  consider $\sigma_i (\sigma_j)^{\omega}$, where $i\neq j$. 
Clearly, $\sigma_j^{\omega}\leq_{QI} \sigma_i (\sigma_j)^{\omega}$. Moreover, for any $\beta\neq \sigma_j^{\omega}$
if $\beta \leq_{QI} \sigma_i (\sigma_j)^{\omega}$ then $\beta$  finitely many positions of $\beta$ are coloured with $\sigma_i$, and all other positions are coloured with $\sigma_j$. Hence, $\beta\sim_{QI} \sigma_i (\sigma_j)^{\omega}$.
Thus, we have found  at least $k\cdot (k-1)/2$  of the atoms.
\end{proof}

A quasi-isometry $g:\alpha\rightarrow \beta$ can produce {\em cross-overs}, e.g., pairs $n$ and $m$ with $n<m$ but $g(m)< g(n)$. The definition of quasi-isometry does not obviously prohibit large cross-overs. Nevertheless, 
the following  lemma shows that there is a uniform bound on cross-overs. 

\begin{lem}[{\bf Small Cross Over Lemma}]\label{L:no-big-crossing}
Consider a quasi-isometry  map $g:\alpha\rightarrow \beta$. There is a constant $C\leq 0$ 
such that for all $n<m$ we have $g(m)-g(n)\geq C$.
\end{lem}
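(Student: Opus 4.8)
The plan is to extract everything from the two defect inequalities of an $(A,B)$-quasi-isometry together with a discrete intermediate-value argument. First I would record two consequences of $g$ being an $(A,B)$-quasi-isometry from $\alpha$ to $\beta$ (both carrying the natural metric on $\omega$): the \emph{step bound} $|g(j{+}1)-g(j)|\leq A+B$ for every $j$, coming from the Lipschitz side $|g(i)-g(j)|\leq A|i-j|+B$; and the fact that $g(j)\to\infty$, coming from the expansion side $|g(j)-g(0)|\geq \tfrac1A j - B$ together with $g(j)\geq 0$ (if $g(j)\leq g(0)$ then $\tfrac1A j - B\leq g(0)$, which fails once $j>A(g(0)+B)$, so $g(j)\geq \tfrac1A j - B$ for all large $j$). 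Note that the coarse-surjectivity clause of Definition~\ref{Defn:QIM} plays no role here.

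Next, fix $n<m$. If $g(m)\geq g(n)$ there is nothing to prove, so assume $g(m)<g(n)$ and put $D:=g(n)-g(m)>0$; the depth of this cross-over is exactly $D$, and the whole task is to bound $D$ by a constant depending only on $A$ and $B$. Since the sequence $g(m),g(m{+}1),g(m{+}2),\dots$ starts at a value strictly below $g(n)$, tends to $\infty$, and changes by at most $A+B$ at each step, there is a \emph{least} index $j^{\ast}>m$ with $g(j^{\ast})>g(n)$. Minimality forces $g(j^{\ast}-1)\leq g(n)$, and then the step bound gives $g(n)<g(j^{\ast})\leq g(n)+(A+B)$, i.e.\ $|g(n)-g(j^{\ast})|\leq A+B$. (That $j^{\ast}$ lies strictly to the right of $m$ is exactly what uses $g(m)<g(n)$; its existence uses $g(j)\to\infty$.)

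Finally, feed the pair $(n,j^{\ast})$ into the expansion inequality: $\tfrac1A(j^{\ast}-n)-B\leq |g(n)-g(j^{\ast})|\leq A+B$, hence $j^{\ast}-n\leq A(A+2B)$, and since $n<m<j^{\ast}$ this already yields $m-n\leq A^{2}+2AB$. Then the Lipschitz inequality bounds the depth, $D=|g(n)-g(m)|\leq A(m-n)+B\leq A^{3}+2A^{2}B+B$, so $g(m)-g(n)=-D\geq -(A^{3}+2A^{2}B+B)$, and one may take $C:=-(A^{3}+2A^{2}B+B)\leq 0$. I expect the only delicate point to be the existence and placement of $j^{\ast}$: one has to be sure that the target values $g(j)$ for $j\geq m$ really do climb past $g(n)$ — this is precisely why $g(j)\to\infty$ is needed — and that they first exceed $g(n)$ at some index strictly greater than $m$; beyond that, the argument is just the two defect inequalities applied to the pairs $(0,j)$, $(j,j{+}1)$, $(n,j^{\ast})$, and $(n,m)$.
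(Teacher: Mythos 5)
Your proof is correct and takes essentially the same route as the paper's: both locate the first place after $m$ where $g$ climbs back above $g(n)$ (the paper via its $p$, $q$ and the crossing index $r_C$), bound the jump at that crossing by $A+B$, and play this against the expansion inequality, arriving at the same constant $C=-(A^{3}+2A^{2}B+B)$. Your explicit check that $g(j)\to\infty$, which guarantees the crossing index exists, is a point the paper leaves implicit.
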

\begin{proof} Let $n< m$ be given and suppose we have 
$g(m)-g(n)<C$ for some $C \leq 0$. The goal is to provide a lower bound for $C$.  
Define $q = \min(g^{-1}(p) \cap [m+1, \infty))$ and 
$p=\min (g([m+1,\infty)) \cap [g(n)+1, \infty))$. \ 
Then, we obtain
\begin{eqnarray*}
d(g(n), p) & \geq & (1/A) \cdot d(n,q)-B \geq  (1/A) d(n,m)-B\\ 
           & \geq & (1/A) (1/A) (d(g(n), g(m))-B)-B\\
           & \geq & -C/A^2 -((A^2+1)/(A^2)) B.
\end{eqnarray*}
We have $g([m,q]) \subset [0,g(n)] \cup [p,\infty)$ by our definition of $p$ and $q$. Note $g(m)\in[0,g(n)]$ and $g(q)\in[g(p),\infty)$. Hence, there exists $r_C \in [m, q-1]$ such that $g(r_C)\in[0,g(n)]$ and $g(r_C+1)\in[g(p),\infty)$, which means 
$$
d(g(n),p) \leq d(g(r_C),g(r_C+1)) \leq A + B.
$$ 
From these 
inequalities we have a lower bound on $C$. 
\end{proof}

\vspace{-2mm}
\begin{lem}\label{L:pre-image}
Let $f:\alpha \rightarrow \beta$ be a quasi-isometric mapping. Then there exists constants $A'$ and $B'$ such that
for all positions $x, y\in \beta$ with $x\leq y$ we have 
\begin{center}{
$(1/A') \cdot d_2(x,y) -B'  \leq $  \\
$d_1(\min f^{-1}([x,y]),\max f^{-1}([x,y]))  \leq 
 A'  \cdot d_2(x,y) +B'$.}
\end{center}
\end{lem}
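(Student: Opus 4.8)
The plan is to prove the two displayed inequalities with $A' = A$ and $B'$ a constant depending only on $A$ and $B$, treating them essentially independently. Fix $x \le y$ in $\beta$, write $I = f^{-1}([x,y])$, and observe first that $I$ is always finite: by the lower quasi-isometry inequality of Definition~\ref{Defn:QIM}, any $a,b \in I$ satisfy $(1/A)\, d(a,b) - B \le d(f(a),f(b)) \le d(x,y)$, so $\operatorname{diam}(I) \le A\, d(x,y) + AB$. Hence, when $I \ne \emptyset$, the quantities $u = \min I$ and $v = \max I$ are well defined and $I \subseteq [u,v]$. Moreover, once $d(x,y)$ exceeds a suitable constant $c_0 = c_0(A)$, the set $I$ is nonempty: choosing an integer $z$ with $x + A \le z \le y - A$ and using coarse surjectivity to get $w$ with $d(z, f(w)) \le A$ yields $f(w) \in [x,y]$, i.e.\ $w \in I$. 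For $d(x,y) < c_0$ the left-hand inequality is vacuous provided $B'$ is taken large enough (and $I = \emptyset$ can only occur in this range, handled by the usual $\min/\max$-of-the-empty-set convention), so from now on I assume $d(x,y) \ge c_0$ and work with $u,v$.

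The upper bound is immediate: $f(u), f(v) \in [x,y]$ gives $d(f(u),f(v)) \le d(x,y)$, and the lower quasi-isometry inequality then yields $d_1(u,v) \le A\, d(x,y) + AB$. For the lower bound, the key device is a diameter sandwich on the set $f([u,v])$. On one side, for every integer $z$ with $x + A \le z \le y - A$, coarse surjectivity supplies $w$ with $d(z, f(w)) \le A$, whence $f(w) \in [x,y]$ and so $w \in I \subseteq [u,v]$; applying this with $z$ near the left end and with $z$ near the right end of the range produces $w_1, w_2 \in [u,v]$ with $f(w_1) \le x + 2A$ and $f(w_2) \ge y - 2A$ (up to an $O(1)$ rounding term absorbed into $B'$), so $\operatorname{diam}(f([u,v])) \ge d(x,y) - 4A$. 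On the other side, the upper quasi-isometry inequality gives $d(f(a),f(b)) \le A\, d(a,b) + B \le A\, d_1(u,v) + B$ for all $a,b \in [u,v]$, so $\operatorname{diam}(f([u,v])) \le A\, d_1(u,v) + B$. Combining the two bounds, $d(x,y) - 4A \le A\, d_1(u,v) + B$, i.e.\ $d_1(u,v) \ge (1/A)\, d(x,y) - (4 + B/A)$, which is the left-hand inequality. Taking $B' = \max\{\, AB,\ 4 + B/A,\ c_0 \,\}$ then works for all $x \le y$.

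The main obstacle will be the lower bound, and within it the one point that genuinely needs care is making sure that the preimages produced by coarse surjectivity for $z \in [x+A, y-A]$ really lie in $[u,v]$ — that is, that replacing $f^{-1}([x,y])$ by the honest interval $[u,v] \supseteq f^{-1}([x,y])$ costs nothing in the sandwich. This is exactly where cross-overs of $f$ could a priori cause trouble, and one may instead route the estimate through the Small Cross Over Lemma (Lemma~\ref{L:no-big-crossing}); but bounding the diameter of $f([u,v])$ directly, as above, sidesteps the issue, since diameters do not see the order in which $f$ visits $[x,y]$.
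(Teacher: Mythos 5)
Your proof is correct, but it takes a genuinely different route from the paper's. For the lower bound the paper tracks the images of the extreme preimages $m_{xy}=\min f^{-1}([x,y])$ and $M_{xy}=\max f^{-1}([x,y])$: it introduces $m'_{xy}=\min([x,y]\cap f(\alpha))$ and $M'_{xy}=\max([x,y]\cap f(\alpha))$, places these within $2A$ of $x$ and $y$ by coarse surjectivity, and then invokes the Small Cross Over Lemma (Lemma~\ref{L:no-big-crossing}) to get $d_2(f(m_{xy}),m'_{xy})\le -C$ and $d_2(f(M_{xy}),M'_{xy})\le -C$, so that $d_2(f(m_{xy}),f(M_{xy}))\ge d_2(x,y)-4A+2C$; the lower quasi-isometry inequality applied to the pair $(m_{xy},M_{xy})$ then finishes, and the resulting $B'$ depends on the cross-over constant $C$. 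You never ask where $f(m_{xy})$ and $f(M_{xy})$ land: coarse surjectivity gives you points $w_1,w_2\in f^{-1}([x,y])$ with $f(w_1)$ within roughly $2A$ of $x$ and $f(w_2)$ within roughly $2A$ of $y$, these automatically satisfy $\min f^{-1}([x,y])\le w_1,w_2\le \max f^{-1}([x,y])$, and the \emph{upper} Lipschitz bound applied to $(w_1,w_2)$ (your diameter sandwich) closes the estimate. This is exactly the step you flagged as the potential obstacle, and your observation is right: since you only compare distances, not the order in which $f$ visits $[x,y]$, cross-overs are irrelevant, so Lemma~\ref{L:no-big-crossing} is not needed and your constants depend only on $A$ and $B$. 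You are also more explicit than the paper about the degenerate cases ($d_2(x,y)$ small, $f^{-1}([x,y])=\emptyset$), which the paper's proof silently ignores; absorbing them into $B'$ is fine. Both arguments yield $A'=A$, and the upper bound $d_1(m_{xy},M_{xy})\le A\,d_2(x,y)+AB$ is obtained by the same computation in both.
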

\begin{proof} 
The inequality in the definition of quasi-isometry implies:
\begin{center}{
$(1/A)\cdot d_2(f(n),f(m))-(B/A)\leq d_1(n,m)$\\
$\leq A\cdot d_2(f(n),f(m))+AB.$}
\end{center}
For given $x\leq y\in \beta$, let $M_{xy}= \max f^{-1}([x,y])$ and $m_{xy}= \min f^{-1}([x,y])$. Then we have
$d_1(m_{xy},M_{xy}) \leq A \cdot d_2(f(m_{xy}),f(M_{xy})) +AB \leq A \cdot d_2(x,y) +AB.$ 
For the lower bound of $d_1(m_{xy},M_{xy})$, let $M'_{xy}= \max ([x,y] \cap f(\alpha))$ and $m'_{xy}= \min ([x,y] \cap f(\alpha))$.
Then due to the last condition of quasi-isometricity it holds that $d_2(y,M'_{xy}) \leq 2A$ and $d_2(x,m'_{xy}) \leq 2A$. Also from Lemma \ref{L:no-big-crossing} we have 
$$
d_2(f(M_{xy}),M'_{xy}) \leq -C
$$
and
$$
d_2(f(m_{xy}),m'_{xy}) \leq -C.
$$ 
Hence \ 
$
d_2(f(m_{xy}),f(M_{xy})) \geq d_2(x,y) -4A+2C$. \ 
\noindent
In summary, for any $x \leq y \in \beta$ it holds that 
\begin{center}{
$(1/A)\cdot d_2(x,y)-(4A+B-2C)/A$\\
$\leq d_1(m_{xy},M_{xy})\leq A\cdot d_2(x,y)+AB.$}
\end{center}
Set $A' = A$ and $B' = (4A+B-2C)/A +AB $. 
\end{proof}

\begin{cor}\label{C:pre-image}
Let $f:\alpha \rightarrow \beta$ be a quasi-isometry.  There is a $C>0$ so that $|f^{-1}(y)|<C$ for all $y\in \beta$. \qed
\end{cor}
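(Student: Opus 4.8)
The plan is to read the bound straight off the lower inequality in the definition of quasi-isometry, with no appeal to the crossing lemmas needed. Fix $y\in\beta$ and let $n,m\in f^{-1}(y)$ be arbitrary. Then $d_2(f(n),f(m))=d_2(y,y)=0$, so the left-hand inequality of Definition~\ref{Defn:QIM}, namely $(1/A)\cdot d_1(n,m)-B\leq d_2(f(n),f(m))$, forces $d_1(n,m)\leq AB$. Hence every two elements of $f^{-1}(y)$ lie within distance $AB$ of each other, i.e. $f^{-1}(y)$ is contained in the integer interval $[\,\min f^{-1}(y),\ \min f^{-1}(y)+AB\,]$, which contains at most $\lfloor AB\rfloor+1$ positions. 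Therefore $|f^{-1}(y)|\leq \lfloor AB\rfloor+1 < AB+2$, and setting $C=AB+2$ gives a single constant working for all $y\in\beta$ simultaneously, since $A$ and $B$ depend only on $f$ and not on $y$.

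Alternatively, and perhaps more in the spirit of the surrounding development, one can derive it as a one-line consequence of Lemma~\ref{L:pre-image} by taking $x=y$: that lemma gives $d_1(\min f^{-1}([y,y]),\max f^{-1}([y,y]))\leq A'\cdot d_2(y,y)+B'=B'$, and since $f^{-1}(y)=f^{-1}([y,y])$ this says $f^{-1}(y)$ has diameter at most $B'$, hence at most $B'+1$ elements; take $C=B'+2$. I would present the direct argument as the main proof and possibly remark on the reduction to Lemma~\ref{L:pre-image}.

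There is essentially no obstacle here: the statement is a routine corollary, and the only thing to be slightly careful about is bookkeeping — making sure the constant is stated as a strict upper bound $C>0$ (so one adds a bit of slack, e.g. $+2$ rather than $+1$) and that it is manifestly independent of the chosen point $y$. I would keep the proof to two or three sentences.
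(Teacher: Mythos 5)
Your proposal is correct. The paper marks this corollary \qed{} without writing out an argument, treating it as immediate from its placement after Lemma~\ref{L:pre-image}, and your second route (take $x=y$ in that lemma, so $f^{-1}(y)$ has diameter at most $B'$ and hence at most $B'+1$ elements) is exactly that intended reading. Your main, direct argument — for $n,m\in f^{-1}(y)$ the lower quasi-isometry inequality gives $(1/A)\,d_1(n,m)-B\leq 0$, so $d_1(n,m)\leq AB$ and $|f^{-1}(y)|\leq AB+1$ — is also valid and is in fact slightly more elementary, since it uses only Definition~\ref{Defn:QIM} and bypasses the crossing-lemma machinery (Lemma~\ref{L:no-big-crossing}) on which Lemma~\ref{L:pre-image} depends; either version, with the $+2$ slack to make $C$ a strict bound, is a complete proof.
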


\subsection{Structure theorems}
Elementary properties of the partial order $\Sigma_{QI}^{\omega}$ are in Proposition \ref{Prop:elementary}.
Now we provide several 
structure theorems describing algebraic properties of 
$\Sigma_{QI}^{\omega}$.
\begin{thm}\label{P:infinite-chain}
The partial order  $\Sigma_{QI}^{\omega}$ has a chain $(\alpha_n)_{n \in \mathbb{Z}}$ of the type of integers, that is 
$\forall n\in \mathbb{Z} [\alpha_n <_{QI} \alpha_{n+1}]$. Furthermore, the partial order $\Sigma_{QI}^{\omega}$ has  
a countable anti-chain.
\end{thm}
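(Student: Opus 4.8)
The plan is to construct both the $\mathbb{Z}$-chain and the countable antichain explicitly, using strings whose large scale geometry is controlled by the asymptotic \emph{density} of one colour. Fix $\Sigma \supseteq \{0,1\}$. For the chain, the natural idea is to take strings of the form $\alpha_n$ where the density of the letter $1$ decreases (or the gaps between consecutive $1$'s grow) as $n \to -\infty$ and increases as $n \to +\infty$. A clean choice: for $n \geq 1$, let $\alpha_n$ have $1$'s at positions that are the $n$-th powers, i.e. blocks of $0$'s of length roughly $i^n - (i-1)^n$ between the $i$-th and $(i+1)$-st occurrence of $1$; and for $n \leq 0$, push the $1$'s to be ever sparser, e.g. $1$'s at positions $2^{2^{|n|} \cdot i}$ so the gaps grow faster than any polynomial. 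One then shows $\alpha_n \leq_{QI} \alpha_{n+1}$ by sending the $i$-th $1$ of $\alpha_{n+1}$ back to the $i$-th $1$ of $\alpha_n$ and interpolating linearly on the intervening $0$-blocks; this is a quasi-isometry because the ratio of corresponding gap lengths is bounded. The strictness $\alpha_n <_{QI} \alpha_{n+1}$ follows from the fact that a quasi-isometry $g:\alpha\to\beta$ distorts distances by at most a multiplicative constant $A$ and additive constant $B$, together with Corollary \ref{C:pre-image} (bounded fibres), which forces the gap sequences of quasi-isometric strings to be ``linearly comparable''; since the gap growth rates of $\alpha_n$ and $\alpha_{n+1}$ differ super-linearly, no quasi-isometry $\alpha_{n+1}\to\alpha_n$ can exist.

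For the antichain, the plan is to use strings $\beta_p$, for $p$ ranging over an infinite set of primes (or any infinite set of pairwise ``incomparable growth'' parameters), where $\beta_p$ has its $1$'s spaced so that the $i$-th gap has length about $2^{p \cdot i}$ — an exponential gap with base depending on $p$. Two such strings $\beta_p, \beta_q$ with $p \neq q$ are $\leq_{QI}$-incomparable: a quasi-isometry in either direction would have to match up the $i$-th $1$ of one with roughly the $i$-th $1$ of the other (again by bounded distortion plus bounded fibres, the ``$i$-th block goes to $i$-th block up to a bounded shift'' phenomenon), but then the gap lengths $2^{pi}$ and $2^{qi}$ would have to satisfy $2^{qi} \leq A\cdot 2^{pi} + B$ for all $i$, which fails when $q > p$. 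Doing this in both directions rules out comparability in either orientation, giving a countable antichain.

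The key technical lemma I would isolate and prove first is the \textbf{block-tracking lemma}: if $g:\alpha\to\beta$ is an $(A,B)$-quasi-isometry and $\alpha,\beta$ each contain the letter $1$ at an infinite, increasing sequence of positions $a_1 < a_2 < \cdots$ and $b_1 < b_2 < \cdots$ respectively, with all gaps growing to infinity, then there is a constant $K$ and a ``shift'' such that $g$ maps the $i$-th occurrence of $1$ in $\alpha$ to within $K$ of the $j$-th occurrence of $1$ in $\beta$ for some $j$ with $|i - j|$ bounded, \emph{and} consequently $\tfrac{1}{A'}(a_{i+1}-a_i) - B' \leq b_{j+1}-b_j \leq A'(a_{i+1}-a_i)+B'$ for corresponding indices. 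This is essentially Lemma \ref{L:pre-image} applied to the intervals between consecutive $1$'s, combined with the Small Cross Over Lemma to handle the cross-overs and with colour-preservation to force $1$'s to land near $1$'s. Once this lemma is in hand, both the chain and the antichain reduce to elementary comparisons of growth rates of integer sequences.

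The main obstacle I anticipate is the block-tracking lemma itself — specifically, ruling out the possibility that a quasi-isometry ``collapses'' many $1$-blocks of $\alpha$ onto a single region of $\beta$ or ``spreads'' one block across many. Corollary \ref{C:pre-image} bounds preimages of points, and Lemma \ref{L:pre-image} controls preimages of intervals, so a $1$-block of $\beta$ of length $\ell$ has a preimage of diameter between $\ell/A' - B'$ and $A'\ell + B'$; the subtlety is that this preimage, living in $\alpha$, need not be a single $1$-gap but could straddle several, and one must use the lower bound together with colour preservation (each such straddled $1$ of $\alpha$ must map to a $1$ of $\beta$ within the target block or its bounded neighbourhood) to bound how many it straddles. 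Making the index-correspondence $i \leftrightarrow j$ genuinely well-defined up to a bounded error, rather than just ``roughly monotone,'' is where the care is needed; everything downstream is routine.
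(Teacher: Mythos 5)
Your plan hinges on a ``block-tracking lemma'' asserting that any quasi-isometry between strings with isolated $1$'s and growing gaps must match the $i$-th $1$ of the source with the $j$-th $1$ of the target where $|i-j|$ is bounded. That statement is false, and it is not a technicality one can patch downstream. Colour preservation forces $1$'s to land on $1$'s, but nothing forces every $1$ of the target to be hit: quasi-surjectivity only gives an $A$-dense image, and consecutive points may jump a distance up to $A+B$ straight over an isolated $1$. Concretely, let $\alpha$ have $1$'s at positions $4^i$ and $\beta$ have $1$'s at positions $2^i$; the map that is the identity except that it shifts the positions $2^j$ ($j$ odd) by one is a colour-preserving $(1,2)$-quasi-isometry sending the $i$-th $1$ of $\alpha$ to the $2i$-th $1$ of $\beta$, an unbounded index shift. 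This breaks your antichain outright: for $p<q$ the $j$-th $1$ of $\beta_p$ sits at position roughly $c_p2^{pj}$, so mapping the $i$-th $1$ of $\beta_q$ to the $\lceil qi/p\rceil$-th $1$ of $\beta_p$ matches consecutive gap lengths up to a multiplicative constant depending on $p,q$, and filling each $0$-block affinely (perturbing by $1$ to step over the isolated interior $1$'s of $\beta_p$) produces a genuine quasi-isometry $\beta_q\rightarrow\beta_p$. Your $\{\beta_p\}$ is therefore a comparable family, not an antichain. The chain half has a parallel problem: for $1$'s at $i^n$ versus $i^{n+1}$ the corresponding gaps are about $ni^{n-1}$ and $(n+1)i^n$, whose ratio tends to infinity, so the index-preserving map with linear interpolation is \emph{not} a quasi-isometry (your claim that the gap ratios are bounded is false); in such density families the comparability that does exist runs from the sparser string into the denser one via a position-tracking map, which also conflicts with your labelling for $n\geq 1$.

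The paper's constructions are designed precisely to restore the rigidity you are assuming. Its antichain strings $\beta_n = 010^{2^n}1^{2^n}0^{3^{n}}1^{3^{n}}\ldots$ have \emph{both} $0$-blocks and $1$-blocks of unbounded length: a jump of size at most $A+B$ cannot cross a long block of the opposite colour, and $A$-density forbids omitting one, so eventually blocks must go to blocks ``without vacancy'' with only a bounded index shift remaining, and the polynomial lengths $k^n$ then make the ratio $(k'+k)^n/(k'+l)^m$ unbounded or vanishing unless $n=m$. For the chain the paper uses $\alpha_n=(01)^{2^n}(011)^{2^n}\ldots(01^{2^k})^{2^n}\ldots$, gets the easy direction from an explicit block map sending $(01^{2^{k+1}})$ onto $(01^{2^k})^2$ (pieces of essentially equal length, so no stretching), and gets strictness from eventual monotonicity on the $0$-positions via Lemmas \ref{L:no-big-crossing} and \ref{L:pre-image} together with a growth-rate comparison. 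To salvage your approach you would need either to make both colours' block lengths grow (as the paper does), or to prove a correct, much weaker tracking statement and redo all the growth-rate comparisons allowing superlinear re-indexing; as written, both halves of your argument rest on the false lemma.
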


\begin{proof}
We prove the first part. Take two distinct element in $\Sigma$, say $0$ and $1$, respectively. Let
\[
  \alpha_n = \begin{cases}
    (01)^{2^n}(011)^{2^n}...(01^{2^k})^{2^n}... & (n\geq0) \\
    (01)(01^{2^{2^{-n}}})...(01^{(2^k)^{2^{-n}}})... & (n<0)
  \end{cases}
\]
The idea is the following. The string $\alpha_0$ is of the form:
$$
(01) (011) (01111)\ldots (01^{2^k})\ldots
$$
Let us call the substrings $(01^{2^k})$ blocks of $\alpha_0$. The above definition tells us that $\alpha_1$ is obtained from $\alpha_0$ by doubling each block of $\alpha_0$; $\alpha_{-1}$ is obtained from $\alpha_0$ by removing every other block. This 
is propagated to all $\alpha_n$'s.

We show that $\alpha_n <_{QI} \alpha_{n+1}$ for all $n\geq 0$. For negative $n$, the proof is similar.
To see $\alpha_n \leq_{QI} \alpha_{n+1}$, consider the mapping which sends an interval $(01^{2^{k+1}})$ in $\alpha_n$ to $(01^{2^{k}})^2$ in $\alpha_{n+1}$ in an obvious injective way. This mapping is a quasi-isometry from $\alpha_n$ into $\alpha_{n+1}$.


To see that the converse is not true, assume that $g$ is a quasi-isometry from $\alpha_{n+1}$ into $\alpha_n$. Then using Lemma \ref{L:pre-image} and Lemma \ref{L:no-big-crossing}, one sees that $g$ needs to be strictly monotone on almost every point of $\alpha_{n+1}$ with colour $0$; that is, we need to have $M \in \mathbb{N}$ such that for all
$m,m'>M$ and $m>m'$ we have \ 
$\alpha_{n+1}(m) = \alpha_{n+1}(m') = 0  \Rightarrow g(m) > g(m')$. 
But this is not possible; indeed, let $m_{i,n} \in \omega$ be the position of $i$-th $0$ in $\alpha_n$. If $m_{i,n+1} > M$ and $g(m_{i,n+1}) = m_{j,n}$, then from monotonicity and injectivity we should have $d(g(m_{i,n+1}),g(m_{i+k,n+1})) \geq m_{j+k,n} - m_{j,n}$ for each $k \in \mathbb{N}$. For any $i$ and $j$ we can easily verify
\[
\lim_{k\to\infty} \frac{m_{j+k,n} - m_{j,n}}{m_{i+k,n+1} - m_{i,n+1}} = +\infty,
\]
which means we cannot have any bound $A$ as in Definition 1. This is contradiction, and hence there is no quasi-isometry from $\alpha_{n+1}$ into $\alpha_n$. 



Now we prove the second part. 
Consider the following sequence of strings $\beta_n$, $n\in \omega$:
\[
\beta_n = 010^{2^n}1^{2^n}0^{3^{n}}1^{3^{n}}...0^{k^{n}}1^{k^{n}}...
\]
We claim that this sequence forms an anti-chain in  $\Sigma_{QI}^{\omega}$.  Take any $n,m \in \mathbb{N}$ and suppose $\beta_n \leq_{QI} \beta_m$ via $f$. It suffices to show $n=m$.

Let $A_{n',k'}$ be the ``$k'$-th block'' of zeros in $\beta_{n'}$, i.e. 
\[
A_{n',k'} = [2\sum_{i=0}^{k'-1}i^{n'},2\sum_{i=0}^{k'-1}i^{n'}+(k')^{n'}-1]
\]
Then by easy argument we can show that there exists $k,l \in \mathbb{N}$ such that
(1) $f(A_{n,k}) \subseteq A_{m,l}$, and (2) for every $k' \in \mathbb{N}$,  we have $f(A_{n,k'+k}) \subseteq A_{m,k'+l}$. 
Intuitively, these say that from some point $f$ maps $\beta_n$ in ``block by block'' manner without vacancy. For quasi-isometricity of $f$ we should have an upper and positive lower bound of the rate 
\[
\frac{|A_{n,k'+k}|}{|A_{m,k'+l}|} = \frac{(k'+k)^n}{(k'+l)^m}
\]
with respect to $k'$, since otherwise we do not have any bound $A$ as in Definition 1. Clearly $n=m$ is the only case that satisfies this condition. 
\end{proof}
\noindent
The trivial large scale geometries $[0^{\omega}]$ and $[1^{\omega}]$, as noted above, 
 are minimal elements of $\Sigma_{QI}^{\omega}$. The next theorem shows that there are non-trivial minimal large scale geometries. 
\begin{thm} \label{Thm:minimal}
Let $\{a_n\}_{n \in \mathbb{N}}$ be an unbounded nondecreasing sequence. Then the large scale geometry of the string
$\alpha = 0^{a_0}1^{a_1}0^{a_2}1^{a_3}...0^{a_{2k}}1^{a_{2k+1}}\ldots$
is a minimal element in the partial order $\Sigma_{QI}^{\omega}$.
\end{thm}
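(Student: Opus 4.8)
The goal is to establish minimality of $[\alpha]$, i.e.\ that $\beta\leq_{QI}\alpha$ forces $\alpha\leq_{QI}\beta$. So fix a quasi-isometry $f\colon\beta\to\alpha$ with constants $A,B$, and let $C\leq 0$ be the crossing bound of Lemma~\ref{L:no-big-crossing}; since $f$ preserves colours and $\alpha$ uses only the colours $0,1$, so does $\beta$. The plan is to use $f$ to transport the block structure of $\alpha$ onto $\beta$. The point is that, because the lengths $a_n$ are nondecreasing and unbounded, there is a single threshold past which every block of $\alpha$ is longer than each of $A$, $B$ and $|C|$, and beyond this threshold $f$ is forced to respect the decomposition of $\alpha$ into maximal monochromatic blocks; from the resulting coarse description of $\beta$ a reverse quasi-isometry is then written down block by block.

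Write $\alpha=B_0B_1B_2\cdots$, where $B_i$ is the $i$-th maximal monochromatic block, occupying $[p_i,q_i]$, with $|B_i|=a_i$ and with the colours of the $B_i$ alternating. First, $\beta$ is not eventually monochromatic: if it were, of colour $\sigma$, then for a block $B_i$ of colour $\bar\sigma$ with $a_i>2A$ (such $i$ exist since $a_n\to\infty$) the set $f^{-1}(B_i)$ is nonempty by coarse surjectivity and, by Lemma~\ref{L:pre-image}, has diameter at least $a_i/A'-B'$, hence at least that many points; but all of them are $\bar\sigma$-coloured by colour preservation, contradicting for large $a_i$ that $\beta$ has only finitely many $\bar\sigma$-coloured points. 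Now set $C_i:=f^{-1}(B_i)$. By colour preservation every point of $C_i$ has the colour of $B_i$; the $C_i$ are pairwise disjoint and cover $\beta$; by Lemma~\ref{L:pre-image} there are uniform $A',B'$ with $a_i/A'-B'\leq\mathrm{diam}(C_i)\leq A'a_i+B'$ for all $i$, so $C_i\neq\emptyset$ whenever $a_i>2A$; by Lemma~\ref{L:no-big-crossing} each $C_i$ is contained in an interval which it fills up to an error $O(|C|)$ near the ends, and (again by that lemma together with coarse surjectivity) $f(x)\to\infty$. Fix a threshold $N$ with $a_n>A+B+|C|+2$ for all $n\geq N$; this is the only place the nondecreasing/unbounded hypothesis is used, and it is what gives one $N$ past which every block of $\alpha$ dominates all the ambient constants.

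The heart of the argument is the claim that for $i\geq N$ the set $C_i$ lies within bounded Hausdorff distance of a single maximal monochromatic block of $\beta$, that these blocks occur consecutively in $\beta$, that their colours alternate and match those of $B_N,B_{N+1},\dots$, and that together they form a suffix of $\beta$. To prove it one shows, comparing consecutive positions $x,x+1$ of $\beta$ via $d(f(x),f(x+1))\leq A+B$ and the crossing bound, that if $x\in C_i$ and $x+1\in C_j$ then $|i-j|\leq 1$; since within one maximal block of $\beta$ the index of the containing $C_j$ has fixed colour-parity, it is constant there; a further use of the crossing bound confines the ``backtracking'' (points of $C_{i-1}$ occurring after points of $C_i$ in $\beta$) to a bounded neighbourhood of the preimages of the block boundaries, hence to bounded total length; and $f(x)\to\infty$ together with bounded fibres (Corollary~\ref{C:pre-image}) pins down the global picture. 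Granting this, after deleting a finite prefix $\beta$ has the form $c^{b_N}\bar c^{b_{N+1}}c^{b_{N+2}}\cdots$, where $c$ is the colour of $B_N$ and, since $|C_i|$ and the length $b_i$ of the corresponding $\beta$-block differ by $O(1)$, the uniform estimate $a_i/A''-B''\leq b_i\leq A''a_i+B''$ holds for $i\geq N$. This paragraph is where the main obstacle lies: combining the three preliminary results and managing the bounded but fiddly interleaving of the preimages $C_i$ near block boundaries.

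Finally, two routine facts finish the proof: (i) deleting a finite prefix does not change the $\sim_{QI}$-class; and (ii) if two strings have block-length sequences comparable up to uniform multiplicative and additive constants, then mapping the $i$-th block of one affinely onto the $i$-th block of the other, and gluing, defines quasi-isometries in both directions, since the comparability bounds the distortion uniformly and the resulting maps are coarsely surjective. Applying (ii) to $\{b_i\}_{i\geq N}$ and $\{a_i\}_{i\geq N}$ gives $\beta\sim_{QI}c^{a_N}\bar c^{a_{N+1}}\cdots$, and by (i) the latter string is $\sim_{QI}\alpha$; hence $\alpha\leq_{QI}\beta$, and therefore $[\alpha]$ is a minimal element of the partial order $\Sigma_{QI}^{\omega}$.
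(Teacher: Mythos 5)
Your overall strategy coincides with the paper's (transport the block structure of $\alpha$ to $\beta$ through the preimages $C_i=f^{-1}(B_i)$, using Lemma \ref{L:no-big-crossing} and Lemma \ref{L:pre-image}, then write down a reverse quasi-isometry block by block), but your central structural claim is strictly stronger than what is true, and as stated it is false, so the step where you ``grant'' it and conclude that a suffix of $\beta$ is $c^{b_N}\bar c^{b_{N+1}}c^{b_{N+2}}\cdots$ with $b_i$ comparable to $a_i$ does not hold. Concretely, let $\beta$ be obtained from $\alpha$ by inserting the word $10$ at every boundary between a $0$-block and the following $1$-block, i.e.\ $\beta=0^{a_0}(10)1^{a_1}0^{a_2}(10)1^{a_3}\cdots$. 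Sending each long block of $\beta$ order-preservingly onto the corresponding block of $\alpha$, the inserted $1$ to the first position of the next $1$-block and the inserted $0$ to the last position of the preceding $0$-block, is a quasi-isometry $\beta\to\alpha$ (it has only crossings of size $O(1)$ and bounded multiplicative compression, since $a_n\to\infty$). For this $\beta$ each $C_i$ is indeed within bounded Hausdorff distance of one long maximal block, but those long blocks do \emph{not} occur consecutively and do \emph{not} form a suffix of $\beta$: infinitely many maximal blocks of length $1$ are interleaved, so the maximal blocks of $\beta$ neither alternate in correspondence with $B_N,B_{N+1},\dots$ nor have lengths comparable to $a_i$, and your finishing step (ii), which maps the $i$-th block of $\alpha$ affinely onto the $i$-th block of $\beta$, breaks down on exactly such strings.

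What is true, and what the paper proves with precisely the two lemmas you invoke, is the weaker decomposition $\beta=vw'_0w_1w'_1w_2w'_2\cdots$, where each $w_n$ is monochromatic of the colour of $B_{n_0+n}$ with $(1/A')a_{n_0+n}-B'\leq|w_n|\leq A'a_{n_0+n}+B'$, while the interleaving ``junk'' segments $w'_n$ may contain both colours but have uniformly bounded length $D$ (in the paper these come from the intervals $[m_{n+1},M_n]$, bounded via the Small Cross Over Lemma). Your ingredients (consecutive positions of $\beta$ change block index by at most one; the index is constant on a maximal block of $\beta$; backtracking is confined near block boundaries) are the right ones for this corrected claim, and with it the reverse quasi-isometry is still easy: map $B_{n_0+n}$ affinely onto $w_n$ and note that coarse surjectivity is unaffected because $|w'_n|\leq D$. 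So the genuine gap is twofold: the unwarranted strengthening (no allowance for bounded mixed-colour segments between the long monochromatic blocks), and the fact that the bounded-backtracking estimate, which you yourself flag as the main obstacle, is asserted rather than proved; once the claim is restated in the paper's form and that estimate is carried out (as in the paper's computation of $-D\leq m_{n+1}-M_n\leq 1$ and $m_{n+2}>M_n$), the rest of your outline goes through.
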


\begin{proof}
Let $I_n$ be the interval that corresponds to $p^{a_n}$ ($p \in \{0,1\}$) in $\alpha$, that is, 
\[
I_n = [\sum_{k=0}^{n-1}a_k, (\sum_{k=0}^{n}a_k) -1].
\]
Suppose $\beta \leq_{QI} \alpha$ via $f,A,B$. We claim that $\beta$ is of the form 
$vw'_0w_1w'_1w_2w'_2\ldots w_nw'_n\ldots$, 
where $w_n$ and $w'_n$ are sequences for which we have constants $A',B',D,$ and $n_0$ such that for all $n$ we have $|w'_n|\leq D$, $(1/A') \cdot a_{n_0+n} -B' \leq |w_n| \leq A'a_{n_0+n} +B'$ and $w_n = 0^{|w_n|}$ if $n_0+n$ is even or $w_n = 1^{|w_n|}$ if $n_0+n$ is odd. If the claim is true, then  $\beta =_{QI}\alpha$. 

Let $M_n = \max f^{-1}(I_n)$ and $m_n = \min f^{-1}(I_n)$. From Lemma \ref{L:pre-image}, there exist $A''$ and $B''$ such that
\[
(1/A'') \cdot a_n -B'' \leq M_n - m_n \leq A''a_n +B''
\]
for each $n$. 
Then there are numbers $n_0$ and $D>0$ such that for all $n \geq n_0$ we have $m_{n+2} > M_n$ and $-D \leq m_{n+1} - M_n \leq 1$. 
Indeed, let $D = B-A(C-1)$ and $n_0$ satisfies $a_{n_0}\geq A''(-2AC+3\max\{B,B''\})+2$, where $C$ is a constant such that $f(y) - f(x) \geq C$ for all $x < y$ (Lemma \ref{L:no-big-crossing}). Then $m_{n+1} - M_n \leq 1$ holds since otherwise there exists $x \in \mathbb{N}$ such that $M_n < x < m_{n+1}$ which should be mapped to a point other than $I_n \cup I_{n+1}$, and we can find $x \in [M_n, m_{n+1}-1]$ such that $d(f(x),f(x+1)) > D \geq A+B$, which contradicts the quasi-isometricity of $f$. For a lower bound, if $m_{n+1} < M_n$ then $f(M_n) - f(m_{n+1}) \geq C$ due to Lemma \ref{L:no-big-crossing}, and hence 
\begin{eqnarray*}
m_{n+1} - M_n &=& -d(m_{n+1}, M_n) \\
              &\geq& -A \cdot d(f(m_{n+1}), f(M_n)) -B \\
              &=& A \cdot (f(M_n) - f(m_{n+1})) -B \\
              &\geq& AC -B > -D.
\end{eqnarray*}
We also have
\begin{center}{
$m_{n+2} - M_n \ = $\\
$(m_{n+2} - M_{n+1}) + (M_{n+1} - m_{n+1}) + (m_{n+1} - M_n)$ \\
              $\geq 2AC-2B+(1/A'')\cdot a_{n+1} -B'' >1$.}
\end{center}
Now for each $n \geq n_0$ let $J_n = [M_{n-1}+1, m_{n+1}-1]$ and $J'_n =[m_{n+1},M_n]$ if $m_{n+1}<M_n$, or otherwise $J'_n=\phi$. Let $w_n$ and $w'_n$ be the strings that corresponds to $J_{n_0+n}$ and $J'_{n_0+n}$, respectively. Then $\beta$ is of the form $vw'_0w_1w'_1w_2w'_2\ldots,$ and by the inequality above
\begin{center}{
$|w_n|= m_{n_0+n+1} - M_{n_0+n-1} -1 \geq$\\
$ (1/A'')\cdot a_{n_0+n} +2AC-2B-B''$}
\end{center}
Also, \ 
$|w_n| \leq M_{n_0+n}-m_{n_0+n}+1 \leq A''a_{n_0+n} +B'' +1$.  
Hence letting  $B' = \max\{-2AC+2B+B'',B'' +1\}$ and $A' = A''$ and we have a proof.
\end{proof}

\begin{cor}
The partially ordered set $\Sigma_{QI}^{\omega}$ possesses uncountably many minimal elements.\qed
\end{cor}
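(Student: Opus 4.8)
The plan is to deploy Theorem~\ref{Thm:minimal}, which asserts that the string $0^{a_0}1^{a_1}0^{a_2}1^{a_3}\cdots$ is a minimal element of $\Sigma_{QI}^{\omega}$ whenever the sequence $(a_n)$ is unbounded and nondecreasing. So it suffices to produce uncountably many such sequences whose associated strings fall into pairwise distinct large scale geometries. The invariant that should separate them is the asymptotic growth rate of $(a_n)$: concretely, for each real $r>1$ I would take $a^{(r)}_n:=\lceil (n+1)^r\rceil$ and set $\alpha^{(r)}:=0^{a^{(r)}_0}1^{a^{(r)}_1}0^{a^{(r)}_2}\cdots$. Each sequence $(a^{(r)}_n)_n$ is unbounded and nondecreasing, so by Theorem~\ref{Thm:minimal} every $[\alpha^{(r)}]$ is a minimal element; since $(1,\infty)$ is uncountable, it remains only to check that $r\neq r'$ forces $[\alpha^{(r)}]\neq[\alpha^{(r')}]$.

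For the separation I would squeeze slightly more out of the proof of Theorem~\ref{Thm:minimal} than its statement. Suppose $\beta=0^{b_0}1^{b_1}0^{b_2}\cdots$ with $(b_n)$ unbounded nondecreasing and $\beta\leq_{QI}\alpha$, where $\alpha=0^{a_0}1^{a_1}0^{a_2}\cdots$. That proof produces a factorisation $\beta=v\,w'_0w_1w'_1w_2w'_2\cdots$ with $v$ finite, each $w_n$ monochromatic, $|w'_n|\leq D$, and $|w_n|\leq A'a_{n_0+n}+B'$. Since the genuine maximal monochromatic blocks of $\beta$ have lengths $b_n\to\infty$ whereas the $w'_n$ are uniformly bounded, a routine bookkeeping argument shows that for all large $n$ the maximal block of $\beta$ meeting $w_n$ has $w_n$ as a subword and is itself a subword of $w'_{n-1}w_nw'_n$, and that $n$ is eventually mapped to the index of that block by a translation $n\mapsto n+t$. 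Hence there are constants $A^{*}\geq 1$, $B^{*}\geq 0$ and $s\in\mathbb{Z}$ with $b_m\leq A^{*}a_{m+s}+B^{*}$ for all large $m$. Applied with $\beta=\alpha^{(r')}$, $\alpha=\alpha^{(r)}$ and $r'>r$, this would give $(m+1)^{r'}\leq A^{*}(m+s+1)^r+A^{*}+B^{*}=O(m^r)$, which is impossible; so $\alpha^{(r')}\not\leq_{QI}\alpha^{(r)}$, and in particular $[\alpha^{(r)}]\neq[\alpha^{(r')}]$. Letting $r$ range over $(1,\infty)$ then yields uncountably many minimal elements.

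The only genuine obstacle is the bookkeeping step: turning the ``$w_n$-decomposition'' of $\beta$ handed down by the proof of Theorem~\ref{Thm:minimal} into the clean one-sided estimate $b_m\leq A^{*}a_{m+s}+B^{*}$ between the two block-length sequences. Note that only the upper bound $|w_n|\leq A'a_{n_0+n}+B'$ and the uniform bound $|w'_n|\leq D$ are needed, not the lower bound on $|w_n|$, because a one-sided comparison already rules out $\alpha^{(r')}\leq_{QI}\alpha^{(r)}$ whenever $r'>r$. Everything else is immediate: minimality is quoted from Theorem~\ref{Thm:minimal}, the estimate $m^{r'}\neq O(m^r)$ for $r'>r$ is elementary, and the index set $(1,\infty)$ is uncountable.
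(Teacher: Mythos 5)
Your proposal is correct and is essentially the argument the paper intends: the corollary is stated with only a \qed because it is meant to follow from Theorem~\ref{Thm:minimal}, and your family $\alpha^{(r)}$, $r\in(1,\infty)$, together with the block-length comparison $b_m\leq A^{*}a_{m+s}+B^{*}$ extracted from the decomposition $\beta=v\,w'_0w_1w'_1w_2w'_2\cdots$ in that theorem's proof, supplies exactly the separation details the paper leaves implicit. One minor caveat: your aside that the lower bound $(1/A')a_{n_0+n}-B'\leq|w_n|$ is dispensable is a bit hasty, since the eventual nonemptiness of the $w_n$ (which that bound provides) is what forces each late maximal block of $\beta$ to contain exactly one $w_n$ and hence pins down the eventual shift $m\mapsto m+s$; as the lower bound is part of the claim you quote, this does not affect correctness.
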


Consider the string  \ 
$\alpha=0^{n_0}1^{m_0} 0^{n_1}1^{m_1}\ldots$ from $\{0,1\}^{\omega}$, where $n_i, m_i\geq 1$. 
Call the substrings  $0^{n_i}$ and $1^{m_i}$ the {\bf $0$-blocks and $1$-blocks}, respectively. Define:
\begin{itemize}
\item 
$\mathcal X(0)=\{[\alpha]\mid$ in $\alpha$ all the lengths of $0$-blocks are universally bounded$\}$,
\item 
$\mathcal X(1)=\{[\alpha]\mid$ in $\alpha$  the lengths of all $1$-blocks are universally bounded$\}$,
\item 
$\mathcal X(u)=\{[\alpha]\mid$ in $\alpha$ there is no universal bound on  the lengths of both $0$-blocks and $1$-blocks $\}$,
\item
$\mathcal X(b)=\{[\alpha]\mid$ in $\alpha$ the lengths of both $0$-blocks and $1$-blocks  are universally bounded$\}$.
\end{itemize}
Upward closed sets  are called {\em filterers}, and  downward closed sets  {\em ideals}.
We use this terminology in the next theorem.
\begin{thm} \label{Thm:filters}
The sets $\mathcal X(0)$, $\mathcal X(1)$, $\mathcal X(u)$, $\mathcal X(b)$ satisfy the following:
\begin{enumerate}
\item The sets $\mathcal X(0)$ and $\mathcal X(1)$ are filters. 
\item The set $\mathcal X(u)$ is an ideal. 
\item 
The set $\mathcal X(b)$ is the singleton $\{[(01)^{\omega}]\}$.
\end{enumerate}
\end{thm}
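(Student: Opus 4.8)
The plan is to prove part~(3) by a direct sandwich argument, deduce part~(1) from a single pull-back estimate, and obtain part~(2) from part~(1). Throughout, a quasi-isometry $f\colon\alpha\to\gamma$ comes with constants $A,B$, with fibres of size $<C_0$ by Corollary~\ref{C:pre-image}, and with a cross-over constant $C\leq 0$ by Lemma~\ref{L:no-big-crossing}. For part~(3), note $\mathcal X(b)=\mathcal X(0)\cap\mathcal X(1)$, so a class lies in $\mathcal X(b)$ exactly when it has a representative $\alpha=0^{n_0}1^{m_0}0^{n_1}1^{m_1}\ldots$ with all $n_i,m_i\leq D$ for some fixed $D$. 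I would show every such $\alpha$ is $\sim_{QI}(01)^\omega$. The bound $\alpha\leq_{QI}(01)^\omega$ is Proposition~\ref{Prop:elementary}(1), since $(01)^\omega$ is greatest. For $(01)^\omega\leq_{QI}\alpha$ I use the explicit map $g$ sending position $2k$ of $(01)^\omega$ to the first position of the $k$-th $0$-block of $\alpha$ and position $2k+1$ to the first position of the $k$-th $1$-block: $g$ is strictly increasing with consecutive images differing by $n_k$ or $m_k\in[1,D]$, hence $D$-bi-Lipschitz; it is colour preserving by construction; and $g(\omega)$ is $D$-dense since every position of $\alpha$ lies in a block of length $\leq D$ whose left endpoint is an image point. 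As $(01)^\omega$ itself has all blocks of length $1$, it lies in $\mathcal X(b)$, so $\mathcal X(b)=\{[(01)^\omega]\}$.

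\emph{Part (1).} By the $0\leftrightarrow 1$ symmetry it suffices to prove $\mathcal X(0)$ is upward closed. Fix $[\alpha]\in\mathcal X(0)$ with $0$-blocks of $\alpha$ of length $\leq D$, and a quasi-isometry $f\colon\alpha\to\gamma$. That $\gamma$ still has infinitely many $0$-blocks and $1$-blocks is clear: $\alpha$ has infinitely many colour-$0$ and colour-$1$ positions, $f$ preserves colours and has bounded fibres, so $\gamma$ has infinitely many positions of each colour (a leading monochromatic block, if any, being irrelevant up to $\sim_{QI}$). The core step is to bound the $0$-blocks of $\gamma$. Suppose $\gamma$ had a $0$-block on an interval $I=[x,x+L-1]$ with $L$ large; then $f^{-1}(I)\neq\emptyset$ since $f(\alpha)$ is $A$-dense, and writing $m_I=\min f^{-1}(I)$, $M_I=\max f^{-1}(I)$, Lemma~\ref{L:pre-image} gives $M_I-m_I\geq (1/A')(L-1)-B'$. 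Since $\alpha$ has $0$-blocks of length $\leq D$, the interval $[m_I,M_I]$ contains at least $\lfloor(M_I-m_I)/(D+1)\rfloor$ colour-$1$ positions $z$; each such $z$ lies strictly between $m_I$ and $M_I$ and satisfies $f(z)\notin I$ (as $z\notin f^{-1}(I)$, $I$ being a $0$-block and $f$ colour preserving). But the cross-over bound gives $f(z)\geq f(m_I)+C\geq x+C$ and $f(z)\leq f(M_I)-C\leq x+L-1-C$, so $f(z)\in[x+C,x-1]\cup[x+L,x+L-1-C]$, a set of at most $-2C$ points. Thus at least $\lfloor(M_I-m_I)/(D+1)\rfloor$ positions of $\alpha$ map into $\leq -2C$ points of $\gamma$, which, since fibres have size $<C_0$, is impossible once $M_I-m_I$, and hence $L$, is large enough. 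Therefore $\gamma$ has bounded $0$-blocks and $[\gamma]\in\mathcal X(0)$.

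\emph{Part (2).} A class lies in $\mathcal X(u)$ iff it is represented by a string with infinitely many $0$'s and $1$'s and lies in neither $\mathcal X(0)$ nor $\mathcal X(1)$. To see $\mathcal X(u)$ is downward closed, take $[\delta]\leq_{QI}[\alpha]$ with $[\alpha]\in\mathcal X(u)$, via $f\colon\delta\to\alpha$. First, $\delta$ has infinitely many $0$'s and $1$'s: if, say, $\delta$ had only finitely many colour-$0$ positions, then $f(\delta)$ would lie in a finite set together with the colour-$1$ positions of $\alpha$, yet $\alpha$ has arbitrarily long $0$-blocks, so the midpoint of a sufficiently long and sufficiently late $0$-block of $\alpha$ would be more than $A$ from all of $f(\delta)$, contradicting $A$-density; the $1$'s are symmetric. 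Next, if the $0$-blocks of $\delta$ were bounded then $[\delta]\in\mathcal X(0)$, and upward closure of $\mathcal X(0)$ (part~(1)) would force $[\alpha]\in\mathcal X(0)$, contrary to $[\alpha]\in\mathcal X(u)$; likewise the $1$-blocks of $\delta$ are unbounded. Hence $[\delta]\in\mathcal X(u)$.

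I expect the only real difficulty to be the estimate in part~(1), where the three constants must be made to interact correctly: $A'$ forces the pull-back of a long $0$-block of $\gamma$ to be a long interval of $\alpha$; the bound $D$ then fills that interval with colour-$1$ positions; and the cross-over constant $C$ traps all of their images in a bounded set near the two endpoints of the block, so that the bounded-fibre property is violated for long enough blocks. Everything else is routine bookkeeping, the only mild care being with the degenerate strings (finitely many $0$'s or $1$'s) handled in part~(2).
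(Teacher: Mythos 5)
Your proof is correct, but in the decisive part (1) it runs along a different track from the paper's. The paper's argument is local and uses only the forward Lipschitz bound together with $A$-density: it picks a single point near the centre of a putative long $0$-block of the target that has a pre-image $\alpha(j_0)$, notes that within distance $K+1$ of $j_0$ there is a colour-$1$ position (since the $0$-blocks of $\alpha$ are bounded by $K$), and observes that the image of that $1$ lands within a fixed distance (about $A(2K+2)+B$) of the block's centre, hence inside the $0$-block once it is long enough --- an immediate colour contradiction, needing neither Lemma \ref{L:pre-image}, nor Lemma \ref{L:no-big-crossing}, nor Corollary \ref{C:pre-image}. You instead pull the whole block back via Lemma \ref{L:pre-image}, fill the pre-image interval with roughly $(M_I-m_I)/(D+1)$ colour-$1$ positions, trap their images in the at most $-2C$ positions flanking the block via the cross-over bound, and contradict the bounded-fibre corollary; this is heavier (three auxiliary results plus a pigeonhole count) but sound, and it additionally quantifies how many ``stray'' ones a long block could absorb. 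For part (3) the paper simply observes that any two strings with both block types bounded are colour-equivalent, hence mutually quasi-isometric and quasi-isometric to $(01)^{\omega}$; your explicit $D$-bi-Lipschitz map onto block starts, combined with the greatest-element fact of Proposition \ref{Prop:elementary}, accomplishes the same slightly more concretely. For part (2) the paper only says the proof is ``similar''; your reduction --- infinitude of both colours via $A$-density against arbitrarily long monochromatic blocks, then unboundedness of both block types via the filter property from part (1) --- is a clean way of making that precise.
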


\begin{proof}
We prove that $\mathcal X(0)$ is a filter. Assume that $\alpha\in \mathcal X(0)$ and $\alpha\leq_{QI} \beta$ through a quasi-isometry $f$ with constants $A$ and $B$. We need to show that $\beta\in \mathcal X(0)$. Assume that the length of $0$-blocks in $\beta$ is unbounded. Let $K$ be the length of the largest $0$-block in $\alpha$. Now, take a $0$-block $\beta(i)\ldots \beta(i+n_0)$ in $\beta$, where $n_0$ is sufficiently large. We give bound on  the value of $n_0$ below.  Since $f$ is a quasi-isometry, there exists a
$\beta(i_0)$ that belongs to the block such that $\beta(i_0)$ has a pre-image $\alpha(j_0)$ and $\beta(i_0)$ is within distance $A$ from the center of the $0$-block $\beta(i)\ldots \beta(i+n_0)$. The length of the block in $\alpha$ that contains $\alpha(j_0)$ is bounded by $K$, hence the string 
$$
\alpha(j_0-K-1)\ldots \alpha(j_0)\ldots \alpha(j_0+K+1)
$$ 
must contain $1$. The length of this interval is $2K+2$. There exists a $C$ such that $f$-image of all 
intervals of length $2K+2$ is contained in the intervals in $\beta$ of length at most $C$. So, it must be the case that $n_0\leq C$. 
Hence $\beta \in \mathcal X(0)$. The same proof shows that $\mathcal X(1)$ is a filter.


The proof of Part 2 is similar to the proof above. 


For the last part for all the strings $\alpha$, $\beta$ if in both the lengths of $0$-blocks and $1$-blocks are universally bounded, then  
$\alpha$ and $\beta$ are quasi-isometric to each other since they are colour-equivalent. They are also quasi-isometric to $(01)^{\omega}$. 
\end{proof}

\begin{cor}
For all $\alpha \in \mathcal X(0)$, $\beta \in \mathcal X(1)$,  $\gamma\in \Sigma^{\omega}$,  if $\alpha \leq_{QI} \gamma$ and $\beta\leq_{QI} \gamma$ then $\gamma\sim (01)^{\omega}$.  \qed
\end{cor}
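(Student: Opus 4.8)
The plan is to read this off Theorem~\ref{Thm:filters}; essentially no new argument is needed. First, since $\mathcal X(0)$ is a filter by Theorem~\ref{Thm:filters}(1), it is upward closed under $\leq_{QI}$, so from $\alpha\in\mathcal X(0)$ and $\alpha\leq_{QI}\gamma$ we obtain $[\gamma]\in\mathcal X(0)$. Symmetrically, $\mathcal X(1)$ is a filter and $\beta\in\mathcal X(1)$ with $\beta\leq_{QI}\gamma$, so $[\gamma]\in\mathcal X(1)$. Hence $[\gamma]\in\mathcal X(0)\cap\mathcal X(1)$.

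Next I would observe that $\mathcal X(0)\cap\mathcal X(1)=\mathcal X(b)$: a string has all of its $0$-blocks bounded and all of its $1$-blocks bounded exactly when both block types are universally bounded, and each of these three sets is a genuine union of $\sim_{QI}$-classes, so the equality makes sense at the level of quasi-isometry types. (Concretely: a quasi-isometry $f:\alpha\to\gamma$ witnessing $\alpha\leq_{QI}\gamma$ produces a uniform bound on the $0$-blocks of $\gamma$ via the counting estimate in the proof of Theorem~\ref{Thm:filters}(1), and a quasi-isometry $g:\beta\to\gamma$ produces a uniform bound on the $1$-blocks of $\gamma$ in the same way.) Finally, by Theorem~\ref{Thm:filters}(3) we have $\mathcal X(b)=\{[(01)^{\omega}]\}$, so $[\gamma]=[(01)^{\omega}]$, i.e. $\gamma\sim_{QI}(01)^{\omega}$.

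There is no real obstacle here --- the statement is bookkeeping on top of Theorem~\ref{Thm:filters} --- but two points deserve a sentence each. One is that before we may speak of ``$[\gamma]\in\mathcal X(0)$'' we need the defining property of $\mathcal X(0)$ (and of $\mathcal X(1)$, $\mathcal X(b)$) to be $\sim_{QI}$-invariant; this invariance is precisely a byproduct of the filter property established in Theorem~\ref{Thm:filters}(1). The other is that the whole block terminology, and hence the sets $\mathcal X(\cdot)$, are defined over the binary alphabet, so the hypothesis $\gamma\in\Sigma^{\omega}$ must be read with $|\Sigma|=2$; over a strictly larger alphabet the statement fails as written, since for instance $(01)^{\omega}\leq_{QI}(012)^{\omega}$ whereas $(012)^{\omega}\not\sim_{QI}(01)^{\omega}$. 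A self-contained alternative that never treats the $\mathcal X(\cdot)$ as sets is to re-run the block-counting estimate of Theorem~\ref{Thm:filters}(1) twice --- once through $f$ to bound the $0$-blocks of $\gamma$ and once through $g$ to bound the $1$-blocks of $\gamma$ --- and then conclude with Theorem~\ref{Thm:filters}(3).
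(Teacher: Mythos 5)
Your proposal is correct and is essentially the argument the paper intends: the corollary is stated with no proof precisely because it follows at once from Theorem~\ref{Thm:filters} --- the filter property of $\mathcal X(0)$ and $\mathcal X(1)$ puts $[\gamma]$ in $\mathcal X(0)\cap\mathcal X(1)=\mathcal X(b)$, which by part~(3) is $\{[(01)^{\omega}]\}$. Your two side remarks (that the block-boundedness properties are genuinely $\sim_{QI}$-invariant via the counting estimate in the proof of Theorem~\ref{Thm:filters}(1), and that the statement must be read over the binary alphabet) are accurate and do not change the route.
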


Consider the join operation $\oplus$ that, given $\alpha$ and $\beta$, produces the string 
$\alpha \oplus \beta= \alpha(0)\beta(0)\alpha(1)\beta(1)\ldots$. 
The operation is often used (e.g., in computability and complexity theory) to produce the least upper bounds. 
It turns out this operation is not well-behaved with respect to the large scale geometries.
Indeed, consider the string $\alpha = 010011\ldots(0)^{2^n}(1)^{2^n}\ldots$ and the string $\beta = 101100\ldots (1)^{2^n}(0)^{2^n}\ldots$ \ 
Then $[\alpha \oplus \beta] =[(01)^\omega]$. But, $[(01)^{\omega}] \neq [\alpha]$, $[(01)^{\omega})] \neq\ [\beta]$, and $[\alpha]=[\beta]$.
 

Even though the operation $\oplus$ is not well-behaved with respect to $\sim_{QI}$-classes, the operation can still be useful in constructing counter-examples as shown below. 

\begin{cor}
The sets $\mathcal X(0)$, $\mathcal X(1)$ are not ideals.
\end{cor}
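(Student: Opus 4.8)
The plan is to produce, for each of $\mathcal X(0)$ and $\mathcal X(1)$, a witness showing failure of downward closure: a string $\gamma$ with $[\gamma] \in \mathcal X(0)$ and a string $\delta$ with $\delta \leq_{QI} \gamma$ but $[\delta] \notin \mathcal X(0)$ (and symmetrically for $\mathcal X(1)$). The natural candidate comes from the join operation $\oplus$ already exhibited in the discussion immediately preceding the corollary. Recall $\alpha = 010011\ldots(0)^{2^n}(1)^{2^n}\ldots$ has unbounded $0$-blocks, so $[\alpha]\notin\mathcal X(0)$; but as noted $[\alpha\oplus\beta]=[(01)^\omega]$, and $[(01)^\omega]\in\mathcal X(0)$ trivially since every $0$-block of $(01)^\omega$ has length $1$. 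So it suffices to check that $\alpha\leq_{QI}\alpha\oplus\beta$.

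First I would verify this reduction directly: the inclusion $\alpha\leq_{QI}(01)^\omega$ is an instance of the fact (essentially part (1) of Proposition~\ref{Prop:elementary}, or the explicit map $f(i)=2i$ mentioned in the introduction for a cruder version) that $(01)^\omega$ dominates — more carefully, one maps position $n$ of $\alpha$ to a position of $(01)^\omega$ of the same colour lying within distance $2$ of $n$, which is a $(1,2)$-quasi-isometry; since $[\alpha\oplus\beta]=[(01)^\omega]$ this gives $\alpha\leq_{QI}\alpha\oplus\beta$. Then $[\alpha\oplus\beta]=[(01)^\omega]\in\mathcal X(0)$ while $[\alpha]\notin\mathcal X(0)$, so $\mathcal X(0)$ is not downward closed, hence not an ideal. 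For $\mathcal X(1)$: by the same token $[(01)^\omega]\in\mathcal X(1)$, and the string $\beta=101100\ldots(1)^{2^n}(0)^{2^n}\ldots$ has unbounded $1$-blocks so $[\beta]\notin\mathcal X(1)$, while $\beta\leq_{QI}(01)^\omega\sim_{QI}\alpha\oplus\beta$ by the identical argument; thus $\mathcal X(1)$ is not an ideal either.

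The only step requiring any care is confirming $[\alpha\oplus\beta]=[(01)^\omega]$, i.e. reproving the claim stated in the text: in $\alpha\oplus\beta$, at each position pair the $\alpha$-entry and $\beta$-entry, and since the $0$-blocks of $\alpha$ are aligned with the $1$-blocks of $\beta$ and vice versa, every window of $\alpha\oplus\beta$ of bounded length contains both colours — in fact $\alpha\oplus\beta$ has all $0$-blocks and all $1$-blocks of length bounded by $2$. Hence $[\alpha\oplus\beta]\in\mathcal X(b)$, which by Theorem~\ref{Thm:filters}(3) is the singleton $\{[(01)^\omega]\}$. This is the main (and essentially only) obstacle, and it is routine once one writes out the block structure of $\alpha\oplus\beta$ explicitly; everything else is a direct appeal to results already established above.
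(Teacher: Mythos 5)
Your proof is correct, but it takes a somewhat different route from the paper's. The paper keeps the corollary self-contained as an illustration of the $\oplus$ operation: it takes $\beta = 010011\ldots 0^n1^n\ldots$ and $\alpha=\beta\oplus 1^{\omega}$, observes that interleaving with $1^{\omega}$ cuts every $0$-block of $\beta$ down to length $1$ (so $\alpha\in\mathcal X(0)$, though its $1$-blocks stay unbounded), exhibits the explicit quasi-isometry $i\mapsto 2i$ witnessing $\beta\leq_{QI}\alpha$, and notes $\beta\notin\mathcal X(0)$; the case of $\mathcal X(1)$ is symmetric. You instead push the upper witness all the way to the top of the order: you verify $[\alpha\oplus\beta]=[(01)^{\omega}]$ by checking that all blocks of $\alpha\oplus\beta$ have length at most $2$ and invoking Theorem~\ref{Thm:filters}(3), and you get the reduction $\alpha\leq_{QI}\alpha\oplus\beta$ from the greatest-element statement, Proposition~\ref{Prop:elementary}(1), rather than from an explicit interleaving map. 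This buys economy: a single witness $[(01)^{\omega}]\in\mathcal X(0)\cap\mathcal X(1)$ handles both sets at once, and in fact the detour through $\alpha\oplus\beta$ is dispensable --- the greatest element lies in $\mathcal X(0)\cap\mathcal X(1)$ while any string with unbounded $0$-blocks (resp.\ $1$-blocks) lies below it, which already kills downward closure. What the paper's choice buys is independence from the earlier structural results and a completely explicit quasi-isometry. Both arguments rest on the same implicit point, namely that boundedness of $0$-blocks is a $\sim_{QI}$-invariant, which is exactly what Theorem~\ref{Thm:filters}(1) provides, so there is no gap on that score; only note that your constants for the map into $(01)^{\omega}$ are slightly loose (with displacement up to $2$ you should take $A=2$ so that the coarse-surjectivity clause of Definition~\ref{Defn:QIM} is met), but any admissible constants do the job.
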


\begin{proof}
Consider the strings $\beta =010011\ldots 0^n 1^n\ldots$ and $\alpha=\beta\oplus 1^{\omega}$.
It is clear that
$\alpha\in \mathcal X(0)$. It is also easy to see that $\beta\leq \alpha$. However, 
$\beta \not \in \mathcal X(0)$. 
\end{proof}

\begin{cor}
Both $\mathcal X(0)$ and $\mathcal X(1)$ have countable chains and anti-chains. In addition, $\mathcal X(u)$ has an infinite  anti-chain consisting of minimal elements.
\end{cor}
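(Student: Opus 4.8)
The plan is to establish each of the three assertions by combining the structure theorems already proved with the join construction $\oplus$. For the countable chain in $\mathcal{X}(0)$, I would take the integer-indexed chain $(\alpha_n)_{n\in\mathbb{Z}}$ from Theorem \ref{P:infinite-chain} and form $\gamma_n = \alpha_n \oplus 1^{\omega}$. Interleaving with $1^{\omega}$ forces every $0$-block to have length $1$, so each $\gamma_n \in \mathcal{X}(0)$; and since $\oplus$ with a fixed string is monotone (a quasi-isometry $\alpha_n \to \alpha_{n+1}$ lifts to one $\gamma_n \to \gamma_{n+1}$ in the obvious way), the chain $[\gamma_n] \leq_{QI} [\gamma_{n+1}]$ survives. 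The strictness is the delicate point: I would argue that a quasi-isometry $\gamma_{n+1} \to \gamma_n$ restricts, after controlling the uniformly-bounded $0$-blocks via Lemma \ref{L:pre-image} and Corollary \ref{C:pre-image}, to something that would yield a quasi-isometry on the $1$-block structure, i.e. essentially $\alpha_{n+1} \to \alpha_n$, contradicting Theorem \ref{P:infinite-chain}. A symmetric construction $\alpha_n \oplus 0^{\omega}$ handles $\mathcal{X}(1)$. For antichains one does the same with the antichain $(\beta_n)_{n\in\omega}$ of Theorem \ref{P:infinite-chain}.

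For the last assertion — that $\mathcal{X}(u)$ contains an infinite antichain of minimal elements — the natural candidates are the strings from Theorem \ref{Thm:minimal} applied to a family of unbounded nondecreasing sequences whose growth rates are pairwise incomparable in the quasi-isometric sense. Concretely I would reuse the exponents from the antichain $\beta_n = 010^{2^n}1^{2^n}0^{3^n}1^{3^n}\cdots$, or better, set $a^{(n)}_k = k^{n}$ and let $\alpha^{(n)} = 0^{a^{(n)}_0}1^{a^{(n)}_1}0^{a^{(n)}_2}\cdots$. By Theorem \ref{Thm:minimal} each $[\alpha^{(n)}]$ is minimal; each clearly lies in $\mathcal{X}(u)$ since both block lengths grow without bound; and the pairwise incomparability follows from the same limiting-ratio computation used in the antichain part of Theorem \ref{P:infinite-chain} — if $\alpha^{(n)} \leq_{QI} \alpha^{(m)}$ then, minimality forcing $\sim_{QI}$, one extracts a block-by-block correspondence whose length ratios $(k+c)^n/(k+d)^m$ must stay bounded away from $0$ and $\infty$, which fails unless $n=m$.

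The main obstacle I anticipate is the strictness argument for the chains: showing that interleaving with $1^{\omega}$ does not accidentally collapse distinct $\leq_{QI}$-classes. A quasi-isometry $\gamma_{n+1}\to\gamma_n$ is a priori allowed to shuffle the interleaved $1$'s and the genuine blocks together, and one must show that after bounded cross-overs (Lemma \ref{L:no-big-crossing}) and bounded fibres (Corollary \ref{C:pre-image}) the map is forced to respect the coarse block decomposition, so that it induces a quasi-isometry of the underlying $\alpha_{n+1},\alpha_n$. This is exactly the kind of "block-by-block without vacancy" normalisation already carried out inside the proofs of Theorems \ref{P:infinite-chain} and \ref{Thm:minimal}, so I would invoke that machinery rather than redo it; once the reduction to the uncoloured-block level is in place, the contradiction is immediate from the divergence of the length ratios. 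The antichain and minimality claims are comparatively routine given Theorems \ref{P:infinite-chain} and \ref{Thm:minimal}.
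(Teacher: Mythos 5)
Your antichain constructions ($\beta_n\oplus 1^{\omega}$ in $\mathcal X(0)$, $\beta_n\oplus 0^{\omega}$ in $\mathcal X(1)$) and your treatment of the minimal-element antichain in $\mathcal X(u)$ via Theorem \ref{Thm:minimal} are essentially the paper's argument, and your $\mathcal X(0)$ chain is fine though redundant: the strings $\alpha_n$ of Theorem \ref{P:infinite-chain} already have all $0$-blocks of length $1$, so they lie in $\mathcal X(0)$ as they stand (indeed $\alpha_n\oplus 1^{\omega}\sim_{QI}\alpha_n$), and no interleaving or lifting argument is needed.

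The genuine gap is your ``symmetric construction'' for the chain in $\mathcal X(1)$. The string $\alpha_n\oplus 0^{\omega}$ places a $0$ in every odd position, so its $1$-blocks all have length $1$; and since the $0$-blocks of $\alpha_n$ itself have length $1$, the $0$-blocks of $\alpha_n\oplus 0^{\omega}$ have length at most $3$. Thus every $\alpha_n\oplus 0^{\omega}$ lies in $\mathcal X(b)$, which by Theorem \ref{Thm:filters}(3) is the single class $[(01)^{\omega}]$: all these strings are mutually quasi-isometric, and you get one point, not a chain. The interleaving trick only preserves nontrivial structure when the \emph{other} colour's blocks are unbounded, which is why it works for the $\beta_n$ (both block types unbounded) but not for the $\alpha_n$ (only $1$-blocks unbounded). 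The correct route, and the one implicit in the paper, is simply to exchange the roles of $0$ and $1$: the colour-swapped copies of the $\alpha_n$ form a chain lying in $\mathcal X(1)$ by exactly the same proof as Theorem \ref{P:infinite-chain}. (Two minor points elsewhere: in your family $a^{(n)}_k=k^{n}$ you must take $n\geq 1$, since $n=0$ gives a bounded sequence to which Theorem \ref{Thm:minimal} does not apply; and for the $\mathcal X(u)$ antichain you do not need the ratio computation at all once you have infinitely many pairwise non-equivalent minimal elements, because distinct minimal classes are automatically incomparable.)
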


\begin{proof}
The chain constructed in Theorem \ref{P:infinite-chain} is in $\mathcal X(0)$. Hence,  both 
$\mathcal X(0)$ and $\mathcal X(1)$ have countable chains.   Let $\beta_n$ be the sequence in the proof of 
Theorem \ref{P:infinite-chain}. Then in the same manner as in the proof of Theorem \ref{P:infinite-chain} we can show 
that $\{\beta_n \oplus 1^\omega \}_{n \in \mathbb{N}} \subset \mathcal{X}(0)$ is an anti-chain. Similarly 
$\{\beta_n \oplus 0^\omega \}_{n \in \mathbb{N}} \subset \mathcal{X}(1)$ is an anti-chain. 
The minimal elements constructed in Theorem \ref{Thm:minimal} are in $\mathcal X(u)$.
\end{proof}



\section{Refining quasi-isometry}

The 
relation $\leq_{QI}$ could be analysed in several ways. One  is to restrict $\leq_{QI}$ to a particular class $K$ 
of infinite strings, and describe the partial order $\leq_{QI}$ restricted to large scale geometers of strings from $K$.
In Section \ref{S:EP},  we fully describe the partial order $\leq_{QI}$ restricted to the class of eventually periodic words. The second way is to refine the definition of $\leq_{QI}$ and study its implications. Section \ref{S:Colour-equivalence} provides
an equivalent, more intuitive, characterisation of quasi-isometry called component wise reducibility. 

\subsection{Eventually periodic spaces} \label{S:EP}

For a string $\alpha$ (that might be finite) consider the set of all colours 
in $\alpha$: $Cl(\alpha)=\{\sigma\in \Sigma\mid \exists i (\alpha(i)=\sigma)\}$. 
Write  $u\sqsubseteq v$ if $Cl(u) \subseteq Cl(v)$. We easily get the following:


\begin{lem}
If $f: \alpha\rightarrow  \beta$ is a quasi-isometry then $Cl(\alpha)\subseteq Cl(\beta)$. So,  if $\alpha\sim_{QI} \beta$ then $Cl(\alpha)=Cl(\beta)$.\qed
\end{lem}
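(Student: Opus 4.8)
The plan is simply to unwind the definition of a colour-preserving quasi-isometry. Recall that for an infinite string the colour function is the string itself, and every position carries exactly one colour; the clause ``colour preserving'' in the definition of a quasi-isometry $f:\alpha\to\beta$ therefore forces $\beta(f(i))=\alpha(i)$ for every position $i\in\omega$. This pointwise identity is all that the argument needs.

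Given this, I would argue as follows. Let $\sigma\in Cl(\alpha)$. By the definition of $Cl(\alpha)$ there is some $i\in\omega$ with $\alpha(i)=\sigma$. Then $\beta(f(i))=\alpha(i)=\sigma$, so $\sigma\in Cl(\beta)$. Since $\sigma$ was arbitrary, $Cl(\alpha)\subseteq Cl(\beta)$, which is the first assertion. For the second, if $\alpha\sim_{QI}\beta$ then by definition there are quasi-isometries $\alpha\to\beta$ and $\beta\to\alpha$; applying the first assertion to each of them gives $Cl(\alpha)\subseteq Cl(\beta)$ and $Cl(\beta)\subseteq Cl(\alpha)$, hence $Cl(\alpha)=Cl(\beta)$.

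There is no genuine obstacle in this lemma: it is immediate from the definitions, which is why the statement is tagged as ``easily'' obtained. The only point that deserves to be spelled out is that, for strings, the set-valued colour-preservation condition of the general definition of a quasi-isometry between coloured metric spaces collapses to the pointwise equality $\beta(f(i))=\alpha(i)$, precisely because each position of a string has a unique colour. Once that observation is made, the remaining steps are just chasing an element through the definitions.
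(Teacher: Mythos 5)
Your proof is correct and is exactly the immediate definition-chasing argument the paper has in mind (the paper states the lemma with no written proof, as ``easily'' obtained). The one point you spell out, that colour preservation for strings means $\beta(f(i))=\alpha(i)$ at every position, is indeed all that is needed.
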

\noindent
A particularly simple strings are eventually periodic: 
\begin{defn}
Metric space $\alpha\in \Sigma^{\omega}$  is   {\bf eventually periodic } if there are finite words $x, u\in \Sigma^\star$ such that $\alpha=xuuu \ldots$.  Call $u$ the {\bf period} of $\alpha$.  
\end{defn}

Let $\alpha=xu^{\omega}$ be eventually periodic word. We 
assume that $Cl(u)\subseteq Cl(x)$ as we can change the prefix $x$ to $xu$. 
With this assumption, we have the following theorem:

\begin{thm}\label{Thm:EPQI1}
For eventually periodic words $\alpha$, $\beta$ we have $\alpha\leq_{QI} \beta$ iff there are \ $x,y, u, v \in \Sigma^{\star}$  such that 
$\alpha=xu^{\omega}$, $\beta=yv^{\omega}$,   $Cl(x)\subseteq Cl(y)$, and $Cl(u) \subseteq Cl(v)$.

\end{thm}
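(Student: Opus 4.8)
The plan is to prove both directions separately, using the eventually periodic structure together with the earlier lemmas on cross-overs and pre-images.

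For the ``if'' direction, suppose $\alpha = xu^\omega$, $\beta = yv^\omega$ with $Cl(x)\subseteq Cl(y)$ and $Cl(u)\subseteq Cl(v)$. I would construct an explicit quasi-isometry $f:\alpha\to\beta$ by working block-by-block. Since $Cl(u)\subseteq Cl(v)$, for each position $i$ inside a copy of $u$ there is a position inside the corresponding copy of $v$ carrying the colour $\alpha$ assigns; map the $n$-th copy of $u$ in $\alpha$ into the $n$-th copy of $v$ in $\beta$ (this is well-defined once $n$ is past the prefix), choosing for each of the $|u|$ positions in that copy a witnessing position in the copy of $v$, monotonically if possible. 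On the finite prefix $x$, use $Cl(x)\subseteq Cl(y)$ to place the $|x|$ positions of $x$ somewhere in $y$ (or in $y$ followed by a few copies of $v$). The resulting map has bounded forward and backward jumps because each copy of $u$ (length $|u|$) lands inside one copy of $v$ (length $|v|$), so consecutive positions move by at most $|v|$ and the map is ``almost monotone'' with distortion controlled by $|u|,|v|,|x|,|y|$; and it is coarsely surjective because every copy of $v$ receives the image of a full copy of $u$, so every point of $\beta$ is within $|v|$ of the image. One sets $A = \max\{|u|,|v|\}$ (roughly) and $B$ a suitable constant absorbing the prefix discrepancy, and checks Definition \ref{Defn:QIM}.

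For the ``only if'' direction, suppose $f:\alpha\to\beta$ is a quasi-isometry with constants $A,B$. The colour-inclusion $Cl(\alpha)\subseteq Cl(\beta)$ is immediate from the preceding lemma, but I need the finer statement about prefixes and periods. The key idea: by periodicity of $\alpha$, the colour $u$ occurs infinitely often and ``densely'' (every window of length $|x|+|u|$ in $\alpha$ beyond the prefix already contains every colour of $u$). Using Lemma \ref{L:pre-image} and Corollary \ref{C:pre-image}, the pre-image of any length-$\ell$ interval of $\beta$ is an interval of $\alpha$ of length between $(1/A')\ell - B'$ and $A'\ell + B'$, with uniformly bounded fibres. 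Hence, taking one full period $v'$ of $\beta$ far out (length $|v|$), its pre-image in $\alpha$ is an interval long enough (once $|v|$ is large — replace $v$ by a power $v^{|x|+|u|}$, which does not change $Cl(v)$ or the hypothesis) to contain a full shifted copy of $u$, so every colour in $Cl(u)$ appears in that period of $\beta$; this gives $Cl(u)\subseteq Cl(v)$ after possibly passing to a power of $v$ and adjusting $y$. For $Cl(x)\subseteq Cl(y)$: every colour of $\alpha$ — in particular every colour occurring in the prefix $x$ — occurs in $\beta$; by our normalisation $Cl(x)$ already equals $Cl(\alpha)$ (since we assumed $Cl(u)\subseteq Cl(x)$), and $Cl(\alpha)\subseteq Cl(\beta) = Cl(y)$ after absorbing finitely many periods of $v$ into $y$ so that $Cl(v)\subseteq Cl(y)$. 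So the right choice of $x,y,u,v$ (passing to powers of the periods and lengthening prefixes, which is harmless) makes the colour inclusions hold.

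The main obstacle is the ``only if'' direction, specifically turning the coarse metric information (Lemmas \ref{L:no-big-crossing} and \ref{L:pre-image}) into the clean combinatorial statement that a single period of $\beta$ must see all colours of $u$. The subtlety is that $f$ need not be monotone and the pre-image of one period of $v$ could in principle be spread out; the fix is to exploit that we are free to replace $u$ by $u^k$ and $v$ by $v^k$ and lengthen $x,y$ without affecting the statement, so we can make the relevant pre-image interval long enough to be guaranteed to contain a full copy of $u$ regardless of alignment — after which the colour inclusion is forced. Care is needed to make this ``replace by powers'' step compatible with the normalisation $Cl(u)\subseteq Cl(x)$ and to ensure the chosen $x,y,u,v$ simultaneously satisfy both inclusions; I would handle this by first fixing a large power, then absorbing the resulting finite discrepancies into the prefixes.
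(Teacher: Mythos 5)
Your ``if'' direction is essentially the paper's own proof: the same block-by-block map sending the $n$-th copy of $u$ into the $n$-th copy of $v$ (one witnessing position per colour), with the prefix handled separately and constants bounded by $\max\{|x|,|y|,|u|,|v|\}$; this part is fine. Where you diverge is the ``only if'' direction, which the paper dispatches with one sentence and which you route through Lemmas \ref{L:no-big-crossing} and \ref{L:pre-image} plus a passage to powers of the periods. That route can be made to work, but as written it has a small inaccuracy: Lemma \ref{L:pre-image} does \emph{not} say that $f^{-1}(I)$ is an interval of controlled length, only that $\min f^{-1}(I)$ and $\max f^{-1}(I)$ are at controlled distance; a point lying between them may be mapped outside $I$ (by Lemma \ref{L:no-big-crossing} it can only overshoot by a bounded amount), so your claim that every colour of $u$ appears ``in that period'' of $\beta$ is slightly too strong. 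The conclusion you actually need is unaffected, because any position of $\beta$ beyond its prefix carries a colour of $v$, so landing in a boundedly enlarged window still forces $Cl(u)\subseteq Cl(v)$. Note also that all of this machinery can be replaced by a one-line argument: each colour of $Cl(u)$ occurs at unboundedly many positions of $\alpha$, and since $d(f(0),f(n))\geq (1/A)n-B$, its images occur at unboundedly many positions of $\beta$, hence beyond any prefix $y$, hence in $Cl(v)$; together with your (correct) normalisation $Cl(x)=Cl(\alpha)$, $Cl(y)=Cl(\beta)$ and the colour lemma, this yields both inclusions directly. So: correct in substance, identical to the paper on the constructive half, but the converse half is heavier than necessary and needs the enlarged-window correction to be rigorous.
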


\begin{proof}
If $\alpha \leq_{QI} \beta$ then it is easy to select finite strings $x,y, u, v$ that satisfy the statement of the theorem.

Assume that there are $x,y, u, v \in \Sigma^{\star}$  such that 
\  $\alpha=xu^{\omega}$, $\beta=yv^{\omega}$,   $Cl(x)\subseteq Cl(y)$ and $Cl(u) \subseteq Cl(v)$.   
Construct a quasi-isometry $f$ from $\alpha$ to $\beta$ as follows. First, map the prefix $x$ into prefix $y$ so that colours are preserved. Secondly, we consider the set  $X=\{x_1,\ldots, x_k\}$ of all distinct colours that appear in $u$. 
Now map $x_i$ coloured position in the $k^{th}$ copy of $u$ in the string $\alpha$ into the $x_i$ coloured position in the $k^{th}$ copy of $v$ in the string $\beta$.  Set $A=\max\{|x|, |y|, |u|, |v|\}$ and $B=A$. It is clear that $f$ preserves colours and the quasi-isometry inequality between the distances $d(x,y)$ and $d(f(x), f(y))$ with constants $A, B$ as required. \
\end{proof}

Let $P_1(\Sigma)$ be the set of all non-empty subsets of $\Sigma$. 
Consider the following partial order on the domain:
$$
\mathcal X= \{(A,B) \mid A,B \in P_1(\Sigma) \ \mbox{and} \ A\supseteq B \},
$$
where the partial order on $\mathcal X$ is the component-wise inclusion.
From the theorem above, we obtain the 
a full description of the partial order $\leq_{QI}$ restricted to large scale geometers of eventually periodic strings.

\begin{cor}\label{Cor:EP-order}
The partial order $\leq_{QI}$ restricted to the set 
$EP=\{[\alpha] \mid \alpha $ is eventually periodic string over $\Sigma\}$ 
is isomorphic to the partial order $\mathcal X$.  \qed
\end{cor}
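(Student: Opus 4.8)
The goal is to prove Corollary \ref{Cor:EP-order}: the partial order $\leq_{QI}$ restricted to $EP$ is isomorphic to $\mathcal{X}$.

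The plan is to exhibit an explicit order-isomorphism $\Phi : EP \to \mathcal X$ and verify it is well-defined, order-preserving, order-reflecting, and surjective. Given an eventually periodic string $\alpha$, by the standing assumption it can be written $\alpha = x u^\omega$ with $Cl(u) \subseteq Cl(x)$, and I would set $\Phi([\alpha]) = (Cl(x), Cl(u))$. This lands in $\mathcal X$ precisely because $Cl(u) \subseteq Cl(x)$ and both are non-empty (as $|\Sigma| \geq 2$ is not actually needed here — what matters is that $u$ and $x$ are non-empty, which holds since $\alpha$ is infinite and we can always take $x$ nonempty by absorbing one copy of $u$).

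First I would check $\Phi$ is well-defined on $\sim_{QI}$-classes. This has two sub-points. (i) The value $(Cl(x), Cl(u))$ does not depend on the chosen representation $x u^\omega$ of a fixed string $\alpha$: if $\alpha = x u^\omega = x' (u')^\omega$ are two such decompositions, then the set of colours occurring infinitely often in $\alpha$ is exactly $Cl(u)$ and also exactly $Cl(u')$, so $Cl(u) = Cl(u')$; and the set of all colours occurring in $\alpha$ is $Cl(x) = Cl(x')$ (using $Cl(u) \subseteq Cl(x)$ and $Cl(u') \subseteq Cl(x')$). (ii) If $\alpha \sim_{QI} \beta$, then $\alpha \leq_{QI} \beta$ and $\beta \leq_{QI} \alpha$, so applying Theorem \ref{Thm:EPQI1} in both directions gives $Cl(x) \subseteq Cl(y)$, $Cl(u) \subseteq Cl(v)$ and $Cl(y) \subseteq Cl(x)$, $Cl(v) \subseteq Cl(u)$, hence equality of the pairs. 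So $\Phi$ descends to a well-defined map on $EP$, and in fact this same computation shows $\Phi$ is injective.

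Next, order-preservation and order-reflection are a direct translation of Theorem \ref{Thm:EPQI1}: $[\alpha] \leq_{QI} [\beta]$ holds iff there exist decompositions $\alpha = x u^\omega$, $\beta = y v^\omega$ with $Cl(x) \subseteq Cl(y)$ and $Cl(u) \subseteq Cl(v)$, which by well-definedness (the values $Cl(x), Cl(u), Cl(y), Cl(v)$ are forced regardless of decomposition) is exactly $\Phi([\alpha]) \leq \Phi([\beta])$ in the component-wise order on $\mathcal X$. Finally, for surjectivity, given $(A, B) \in \mathcal X$ with $\emptyset \neq B \subseteq A$, list $A = \{a_1, \dots, a_p\}$ and $B = \{b_1, \dots, b_q\}$, and take $\alpha = (a_1 \cdots a_p)(b_1 \cdots b_q)^\omega$, which is eventually periodic, satisfies the standing assumption $Cl(b_1 \cdots b_q) = B \subseteq A = Cl(a_1 \cdots a_p)$, and has $\Phi([\alpha]) = (A, B)$.

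I do not expect a serious obstacle here; the only place requiring care is the well-definedness argument (point (i) above), namely the observation that for an eventually periodic string the "tail colour set" $Cl(u)$ and the "total colour set" $Cl(x)$ are intrinsic invariants of $\alpha$ independent of the chosen presentation — once this is nailed down, everything else is an immediate restatement of Theorem \ref{Thm:EPQI1}. One should also remember to note explicitly why the pair lies in $\mathcal X$ (nonemptiness of the components and the inclusion), which is guaranteed by the normalisation $Cl(u) \subseteq Cl(x)$ made just before the theorem.
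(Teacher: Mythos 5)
Your proposal is correct and is essentially the paper's intended argument: the paper states the corollary as an immediate consequence of Theorem \ref{Thm:EPQI1} (with the normalisation $Cl(u)\subseteq Cl(x)$), and your map $\Phi([\alpha])=(Cl(x),Cl(u))$, together with the observation that these colour sets are intrinsic invariants of $\alpha$, just spells out that translation, including the surjectivity witness $(a_1\cdots a_p)(b_1\cdots b_q)^{\omega}$.
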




Let $f:\alpha\rightarrow \beta$ be a quasi-isometric map, and let $P$ be a property. Say that $f$ is {\em eventually $P$}
if there is an $i$ such that $f$ restricted to the suffix \ $\alpha_i=\alpha(i) \alpha(i+1) \alpha(i+2)\ldots$ \ 
satisfies $P$. A quasi-isometry $f$ does not need to be eventually order preserving map; neither $f$ needs to be eventual embedding (that is, eventually injective map). Now we show that if $\alpha$ and $\beta$ are 
eventually  periodic words, then $f$ can be computable eventually order preserving injective map.


\begin{thm}\label{Thm:EPQI2}
For eventually 
periodic  strings $\alpha$ and $\beta$ such that $\alpha\leq_{QI} \beta$ 
there exists a computable quasi-isometric map $f_{\alpha}:\alpha \rightarrow \beta$ which is   eventually order preserving
and  injective. 
\end{thm}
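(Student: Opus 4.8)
The plan is to take the colour-preserving map behind Theorem~\ref{Thm:EPQI1} and reorganise it so that its image has enough room to run monotonically. First I would apply Theorem~\ref{Thm:EPQI1} to fix finite words $x,u,y,v$ with $\alpha=xu^{\omega}$, $\beta=yv^{\omega}$, $Cl(x)\subseteq Cl(y)$, and $Cl(u)\subseteq Cl(v)$; put $p=|u|$ and $q=|v|$. Since $Cl(u)\subseteq Cl(v)$, for each $j\in\{0,\dots,p-1\}$ one computes the least index $\iota(j)\in\{0,\dots,q-1\}$ with $v(\iota(j))=u(j)$.

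I would then define $f_{\alpha}:\omega\to\omega$ in two parts. On the prefix $\{0,\dots,|x|-1\}$ let $f_{\alpha}$ be any fixed colour-preserving map into $\{0,\dots,|y|-1\}$, which exists because $Cl(x)\subseteq Cl(y)$. On the periodic part, writing each position uniquely as $|x|+kp+j$ with $k\geq 0$ and $0\leq j<p$, set
\[
f_{\alpha}(|x|+kp+j)\ =\ |y|+(kp+j)\,q+\iota(j).
\]
In words: the $k$-th copy of $u$ is sent into the block formed by the $k$-th group of $p$ consecutive copies of $v$ in $\beta$, the $j$-th letter of that copy of $u$ going into the $j$-th copy of $v$ inside that block, at the offset $\iota(j)$. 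This map is computable, being an explicit arithmetic formula in the finite data $x,u,y,v,\iota$, and it is colour-preserving, since the letter of $\beta$ at position $|y|+(kp+j)q+\iota(j)$ is $v(\iota(j))=u(j)=\alpha(|x|+kp+j)$.

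Next I would verify the two structural requirements. Restricted to the suffix $\alpha_{|x|}$ the map is strictly increasing: within a fixed $k$ because $\iota(j)<q$ yields $jq+\iota(j)<(j+1)q\leq (j+1)q+\iota(j+1)$, and from the last letter of copy $k$ to the first letter of copy $k+1$ because the former has image $<|y|+(k+1)pq$, which is at most the latter's image. Hence $f_{\alpha}$ restricted to $\alpha_{|x|}$ is order-preserving and injective, so $f_{\alpha}$ is eventually order-preserving and injective in the sense defined above. For the quasi-isometry inequality I would take $A$ to be one sufficiently large constant and $B:=A$, and argue by cases according to whether two given positions lie both in the prefix, in a common copy of $u$, or in distinct copies $k<k'$: in the last case the images lie in the $k$-th and $k'$-th $v$-blocks, so their distance is between roughly $(k'-k-1)pq$ and $(k'-k+1)pq$ while the arguments differ by between roughly $(k'-k-1)p$ and $(k'-k+1)p$, and the prefix only ever contributes a bounded additive error. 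Coarse surjectivity is immediate: every position of the prefix $y$ is within $|y|+q$ of $f_{\alpha}(|x|)$, and a periodic position $|y|+tq+s$ with $0\leq s<q$ lies in the same copy of $v$ as the image point $|y|+tq+\iota(j)$, where $j=t\bmod p$, hence within $q$ of the range of $f_{\alpha}$.

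The single genuine idea is the choice of the stretch factor. Mapping one period of $\alpha$ into one period of $\beta$, as in the proof of Theorem~\ref{Thm:EPQI1}, generally leaves no room to keep the colours in increasing order, whereas stretching by much more than $p=|u|$ starts to violate the upper quasi-isometry bound; stretching by exactly $p$ is the sweet spot at which monotonicity and both quasi-isometry inequalities hold simultaneously. I expect the case analysis of the constants $A,B$ to be the bulk of the written proof, but routine.
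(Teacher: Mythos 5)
Your proposal is correct and follows essentially the same route as the paper: map the prefix $x$ into $y$ colour-preservingly (sacrificing order there), and embed each copy of $u$ order- and colour-preservingly into a block of several consecutive copies of $v$, propagating this periodically to get a computable, eventually monotone injection. The only cosmetic difference is the stretch factor — you use $|u|$ copies of $v$ per copy of $u$, while the paper takes the number of copies to be a sum of colour-occurrence counts — and your explicit formula and verification of the constants, monotonicity, and coarse surjectivity fill in details the paper leaves implicit.
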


\begin{proof}
Consider $\alpha=xuuu\ldots$ and $ \beta=yvvv\ldots$. 
We assume that $Cl(x)\subseteq Cl(y)$ and $Cl(u) \subseteq Cl(v)$.  
Our desired mapping $f_{\alpha}$ maps $x$ into $y$ by preserving colours. (That is where $f_{\alpha}$ need not be order preserving embedding). So, it suffices to 
construct a computable
quasi-isometric embedding from  $\alpha_1=u^{\omega}$ 
to $\beta_1=v^{\omega}$.
Let $X=\{x_1,\ldots, x_k\}$ be the set of all colours that appear in $\alpha_1$, and hence in $\beta_1$.  
Let $a_i$ be the number of times that colour $x_i$ appears in $v$. Clearly $a_i\geq 1$. Consider the new string $v_1$ obtained by writing $v$
exactly $a_1+\ldots +a_n$ times, that is, 
$v_1=(v)^{a_1+a_2+\ldots + a_n}$.
There exists an embedding  $f'$ from $u$ into $v_1$ that preserves colours and the order. Just like in the theorem above, we can propagate this mapping $f'$ to a colour preserving  embedding $f_{\alpha}$ from $\alpha_1$ into $\beta_1$.
The map $f_{\alpha}:\alpha \rightarrow \beta$ thus built is a  desired function. 
\end{proof}

\subsection{Componentwise Reducibility} \label{S:Colour-equivalence}

We formulate an equivalent more intuitive condition for quasi-isometry between coloured metric spaces.


\begin{defn}
Say $\alpha$ is {\bf componentwise reducible} to $\beta$, written  $\alpha \leq_{CR} \beta$,
if we can write 
$\alpha=u_1u_2\ldots$ and $\beta=v_1v_2\ldots$ such that
$u_i\sqsubseteq v_i$ for each $i$ and $|u_i|,|v_i|$ are uniformly bounded by a constant $C$. Call these presentations of $\alpha$ and  $\beta$ {\bf witnessing partitions}.
\end{defn}

It is clear  that $\alpha \leq_{CR} \beta$ implies $\alpha \leq_{QI} \beta$: any colour-preserving map that maps each interval $u_i$ in $\alpha$ to $v_i$ in $\beta$ is quasi-isometry. 
Showing the converse is not trivial. The main difficulty is showing that $\leq_{CR}$ is transitive. 


\begin{defn}
An {\bf atomic crossing map} is a function $f:\alpha \to \beta$ of the following form: we have $\{a_i,b_i\}_{i \in I}$, where $I$ is an at most countable index set such that $\exists C \forall i [(a_{i} < b_{i} < a_{i+1}) \land (b_i - a_i \leq C)]$ and

\[
f(a) = \left\{ \begin{array}{lll}
    b_i & (a = a_i) \\
    a_i & (a = b_i) \\
    a  & (otherwise)
  \end{array} \right.
\]
\end{defn}
\noindent
Clearly, every atomic crossing map is bijective. The next result is a step towards transitivity of $\leq_{CR}$. 

\begin{prop}\label{Prop:decomposition}
Any quasi-isometry $f:\alpha \to \beta$ can be decomposed into the following form:
\[
\alpha \xrightarrow{f_1} \gamma_1 \xrightarrow{f_2} \gamma_2 \xrightarrow{f_3} \beta,
\]
where $f_1$ is a monotonic injection, $f_2$ is a monotonic surjection, $f_3$ is a bijection, and $f_1,f_2,f_3$ are all quasi-isometric. 
Furthermore, $f_3$ can be decomposed into the following form for some $n\geq1$:
\[
\gamma_2 \xrightarrow{g_1} \delta_1 \xrightarrow{g_2} \ldots \xrightarrow{g_n} \beta,
\]
where each $g_k$ is an atomic quasi-isometry. 
\end{prop}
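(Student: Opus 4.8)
The plan is to build the factorization in two stages. First, I would extract from $f$ its "monotone core" by a standard reordering trick: define $\gamma_1$ to be the string obtained by listing the positions of $\alpha$ in the order of their $f$-images, breaking ties among positions with the same image arbitrarily but consistently, and colouring each listed position with the colour of the corresponding position of $\alpha$. Concretely, let $f_1$ send position $a\in\alpha$ to the rank of the pair $(f(a),a)$ in the lexicographic order on $\{(f(a'),a'):a'\in\alpha\}$; this $f_1$ is an injection, it is monotone by construction, and it is quasi-isometric because by Corollary~\ref{C:pre-image} each fibre $f^{-1}(y)$ has size at most $C$ and by the Small Cross Over Lemma (Lemma~\ref{L:no-big-crossing}) the amount of "back-jumping" $f$ can do is bounded, so the rank of $(f(a),a)$ differs from a monotone rescaling of $a$ by a bounded amount (this is essentially Lemma~\ref{L:pre-image} read coordinate-wise). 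Then let $\bar f:\gamma_1\to\beta$ be the induced map, which is now monotone (nondecreasing) and still quasi-isometric. Split $\bar f$ as $f_3\circ f_2$ where $f_2$ collapses each maximal constant block of $\bar f$ to a single point — i.e. $\gamma_2$ is $\gamma_1$ with each block $\bar f^{-1}(y)$ contracted to one position coloured by $\bigcup_{a\in\bar f^{-1}(y)}\{\gamma_1(a)\}$ (a legitimate colour in $2^\Sigma$) — so $f_2$ is a monotone surjection, quasi-isometric since block sizes are bounded by $C$; and $f_3:\gamma_2\to\beta$ is the induced map, now a strictly increasing, hence injective, quasi-isometry. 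For $f_3$ to be a \emph{bijection} onto $\beta$ I also need it to be surjective; it need not be, but the last quasi-isometry condition says every $y\in\beta$ is within $A$ of the image, so I can enlarge $\gamma_2$ by inserting, immediately after each position, the (boundedly many) positions of $\beta$ in the gap above it, coloured to match $\beta$ — this keeps $f_2$, $f_3$ well-defined up to the same bounds and makes $f_3$ a bijection. (I would fold this cleanup into the definition of $\gamma_2$ to keep the statement clean.)

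For the second stage I must write the increasing-then-correction bijection $f_3$ — more precisely the genuine bijection $\delta\to\beta$ that remains after $f_3$ is made order-preserving-plus-bounded-displacement — as a composition of boundedly many atomic crossing maps. The key observation is that $f_3$, being a quasi-isometric bijection $\gamma_2\to\beta$, moves every point by a bounded amount: $|f_3(a)-a|\leq A'$ for a uniform $A'$, because it is within bounded distance of the order isomorphism of the two (now equinumerous, boundedly-spaced) position sets. A bijection of a linear order that displaces every element by at most $A'$ can be factored into at most $2A'$ (or some explicit function of $A'$) products of "adjacent-block transpositions" of the atomic-crossing type: this is the standard fact that a permutation of bounded displacement has bounded "inversion width" and decomposes into a bounded number of non-interfering swaps. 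I would make this precise by induction on $A'$: given $f_3$ with displacement bound $A'\geq 1$, identify a maximal family $\{a_i\}$ of pairwise non-adjacent positions each of which is "too far left of where it needs to go," swap each with the appropriate nearby block via a single atomic crossing map $g_1$ (the non-adjacency and the displacement bound guarantee the blocks $[a_i,b_i]$ are disjoint and have length $\leq C'$ for a uniform $C'$, so $g_1$ is a legitimate atomic quasi-isometry), and check that $g_1^{-1}\circ f_3$ has strictly smaller displacement bound; iterate.

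The main obstacle is the second stage: showing that the residual bijection genuinely has \emph{uniformly bounded displacement} and that the greedy "swap the out-of-place points" step strictly decreases this bound without the swap-blocks overlapping. The displacement bound itself should follow by combining Lemma~\ref{L:pre-image} with the fact that both $\gamma_2$ and $\beta$ are, after the surjectivity cleanup, images of $\omega$ under maps of bounded fibre and bounded gap, so their natural order isomorphism is within bounded distance of $f_3$; but one has to be careful that the cleanup insertions did not destroy this. The decomposition count $n$ then comes out as a function of $A$ and $B$ alone, which is exactly what the statement asks for ("for some $n\geq 1$"), so no uniformity across all quasi-isometries is needed — $n$ may depend on $f$. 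I would present stage one in full and stage two as the displacement-reduction induction, flagging the bounded-displacement lemma as the load-bearing computation.
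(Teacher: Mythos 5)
There is a genuine gap, and it sits exactly at the load-bearing claim of your first stage: the map $f_1$ that sends $a$ to the rank of $(f(a),a)$ in the lexicographic order is \emph{not} monotone whenever $f$ has a cross-over. Take $f(0)=1$, $f(1)=0$, $f(n)=n$ for $n\geq 2$: then $f_1(0)=1>0=f_1(1)$ although $0<1$. Sorting $\alpha$ by $f$-image is precisely the operation that performs all of the crossings, so with your construction the ``mixing'' is absorbed into the first factor, which therefore cannot be a monotonic injection as the proposition requires; and, consistently with this, after your sort the induced map $\bar f$ is nondecreasing, so once you collapse fibres and repair surjectivity the residual $f_3$ is order preserving and your second stage has nothing left to decompose. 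In other words, stages one and two of your proposal contradict each other: either the crossings live in $f_1$ (then $f_1$ is not monotone) or they live in $f_3$ (then you cannot have sorted at the $\alpha$ end). A secondary issue: inserting the missed positions of $\beta$ ``into $\gamma_2$'' destroys surjectivity of $f_2$ unless you simultaneously insert matching positions into $\gamma_1$ that $f_2$ collapses onto them; this has to be built into the definition of $\gamma_1$, not folded into $\gamma_2$ alone.

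The paper resolves the order-of-factors problem by pushing the cross-overs to the $\beta$ end rather than the $\alpha$ end. Using Lemma \ref{L:no-big-crossing}, let $D<0$ bound the cross-overs of $f$; take all pairs $n_i<m_i$ realizing the extremal value $f(m_i)-f(n_i)=D$, swap the corresponding colours in $\beta$ (this swap is an atomic crossing map, with bounded swap length because of the quasi-isometry inequality), and observe that composing $f$ with the swap raises the cross-over bound to $D+1$. Iterating $|D|$ times yields $\gamma_2$, a bijection $g$ with $g\circ f$ monotone, and exhibits $f_3=g^{-1}$ directly as a finite composition of atomic crossing maps --- no bounded-displacement lemma or displacement-reduction induction is needed. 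The remaining monotone map is then split as injection-followed-by-collapse by writing $\gamma_1=(c_1)^{n_1}(c_2)^{n_2}\ldots$ with $n_i=\max\{1,|{f'}^{-1}(c_i)|\}$, the $\max\{1,\cdot\}$ being what keeps $f_2$ surjective at positions missed by the image. Your bounded-displacement observation for quasi-isometric bijections is correct and could be made to work, but only after you relocate the sorting step so that the non-monotone part is the last factor, not the first; as written, the proposed decomposition does not satisfy the statement.
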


\begin{proof}
Let $f:\alpha\to\beta$ be a quasi-isometry. First we decompose it to $\alpha \xrightarrow{f'} \gamma_2 \xrightarrow{f_3} \beta$, where $f'$ is monotonic and $f_3$ is a bijection. By Small Cross Over Lemma we have a bound $D < 0$ such that $f(m)-f(n)\geq D$ for all $n<m$. Let $\{(n_i, m_i)\}_{i\in I}$ be the set of all pairs of natural numbers such that $n_i< m_i$ and $f(m_i)-f(n_i)= D$, where $I$ is an at most countable index set that depends on $f$. We define $\tilde \beta$ as the sequence that is same as $\beta$ except that colours of the position $n_i$ and $m_i$ are swapped for each $i \in I$: that is,

\[
  \tilde \beta(n) = \begin{cases}
    \beta(m_i) & (n=n_i) \\
    \beta(n_i) & (n=m_i) \\
    \beta(n) & (otherwise)
  \end{cases}
\]
Notice that $n_i,m_i,n_j$ and $m_j$ are all distinct element for distinct $i,j \in I$, and hence the above definition is well-defined. Also let $\tilde f$ be a function that executes this swapping, i.e. $\tilde f(m_i)=n_i$, $\tilde f(n_i)=m_i$ and $\tilde f(n)=n$ otherwise. It is easy to show that $\tilde f \circ f$ is a QI map from $\alpha$ to $\tilde \beta$ such that $\tilde f \circ f(m)-\tilde f \circ f(n)\geq D+1$ for all $n<m$. Iterating this procedure $(-D)$ times we get $\gamma_2$ and a bijective QI map $g:\beta \to \gamma_2$ such that $g\circ f$ is monotonic. let $f' =g\circ f$ and $f_3 = g^{-1}$. From the construction it is clear that $g^{-1}$ is a finite composition of certain atomic crossing functions.

Now we decompose $f'$ to $\alpha \xrightarrow{f_1} \gamma_1 \xrightarrow{f_2} \gamma_2$, where $f_1$ is a monotonic injection and $f_2$ is a monotonic surjection. For $\gamma_2 = c_1c_2\ldots$, let $\gamma_1 = (c_1)^{n_1}(c_2)^{n_2}\ldots$, where $n_i = \max\{1, |f_1^{-1}(c_i)|\}$. Let $f_1$ be a map that sends each interval $f'(c_i)$ in $\alpha$ to $(c_i)^{n_i}$ in $\gamma_1$ in an obvious injective, monotonic way. Also let $f_2$ be a map that sends each $(c_i)^{n_i}$ in $\gamma_1$ to $c_i$ in $\gamma_2$. This is the desired
decomposition. 
\end{proof}

\begin{cor}
If $\alpha \leq_{QI} \beta$, then there is $\beta' \sim_{QI} \beta$ so that $\alpha \leq_{QI} \beta'$ via strictly monotonic 
quasi-isometry. \qed
\end{cor}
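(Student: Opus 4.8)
The plan is to deduce the corollary from Proposition~\ref{Prop:decomposition} by tracking which of the three component maps is responsible for the failure of monotonicity and showing that the non-monotone part can be absorbed into a relabelling of the target string. Given $\alpha \leq_{QI} \beta$ via some quasi-isometry $f$, apply Proposition~\ref{Prop:decomposition} to obtain the factorisation $\alpha \xrightarrow{f_1} \gamma_1 \xrightarrow{f_2} \gamma_2 \xrightarrow{f_3} \beta$ with $f_1$ a monotonic injection, $f_2$ a monotonic surjection, and $f_3$ a bijective quasi-isometry. The composite $f_2 \circ f_1 : \alpha \to \gamma_2$ is then already a strictly monotonic quasi-isometry (composition of a monotonic injection and a monotonic surjection is monotonic, and strict monotonicity can be arranged because $f_1$ is injective on each block of $\gamma_1$, so ties introduced by $f_2$ are broken by $f_1$; if one wants genuine strictness one may instead take $f_1$ itself, which is already a strictly monotonic injection from $\alpha$ into $\gamma_1$).

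The second step is to argue that $\gamma_1 \sim_{QI} \beta$ (respectively $\gamma_2 \sim_{QI} \beta$). Here I would use that $f_3 : \gamma_2 \to \beta$ is a bijective quasi-isometry, hence in particular $\gamma_2 \leq_{QI} \beta$; for the reverse direction note that $f_3$ being a finite composition of atomic crossing maps (by the ``furthermore'' clause of Proposition~\ref{Prop:decomposition}) and each atomic crossing map being a bijection with inverse again an atomic crossing map of the same bounded displacement $C$, the inverse $f_3^{-1} : \beta \to \gamma_2$ is also a quasi-isometry. Thus $\gamma_2 \sim_{QI} \beta$. Similarly $f_2 : \gamma_1 \to \gamma_2$ is a monotonic surjection obtained by collapsing bounded blocks $(c_i)^{n_i}$ with $n_i \leq C'$ uniformly bounded (since $n_i = \max\{1, |f_1^{-1}(c_i)|\}$ and $|f_1^{-1}(c_i)|$ is bounded by Corollary~\ref{C:pre-image}), so it is a componentwise reduction in both directions and $\gamma_1 \sim_{QI} \gamma_2 \sim_{QI} \beta$.

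The third step is just bookkeeping: set $\beta' = \gamma_1$ (or $\gamma_2$, whichever gives the cleanest strictly monotone map from $\alpha$). Then $\beta' \sim_{QI} \beta$ by Step~2, and $f_1 : \alpha \to \beta'$ (resp. $f_2 \circ f_1 : \alpha \to \beta'$) is a strictly monotonic quasi-isometry by Step~1, which is exactly the claim.

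I expect the only genuine obstacle to be Step~2, verifying that the inverse of $f_3$ is quasi-isometric: one must check that swapping back the finitely many colour pairs produced in the proof of Proposition~\ref{Prop:decomposition} does not break the distance inequalities. This follows because each atomic crossing map moves every point by at most the uniform bound $C$, so both an atomic crossing map and its inverse distort distances by at most an additive $2C$, and a composition of $n$ of them by at most $2nC$ with $n = -D$ fixed; hence $f_3^{-1}$ is an $(A,B)$-quasi-isometry for suitable constants. Everything else is routine use of the results already established in the excerpt.
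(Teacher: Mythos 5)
Your proof is correct and follows essentially the same route as the paper: both apply Proposition~\ref{Prop:decomposition}, take $\beta'=\gamma_1$ with $f_1$ as the strictly monotone quasi-isometry, and recover $\beta\leq_{QI}\gamma_1$ by inverting $f_3$ and undoing the collapse $f_2$. The only (inessential) difference is that the paper passes from $\gamma_2$ back to $\gamma_1$ via the explicit section $\lambda n.\min f_2^{-1}(n)$, whereas you invoke the uniform bound on the collapsed blocks (Corollary~\ref{C:pre-image}) and componentwise reducibility.
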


\begin{proof}
The strings $\gamma_1$ is a desired $\beta'$ as $\beta'$ is obtained through 
the composition of quasi-isometric maps $f_3^{-1}$ and $\lambda n.\min f_2^{-1}(n)$ applied to $\beta$.
\end{proof}

Informally, the proposition above says that any quasi-isometry can be decomposed into three parts. The first part is 
the mapping $f_1$ which can be viewed as a ``pollution" of $\alpha$, the second part is the map $f_2$  that can be called a ``collapsing'' map, and  the third part is $f_3$ that can be called ``mixing'' since it mixes atomic crossing maps. Note that the mixing part is the one that makes things complicated: below we show that mixing functions preserve componentwise reducibility.

\begin{lem}\label{Lem:acf}
Suppose $\alpha \leq_{CR} \beta$ and $\beta \leq_{QI} \gamma$ via an atomic crossing map $f: \beta \to \gamma$. Then $\alpha \leq_{CR} \gamma$.
\qed
\end{lem}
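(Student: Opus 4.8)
The plan is to prove the lemma directly by exhibiting witnessing partitions for $\alpha \leq_{CR} \gamma$, built by refining the given witnessing partitions for $\alpha \leq_{CR} \beta$ together with the atomic crossing structure of $f$. Write $\alpha = u_1 u_2 \ldots$ and $\beta = v_1 v_2 \ldots$ as witnessing partitions, so $u_i \sqsubseteq v_i$ and $|u_i|, |v_i| \leq C_1$ for all $i$. Let $\{a_j, b_j\}_{j \in J}$ be the index data for the atomic crossing map $f:\beta\to\gamma$, with $a_j < b_j < a_{j+1}$ and $b_j - a_j \leq C_2$ for all $j$. The key observation is that $f$ only moves positions within windows of width at most $C_2$, so if we partition $\beta$ into blocks that are coarse enough, each ``swap pair'' $\{a_j,b_j\}$ is confined to either a single block or a pair of adjacent blocks.

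First I would produce a common refinement: merge consecutive $v_i$'s into blocks $V_1, V_2, \ldots$ of length strictly greater than $C_2$ (each $V_k$ a concatenation of at most $C_2 + 1$ of the original $v_i$'s, hence of length at most $(C_2+1)C_1$), and correspondingly merge the $u_i$'s into $U_1, U_2, \ldots$ so that $U_k \sqsubseteq V_k$ still holds and $|U_k| \leq (C_2+1)C_1$. Now, because each block $V_k$ has length $> C_2$ while every swap pair spans distance $\leq C_2$, no swap pair $\{a_j,b_j\}$ can straddle more than two consecutive blocks $V_k, V_{k+1}$. Next I would pass to a further grouping into \emph{pairs} of adjacent blocks, $W_k = V_{2k-1}V_{2k}$ and $U'_k = U_{2k-1}U_{2k}$; then the image $f(W_k)$, as a set of positions, might spill one swap pair into the neighbouring group, so instead I would take the grouping $W_k$ large enough (three consecutive $V$-blocks, say) that $f$ maps the positions of $W_k$ into the positions of the \emph{same} word $W_k$ — since any position moved by $f$ out of $W_k$ would require a swap pair of length exceeding one block width, contradicting $|b_j - a_j| \leq C_2 < |V_i|$ when the moved position is not in the first or last original $v$-block of $W_k$, and by absorbing those boundary $v$-blocks into the neighbour we can force $f(\text{positions of } W_k) = \text{positions of } W_k$.

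With that in hand, let $W_1, W_2, \ldots$ be the final blocking of $\beta$ and $U'_1, U'_2, \ldots$ the corresponding blocking of $\alpha$, with $U'_k \sqsubseteq W_k$ and uniformly bounded lengths. Let $W'_k$ be the word obtained from $W_k$ by applying the swaps of $f$ that lie inside $W_k$; then $\gamma = W'_1 W'_2 \ldots$, the lengths $|W'_k| = |W_k|$ are uniformly bounded, and crucially $Cl(W'_k) = Cl(W_k)$ since a swap only permutes positions within the block. Therefore $U'_k \sqsubseteq W_k \sqsubseteq W'_k$, so $\alpha = U'_1 U'_2 \ldots$ and $\gamma = W'_1 W'_2 \ldots$ are witnessing partitions for $\alpha \leq_{CR} \gamma$, as required.

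The main obstacle is the bookkeeping in the middle step: ensuring that the blocking of $\beta$ can be chosen simultaneously coarse enough that every swap pair of $f$ is \emph{internal} to some block (so that the induced blocking of $\gamma$ has the same colour-set block by block) while still refining the $v_i$-partition enough that the $u_i \sqsubseteq v_i$ relations survive the merging. This is just a matter of choosing the grouping constant in terms of $C_1$ and $C_2$ and checking that boundary swap pairs can always be absorbed into an adjacent block, but it is where all the care is needed; everything after that is immediate from $Cl(W'_k) = Cl(W_k)$.
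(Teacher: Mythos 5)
Your reduction stands or falls on the middle step, and that step fails. The claim that, after grouping the $v_i$'s coarsely enough, ``boundary $v$-blocks can be absorbed into the neighbour'' so that $f$ maps the positions of each super-block $W_k$ onto themselves is false in general. An atomic crossing map may have a swap pair straddling \emph{every} boundary of the witnessing partition: take $|v_i|=10$ for all $i$ and let the $i$-th swap exchange the last position of $v_i$ with the first position of $v_{i+1}$; these pairs satisfy $a_i<b_i<a_{i+1}$ and $b_i-a_i=1$, so this is a legitimate atomic crossing map, yet every $v$-boundary is straddled. Since your super-block boundaries must sit at $v$-boundaries (otherwise the containments $u_i\sqsubseteq v_i$ give you no control over how to cut the $\alpha$-side blocks), every candidate boundary of every uniformly bounded regrouping is straddled, and absorbing a block into its neighbour only moves the problem to the next straddled boundary; you would need blocks of unbounded length to make all swaps internal. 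Cutting in the interior of a $v_i$ does not rescue the argument either, because $Cl(u_i)\subseteq Cl(v_i)$ says nothing about which half of $v_i$ carries which colours, so the induced $\alpha$-partition loses the containment property.

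This is precisely the hard case, and the paper's proof is built around it rather than around making swaps internal. After normalising so that each swap has $C(a_i)\neq C(b_i)$ and straddles consecutive blocks, the paper keeps the block lengths of $\gamma$ equal to those of $\beta$ and proves a claim, by pigeonhole on the finite alphabet, that a colour displaced across a block boundary is recaptured within a window of at most about $|\Sigma|(|\Sigma|+1)$ blocks; iterating this yields a strictly increasing sequence $\{n_k\}$ with uniformly bounded gaps such that $v_{n_k}\ldots v_{n_{k+1}-1}\sqsubseteq w_{n_k}\ldots w_{n_{k+1}-1}$, and grouping both witnessing partitions along $\{n_k\}$ gives $\alpha\leq_{CR}\gamma$. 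Your observation that a swap internal to a block leaves its colour set unchanged is correct but only disposes of the easy swaps; without an argument of the recapture type your proof does not go through.
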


\begin{proof}
Let $\alpha = u_1u_2\ldots$ and $\beta = v_1v_2\ldots$ be witness partitions. Let $\gamma = w_1w_2\ldots$ be a partition such that $|w_n| = |v_n|$ for all $n$. We construct a strictly increasing sequence $\{n_k\}_{k \in \mathbb{N}}$ such that $v_{n_k}\ldots v_{n_{k+1}-1}\sqsubseteq w_{n_k}\ldots w_{n_{k+1}-1}$ for each $k$. Through the proof we identify a finite sequence $u_n$ in a partition $\alpha = u_1u_2\ldots$ and an interval $[|u_1\ldots u_{n-1}|,|u_1\ldots u_n|-1]$ in $\alpha$, and write as ``$a \in u_n$'' for given $a \in \mathbb{N}$.

We can assume $f$ satisfies the following conditions without loss of generality:

\begin{enumerate}
\item $f$ is characterised by $\{a_i,b_i\}_{i \in \mathbb{N}}$ such that $C(a_i) \neq C(b_i)$ for all $i \in \mathbb{N}$.
\item For each $i$, $a_i \in u_i$ and $b_i \in u_{i+1}$.
\end{enumerate}
The following claim is the crucial part of our proof.

\begin{clm}
Let $n\in \mathbb{N}$ be given. If $C(b_n) \in C(w_{n+i})$ for some $1 \leq i\leq |\Sigma|$, then there exists $j$ such that $1 \leq j \leq |\Sigma|(\Sigma +1)$ such that $v_{n+1} \ldots v_{n+j} \sqsubseteq w_{n+1} \ldots w_{n+j}$ and $c(b_{n+j}) \in c(w_{n+j+i'})$ for some $1 \leq i'\leq |\Sigma|$.
\end{clm}

Note that for given $n$, there are $n_1,n_2 \in \{n, \ldots, n+|\Sigma|\}$ such that $n_1 < n_2$ and $C(b_{n_1}) = C(b_{n_2})$: this is immediate from the pigeonhole principle. This implies that $C(b_{n_1}) \in C(w_{n_2})$ and $n_2-n_1 \leq |\Sigma|-1$.
Also for given $n$, let 
$$A = \{m \ | \ m > n \land C(a_m) \not\in C(w_{n+1}\ldots w_m)\}.$$
Then $|A| \leq |\Sigma|-1$, as $m \in A$ only if $C(a_m) \neq C(a_n)$, and for any distinct $m, m' \in A$ we have $C(a_m)\neq C(a_m')$. 

Now assume $C(b_n) \in C(w_{n+i})$ holds for some $1 \leq i\leq |\Sigma|$. From the facts above, we have $1 \leq m \leq |\Sigma|^2$ such that for all $m' \leq |\Sigma|$ we have $C(b_n), C(a_{n+m+m'}) \in C(w_{n+1}\ldots w_{n+m+m'})$. Notice that from this we have $v_{n+1}\ldots v_{n+m+m'} \sqsubseteq w_{n+1}\ldots w_{n+m+m'}$. Finally, we can find $m'$ and $i'$ such that $m' < m'+i' \leq |\Sigma|$ and $C(b_{n+m+m'}) = C(b_{n+m+m'+i'})$, hence $C(b_{n+m+m'}) \in C(w_{n+m+m'+i'})$, where $1 \leq i' \leq |\Sigma|$. This $m+m'$ is the desired $j$.
\smallskip

Find $n$ so that $u_1 \ldots u_n \sqsubseteq w_1 \ldots w_n$ and $C(b_n) \in C(w_{n+i})$ for some $1 \leq i\leq |\Sigma|$, then iterate finding $j$ in the claim above. This is a procedure that gives the desired $\{n_k\}$. 
\end{proof}

\begin{thm}
$\alpha \leq_{QI} \beta$ implies $\alpha \leq_{CR} \beta$.
\end{thm}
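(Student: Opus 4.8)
The plan is to reduce, via Proposition~\ref{Prop:decomposition}, to two ``easy'' shapes of quasi-isometry --- a monotonic one and an atomic crossing one --- and to handle each separately, so that all the genuine difficulty is carried by Proposition~\ref{Prop:decomposition} and Lemma~\ref{Lem:acf}. Concretely, I would first factor the given quasi-isometry $f:\alpha\to\beta$ according to Proposition~\ref{Prop:decomposition} as
\[
\alpha \xrightarrow{f'} \gamma_2 \xrightarrow{g_1} \delta_1 \xrightarrow{g_2} \cdots \xrightarrow{g_n} \beta,
\]
where $f' = f_2\circ f_1$ is monotonic (a composition of monotonic maps) and quasi-isometric (a composition of quasi-isometries), and each $g_k$ is an atomic crossing quasi-isometry; write $\delta_0 = \gamma_2$ and $\delta_n = \beta$.

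The core step is to show $\alpha \leq_{CR} \gamma_2$ directly from the fact that $f'$ is a monotonic quasi-isometry. Since $f'$ is monotonic, every fibre $f'^{-1}(y)$ with $y\in\gamma_2$ is an interval of $\alpha$, and by Corollary~\ref{C:pre-image} these fibres have size bounded by a constant; as $y$ ranges over the image $J = f'(\alpha)$ the non-empty fibres are exactly the consecutive blocks of a partition of $\alpha$. Enumerate $J = \{y_0 < y_1 < \cdots\}$; the coarse-density clause of Definition~\ref{Defn:QIM} forces $y_{k+1} - y_k \leq 2A+1$ and $y_0 \leq A$, since otherwise some position of $\gamma_2$ would lie at distance more than $A$ from $J$. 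Put $u_k = f'^{-1}(y_k)$ and let $v_k$ be the block of $\gamma_2$ on $[y_k, y_{k+1}-1]$, absorbing the finite prefix $[0,y_0-1]$ into $v_0$. Then $|u_k|$ and $|v_k|$ are uniformly bounded, and since $f'$ is colour preserving every position of $u_k$ carries the colour $\gamma_2(y_k)\in Cl(v_k)$, so $u_k \sqsubseteq v_k$. Hence $\alpha = u_0 u_1 \cdots$ and $\gamma_2 = v_0 v_1 \cdots$ are witnessing partitions, giving $\alpha \leq_{CR}\gamma_2$.

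Next I would propagate this reduction along the atomic crossing maps by induction on $k$, proving $\alpha \leq_{CR} \delta_k$ for all $0\leq k\leq n$. The base case $k=0$ is the previous paragraph. For the inductive step, $\alpha \leq_{CR}\delta_{k-1}$ holds by hypothesis and $\delta_{k-1}\leq_{QI}\delta_k$ holds via the atomic crossing map $g_k$, so Lemma~\ref{Lem:acf} delivers $\alpha \leq_{CR}\delta_k$. Taking $k=n$ gives $\alpha \leq_{CR}\beta$, as desired.

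I expect no serious obstacle remains at this level, precisely because the argument never composes two arbitrary componentwise reductions (general transitivity of $\leq_{CR}$ is not available): $\alpha$ stays fixed on the left, and the right-hand endpoint is advanced only through a monotonic quasi-isometry, handled by the explicit fibre-partition construction, or through an atomic crossing map, handled by Lemma~\ref{Lem:acf}. The only points needing care are bookkeeping matters --- that $f'$ being monotonic really does yield a genuine partition into bounded, colour-respecting factors, and that the bounded initial segments (the prefix of $\gamma_2$ below $y_0$, and any finitely many positions adjusted when passing to each $\delta_k$) do not disturb the uniform length bound --- and these are routine once the fibre construction is in place.
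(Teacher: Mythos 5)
Your proposal is correct and follows essentially the same route as the paper: decompose $f$ via Proposition~\ref{Prop:decomposition}, obtain $\alpha\leq_{CR}\gamma_2$ from the monotonic part by the fibre/image-interval partition ($u_k=f'^{-1}(y_k)$, $v_k=[y_k,y_{k+1}-1]$), and then push through the atomic crossing maps by iterating Lemma~\ref{Lem:acf}. You merely spell out details the paper leaves implicit (the bounds from Corollary~\ref{C:pre-image} and coarse density, and the induction over the atomic crossing factors), so there is nothing to add.
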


\begin{proof}
Let $f:\alpha \to \beta$ be a quasi-isometry. Decompose it into $\alpha \xrightarrow{f_1} \gamma_1 \xrightarrow{f_2} \gamma_2 \xrightarrow{f_3} \beta$ according to Proposition \ref{Prop:decomposition}. For any monotonic QI map $g:\alpha'\to\beta'$ we easily have $\alpha'\leq_{CR}\beta'$: Indeed, enumerate all elements of an image of $g$ as $\{b_n\}_{n\in\mathbb{N}}$, where $b_n < b_{n+1}$ for all $n$, let $u_n = g^{-1}(b_n)$ and $v_n = [b_n, b_{n+1}-1]$. Then $\alpha'= u_1u_2\ldots$ and $\beta'=v_1v_2\ldots$ are witnessing partitions. Hence we have $\alpha \leq_{CR}\gamma_2$, and Lemma \ref{Lem:acf} completes the proof.
\end{proof}

One could attempt to go further, namely say $\alpha$ and $\beta$ are {\it colour-equivalent} if they are  componentwise reducible to each other via the {\it same} witnessing partitions. It is clear that if $\alpha$ and $\beta$ are colour-equivalent, then they are quasi-isometric.  Can we show that $\alpha \sim_{QI} \beta$ implies their colour-equivalence? 

\begin{prop} \label{Prop:Colour1}
There are colour-equivalent $\alpha$ and $\beta$ such that no eventual embeddings witness quasi-isometry between $\alpha$ and $\beta$.
\end{prop}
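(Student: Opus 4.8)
The plan is to exhibit two strings that are colour-equivalent in the simplest possible way—via the trivial partition into single symbols—yet whose colour blocks are so badly misaligned that no quasi-isometry respecting the order (even eventually) can match them up. The natural candidate is to take $\alpha$ and $\beta$ over $\{0,1\}$ with the same \emph{sequence of block lengths} but with the roles of $0$ and $1$ swapped: say $\alpha = 0^{a_1}1^{a_2}0^{a_3}1^{a_4}\ldots$ and $\beta = 1^{a_1}0^{a_2}1^{a_3}0^{a_4}\ldots$, where $(a_n)$ is an unbounded sequence, e.g. $a_n = n$. Since at position $i$ the symbol of $\alpha$ and the symbol of $\beta$ need not agree, this particular pairing is \emph{not} colour-equivalent via the identity partition; so instead I would use a shift. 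A cleaner choice: let $\alpha = (01)^\omega$-like on a coarse scale, or more robustly, take $\alpha = 0^{1}1^{1}0^{2}1^{2}0^{3}1^{3}\ldots$ and $\beta = 1^{1}0^{1}1^{2}0^{2}1^{3}0^{3}\ldots$. Here position $i$ carries colour $0$ in $\alpha$ exactly when it carries colour $1$ in $\beta$, so colour-equivalence fails for the identity partition but holds for the partition that groups $\alpha$ as $0\,|\,11\,|\,00\,|\,\ldots$ paired suitably with blocks of $\beta$—so I should instead choose $\alpha,\beta$ so that a genuine common witnessing partition exists. The robust fix is: let $\alpha = 0^{a_1}1^{a_1}0^{a_2}1^{a_2}\ldots$ and $\beta = 1^{a_1}0^{a_1}1^{a_2}0^{a_2}\ldots$ with $a_n \to \infty$; then the partition cutting both strings at every multiple of, say, all block boundaries of $\alpha$ shows colour-equivalence after a small shift—concretely, $\alpha$ shifted so its first $a_1$ symbols are dropped lines up $1^{a_1}$ with $1^{a_1}$, etc. I would then make this precise by choosing the witnessing partition explicitly.

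The second and main step is to show no eventual embedding $f:\alpha\to\beta$ (equivalently no eventual embedding $\beta\to\alpha$) can be a quasi-isometry. Suppose $f$ is an eventual quasi-isometric embedding. By the Small Cross Over Lemma (Lemma~\ref{L:no-big-crossing}) and Lemma~\ref{L:pre-image}, $f$ is eventually order-preserving up to a bounded defect and maps blocks of $\alpha$ roughly onto blocks of $\beta$ with comparable lengths; more precisely, arguing as in Theorems~\ref{Thm:minimal} and~\ref{P:infinite-chain}, past some point $f$ must send the $n$-th long monochromatic block of $\alpha$ into a region of $\beta$ of comparable length, and since $f$ is colour-preserving and injective that region is dominated by a single colour—forcing the $n$-th block of $\alpha$ and the block of $\beta$ it lands in to have the same colour. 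But in our construction the $k$-th long block of $\alpha$ (of colour determined by parity of $k$) is offset by one half-period from the $k$-th long block of $\beta$, so consecutive long blocks of $\alpha$ of colour $0$ would have to be squeezed into the gaps between the colour-$1$ blocks of $\beta$, which have bounded length on the relevant scale only if the $a_n$ are bounded—contradiction. I would formalize the length-comparison exactly as in the limit computation $\lim_k \frac{\text{(block length in }\beta)}{\text{(block length in }\alpha)}$ used in Theorem~\ref{P:infinite-chain}, showing the ratio is unbounded, hence no constant $A$ works.

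The main obstacle is making the phrase ``$f$ eventually maps blocks to blocks'' rigorous: an embedding is free to interleave images of $\alpha$-positions in a bounded-crossover but still nontrivial way, so one cannot simply say a whole block maps into a whole block. The clean route is to use Corollary~\ref{C:pre-image} (bounded fibers) together with Lemma~\ref{L:pre-image} applied to the endpoints of the $n$-th $\alpha$-block: this pins $\min f^{-1}$ and $\max f^{-1}$ of that interval to within a bounded distance of a genuine interval in $\alpha$ of length $\approx a_n$, and then the image, being an interval in $\beta$ of length $\Theta(a_n)$ up to additive constants, must meet both colours as soon as $a_n$ exceeds the crossover and fiber bounds—unless it sits inside one $\beta$-block. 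A short counting argument (there are two long $\alpha$-blocks of a given colour for each long $\beta$-block of that colour, up to a shift) then yields the unbounded ratio and the contradiction. I would finish by noting the symmetric argument rules out eventual embeddings $\beta\to\alpha$, so neither direction of the colour-equivalence is witnessed by an eventual embedding, while colour-equivalence itself (hence $\alpha\sim_{QI}\beta$) holds by the explicit shifted partition.
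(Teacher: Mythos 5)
There is a genuine gap: the example you propose cannot satisfy both halves of the statement for any choice of $(a_n)$. Write $\alpha=0^{a_1}1^{a_1}0^{a_2}1^{a_2}\ldots$ and $\beta=1^{a_1}0^{a_1}1^{a_2}0^{a_2}\ldots$ and split into two regimes. If the ratios $a_{n+1}/a_n$ are bounded (e.g.\ your suggestion $a_n=n$), then the conclusion fails: mapping $\beta$'s $1$-run number $n$ onto $\alpha$'s $1$-run number $n$ and $\beta$'s $0$-run number $n$ into $\alpha$'s $0$-run number $n+1$ (spreading points as equally as possible), and symmetrically $\alpha$'s $0$-run $n$ onto $\beta$'s $0$-run $n$ and $\alpha$'s $1$-run $n$ into $\beta$'s $1$-run $n+1$, gives order-preserving injective quasi-isometries in \emph{both} directions -- so eventual (indeed global) embeddings do witness the quasi-isometry. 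Your contradiction argument overlooks exactly this: an embedding is not forced to send $\alpha$'s $n$-th $0$-run ``into the gaps'' of $\beta$; it may shift the run correspondence by one index, and then the length ratios you want to blow up are $a_{n+1}/a_n$, which stay bounded. If instead the ratios $a_{n+1}/a_n$ are unbounded, then the hypothesis fails: by the block-rigidity argument of Theorem \ref{Thm:minimal} (which you yourself invoke), any quasi-isometry must match the run-length sequences up to a fixed index shift and bounded multiplicative/additive distortion, and with opposite colour parities this forces $a_{n+1}\leq A'a_n+B'$; so $\alpha$ and $\beta$ are not even quasi-isometric, let alone colour-equivalent. (Relatedly, you never actually produce the witnessing partition for colour-equivalence, and the same shift-by-one accounting shows it cannot exist in the fast-growth regime, since corresponding pure-colour stretches of the two partitions would have to have equal length while coming from runs of wildly different sizes.)

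The missing idea is that the obstruction should target \emph{injectivity locally}, not block-length ratios. The paper takes $\alpha=00\,1^2\,00\,1^4\ldots 00\,1^{2n}\ldots$ and $\beta=0\,1\,0\,1^2\ldots 0\,1^n\ldots$: here colour-equivalence holds (match each adjacent pair $00$ of $\alpha$ with a single $0$ of $\beta$, and cut the runs $1^{2n}$ and $1^n$ into $n$ blocks $11$ resp.\ $1$), yet any colour-preserving map $f:\alpha\to\beta$ that is eventually injective must send the two \emph{adjacent} $0$'s at positions $i_n,i_n+1$ to two \emph{distinct} $0$-positions of $\beta$, which are eventually more than $A_f+B_f$ apart, contradicting $|f(i_n+1)-f(i_n)|\leq A_f\cdot 1+B_f$. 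In your example both strings consist of long monochromatic runs, so no such local adjacency-versus-isolation asymmetry is available, and the statement cannot be extracted from it.
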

\begin{proof}
Consider the following coloured metric spaces over the alphabet $\{0,1\}$:
$$
\alpha=001^2001^4\ldots 1^{2n}00 \ldots  \hspace{1mm} \mbox{and} \hspace{1mm} \beta=0101^2\ldots 01^n\ldots.
$$

We now prove that no eventual embeddings exist witnessing quasi-isometry between $\alpha$ and $\beta$.


Assume that  $f:\alpha\rightarrow \beta$ and $g:\beta\rightarrow \alpha$ are eventual embeddings with witness constants $A_f, B_f$ 
and $A_g, B_g$. Let $i_n$ and $i_n+1$ be the sequence of consecutive  positions in $\alpha$ with colour $0$.
Since $f$ is quasi-isometry we have 
$$
|f(i_n+1)-f(i_n)|\leq A_f |i_{n+1}-i_n| +B_f=A_f+B_f.
$$ 
There exists a position $i_n$ such that $f ({i_n})< f(i_n+1)$ and the number of consecutive $1$s immediately  on the right side of $f(i_n)$ is greater that $A_f+B_f$. Hence, $f(i_n+1)-f(i_n)> A_f+B_f$. This is a contradiction with quasi-isometry.    
\end{proof}

\begin{prop}\label{Prop:Colour2}
There are sequences $\alpha$ and $\beta$ such that $\alpha$ and $\beta$ are quasi-isometric via monotonic embeddings but
$\alpha$ and $\beta$ are not colour-equivalent.
\end{prop}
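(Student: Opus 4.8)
My plan is to exhibit one concrete pair of words over the three‑letter alphabet $\Sigma=\{0,1,2\}$ and then verify the two assertions separately. Fix an unbounded nondecreasing sequence $(a_k)_{k\ge 1}$ (for definiteness $a_k=k$). Let
$\alpha$ be the concatenation of the blocks $0^{a_k}\,2\,0^{a_k}\,1^{a_k}$ and let $\beta$ be the concatenation of the blocks $0^{a_k}\,2\,1^{a_k}\,0^{a_k}$; so the $k$‑th block of each word has the same length $3a_k+1$, the two words occupy the same coordinates block‑by‑block, and inside the $k$‑th block $\beta$ is obtained from $\alpha$ by transposing the segments $0^{a_k}$ and $1^{a_k}$ that follow the isolated $2$. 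Note every unordered pair of colours occurs at adjacent positions in both words, so no crude adjacency invariant is available; the difference between $\alpha$ and $\beta$ is the large‑scale arrangement of the colours around the marker $2$.

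First I would prove $\alpha\sim_{QI}\beta$. The natural candidates are the maps that act inside each $k$‑th block by sending $0^{a_k}2$ identically across and then reshuffling $0^{a_k}1^{a_k}$ to $1^{a_k}0^{a_k}$; since such a reshuffle inside a single block is a finite composition of atomic crossing maps whose displacements are bounded by the block length, one checks directly (using the Small Cross Over Lemma in the form of Proposition~\ref{Prop:decomposition}, and Lemma~\ref{L:pre-image}) that the resulting map, suitably throttled so that no coordinate moves by more than a constant at a time, is a colour‑preserving quasi‑isometry in both directions, realised (after the throttling/collapsing decomposition of Proposition~\ref{Prop:decomposition}) by order‑preserving maps up to $\sim_{QI}$. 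Concretely I would build the two reductions as componentwise reductions (Theorem of Section~\ref{S:Colour-equivalence}): $\alpha=u_1u_2\cdots$, $\beta=v_1v_2\cdots$ with $u_i\sqsubseteq v_i$, using a straddling block of colour set $\{0,1,2\}$ at the location of each $2$, and symmetrically for $\beta\le_{CR}\alpha$.

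Second, and this is the step I expect to be the main obstacle, I would show $\alpha$ and $\beta$ are not colour‑equivalent. Suppose $\alpha=u_1u_2\cdots$, $\beta=v_1v_2\cdots$ were witnessing partitions with $|u_i|,|v_i|\le C$ and $Cl(u_i)=Cl(v_i)$ for all $i$. For $k$ large enough that $a_k>2C$, the block $u_{i_k}$ containing the $k$‑th $2$ of $\alpha$ lies in an all‑$0$ neighbourhood on both sides, so $Cl(u_{i_k})\subseteq\{0,2\}$; hence $Cl(v_{i_k})\subseteq\{0,2\}$, while in $\beta$ the $k$‑th $2$ has only $1$'s to its right, forcing the block $v_{i_k+1}$ (which cannot reach the distant $0^{a_k}$ past the $1^{a_k}$) to satisfy $Cl(v_{i_k+1})=\{1\}$, whereas $u_{i_k+1}$, sitting among $\alpha$'s $0$'s just after its $2$, satisfies $Cl(u_{i_k+1})=\{0\}$ — contradicting $Cl(u_{i_k+1})=Cl(v_{i_k+1})$. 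The delicate points to nail down are: (i) that for large $k$ the block of any bounded partition that meets the $k$‑th $2$ is indeed sandwiched by like‑coloured runs on both sides in $\alpha$ but not in $\beta$ (a counting/position argument pairing the $k$‑th $2$‑containing block of $\alpha$ with that of $\beta$ via the coarse proportionality $s_i\asymp t_i$ of partition positions); and (ii) making the first step — producing genuine order‑preserving quasi‑isometric maps that survive the unbounded "sliding'' of the marker — actually go through, since a careless construction introduces cross‑overs of unbounded size; this is where I expect to spend most of the effort, possibly refining the construction of $\alpha,\beta$ so that the reshuffles needed are themselves bounded‑range while the colour‑equivalence obstruction above is preserved.
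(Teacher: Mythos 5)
Your construction cannot satisfy the first half of the statement, and this is not a matter of carefulness in building the maps: for your pair $\alpha,\beta$ there is provably \emph{no} monotonic colour-preserving quasi-isometric embedding in either direction. Indeed, suppose $f:\alpha\to\beta$ is such a map with constants $A,B$. Each position of $\alpha$ coloured $2$ must go to a position of $\beta$ coloured $2$, and since $f$ is strictly increasing on the infinitely many $2$'s, the indices of the target markers tend to infinity. Let $x$ be the position of the $k$-th $2$ in $\alpha$ and suppose $f(x)$ is the $m$-th $2$ of $\beta$. The successor $x+1$ has colour $0$ in $\alpha$, and monotonicity together with the upper quasi-isometry bound forces $f(x)<f(x+1)\leq f(x)+A+B$; but in $\beta$ the $a_m$ positions immediately to the right of the $m$-th $2$ are all coloured $1$, so we would need $a_m+1\leq A+B$, contradicting $a_m\to\infty$. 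The same argument (with the roles of $0$ and $1$ exchanged) kills $g:\beta\to\alpha$. So your strings are quasi-isometric only via maps with cross-overs or non-injectivity, which is exactly what the proposition is designed to exclude; this is the same local obstruction the paper itself uses to prove Proposition~\ref{Prop:Colour1}. Your appeal to Proposition~\ref{Prop:decomposition} does not repair this: that result yields monotone maps only into a QI-equivalent string $\beta'$ (respectively $\gamma_1,\gamma_2$), not monotone embeddings between $\alpha$ and $\beta$ themselves, and the in-block ``reshuffles'' you invoke displace points by $a_k$, so they are not atomic crossing maps in the paper's sense (those require uniformly bounded $b_i-a_i$) and are not quasi-isometries at all, since the adjacent pair at the junction of the two swapped runs is sent to positions $2a_k-1$ apart.

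The second half of your argument (non-colour-equivalence of your pair) is essentially sound, but since the first half is impossible for this pair, the example does not prove the proposition; the difficulty you flag in your point (ii) is precisely where the construction must change. The paper avoids the obstruction by introducing a padding letter $*$ and taking $\alpha$ to be $\beta$ with one isolated $0$ deleted per block: then the natural inclusion $\alpha\hookrightarrow\beta$ is already a monotone embedding quasi-isometry, a stretched monotone embedding with bounded distortion goes back from $\beta$ to $\alpha$, and non-colour-equivalence follows from a counting argument (one versus two $0$'s between consecutive $1$'s). If you want to salvage your approach, you must redesign $\alpha$ and $\beta$ so that immediately after every marker both strings show the \emph{same} colour for an unbounded stretch (e.g.\ by padding), while still differing in a bounded-window colour count that defeats any common witnessing partition.
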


\begin{proof}
Let $\Sigma = \{0,1,*\}$ and define $\alpha,\beta \in \Sigma^\omega$  as follows.
We first define $\beta$ by:
$$
 0 1 0 * \  \  0 * 1 * 0 * * * \  \  \ldots \  0 (*)^n 1 (*)^n 0(*)^{2n+1} \ \ldots
$$
Thus, the $n^{th}$-part of $\beta$ is the string  $0 (*)^n 1 (*)^n 0(*)^{2n+1}$. We re-write $\beta$ with subscripts on each $0$ and $1$ as pointers for easier  readability:
$$
0_0 1_0 0_{\frac{1}{2}} * 0_1 * 1_1 * 0_{\frac{3}{2}} * * * ... 0_n (*)^n 1_n (*)^n 0_{\frac{2n+1}{2}} (*)^{2n+1}...
$$
Let $\alpha$ be obtained from $\beta$ by omitting the second occurrence of $0$ in the $n^{th}$-substring of $\beta$. So, 
$$
\alpha=0 1 * \  \  0 * 1 * * * *  \ \  \ldots \  0 (*)^n 1 (*)^n (*)^{2n+1} \ \ldots
$$
So, the $n^{th}$-substring of $\alpha$ is $0 (*)^n 1 (*)^n(*)^{2n+1}$. As above we re-write $\alpha$ with with subscripts
as pointers:
$$
\alpha =0_0 1_0 * 0_1 * 1_1 * * * * ... 0_n (*)^n 1_n (*)^{3n+1}...
$$

Constructing a quasi-isometry  $f: \alpha \to \beta$ is straightforward, since eliminating $0_{\frac{2n+1}{2}}$ for each $n \in \mathbb{N}$ from $\beta$ we have the sequence $\alpha$. We define $g: \beta \to \alpha$ as a map that sends certain intervals in $\beta$ to intervals in $\alpha$ monotonically and as equally as possible, as follows:
\begin{itemize}
\item $g$ maps each sequence $0_n(*)^n1_n$ in $\beta$ to the sequence $0_{2n}(*)^{2n}1_{2n}$ in $\alpha$;
\item $1_n (*)^n 0_{\frac{2n+1}{2}}$ in $\beta$ to $1_{2n} (*)^{6n+2}0_{2n+1}$ in $\alpha$; and
\item $0_{\frac{2n+1}{2}} (*)^{2n+1}0_{n+1}$ in the string $\beta$ to the sequence $0_{2n+1}(*)^{2n+1} 1_{2n+1} (*)^{6n+3}0_{2(n+1)}$ in $\alpha$.
\end{itemize}
Then $d(i,j) \leq d(g(i),g(j)) \leq 6d(i,j)$ and $g$ is a monotonic embedding, thus it is a witness for quasi-isometric embedding.

For non-colour equivalence, assume the contrary and let $\alpha = u_1u_2...$ and $\beta = v_1v_2...$ be a witness. We should have only finitely many $u_i$ and $v_i$ that contain more than one occurrences of 0 and/or 1, since otherwise we do not have a bound of $|u_i|$ or $|v_i|$. Let $I \in \mathbb{N}$ be a number such that for all $i \geq I$ this does not happen.

Let $p\geq I$ and suppose $u_p$ contains $1_n$. Then $v_p$ contains $1_{n+k}$ for some $k \in \mathbb{Z}$, and for all $1_{n'}$ and $p'\geq I$ such that $u_{p'}$ contains $1_{n'}$, $v_{p'}$ contains $1_{n'+k}$ (otherwise some $u_{p'}$ or $v_{p'}$ contains multiple 1s). 

Now suppose $p' > p \geq I$ and $u_p$ and $u_{p'}$ contain $1_n$ and $1_{n+1}$, respectively. 
Then from the construction of $\alpha$ there should be exactly one $q \in [p+1, p'-1]$ with the corresponding interval $u_q$ contains 0.
On the other hand $v_p$ and $v_{p'}$ contain $1_{n+k}$ and $1_{n+k+1}$, respectively, and from the construction of $\beta$ we should have two elements in $[p+1, p'-1]$ with the corresponding interval contains 0, which is the contradiction.

\end{proof}





\section{B\"uchi automata and large scale geometries} \label{S:Buchi}

Let $L \subseteq \Sigma^{\omega}$ be a language, where  we assume that $\Sigma=\{0,1\}$.  The notion of quasi-isometry leads us to consider the quasi-isometry types of strings from $L$: 

\begin{defn}
An {\bf atlas} is a set of quasi-isometry types. In particular, the atlas defined by the language $L$ is
the set \ $[L]=\{[\alpha] \mid \alpha \in L\}$, where $[\alpha]$ is the quasi-isometry type of $\alpha$.
\end{defn}

A question of a general character is concerned with understanding the set $[L]$ given a description of $L$. In particular, a natural question is if one can decide that $[L_1]=[L_2]$ given descriptions of languages $L_1$ and $L_2$. In this section, we study
the atlases defined by B\"uchi automata recognisable languages. We show that for B\"uchi automata recognisable languages $L_1$ and $L_2$ there is an efficient algorithm that decides if $[L_1]=[L_2]$.  We recall basic definitions for B\"uchi automata.

\begin{defn}  \label{df: buchi automata}
A {\bf B\"uchi automaton} $\mathcal M$  is  a quadruple $(S,\iota,\Delta,F)$, where
$S$ is a finite  set of {\it states}, $\iota \in S$ is the {\it initial state},  
$\Delta \subset S \times \Sigma  \times S$ is the
{\it transition table}, and $F \subseteq S$ is the set of {\it accepting states}.
\end{defn}

A {\it run} of $\mathcal M$ on $\alpha=\sigma_0 \sigma_1 \dots \in \Sigma^{\omega}$ is a sequence of states 
\ 
$r=s_0,s_1,\dots $ \ such that $s_0 = \iota$ and $(s_i,\sigma_{i},s_{i+1}) \in \Delta$ for all $i \in
\omega$. The run is {\em accepting } if the set $Inf(r)=\{s: \exists^\infty i (q_i=s)\}$ has a  
state from~$F$. The automaton $\mathcal M$ {\em accepts} the string $\alpha$ if it has an accepting run on it.
The {\it language} accepted by $\mathcal M$, denoted $L(\mathcal M)$, 
is the set of all infinite words  accepted by $\mathcal M$. 


Let $\mathcal M$ be a B\"uchi automaton. Our goal is to describe the atlas defined by the language $L(\mathcal M)$. We call such $[L]$ {\em B\"uchi recognisable}. We need to do some state space analysis of the automation $\mathcal M$.




A {\em loop} (in $\mathcal M$) is a path $L$ of states $s_1, \ldots, s_{n+1}$ in the state space $S$ such that $s_1=s_{n+1}$.
We say that a word $\sigma_1\ldots \sigma_n$ {\em labels} the loop $L$ if $(s_i, \sigma, s_{i+1})\in \Delta$ for $1\leq i \leq n$. 
If all symbols $\sigma_i$ are $0$ only (or $1$ only), then we say that the loop is {\em 0-loop} (or respectively, {\em 
1-loop}). 
In the loop $L$ we do not require that the states are pairwise distinct. Any state in a loop  is a {\em loop state}.


The state space $S$ can naturally be considered as a directed graph, where the edges between states are transitions of $\mathcal M$ with labels removed. Hence, we can write $S$ as a disjoint union of its strongly connected components (scc{\em s}). 
Call a scc $X\subseteq S$ {\em non-trivial} if $|X|>1$, where $|X|$ is the cardinality of $X$. Thus we write $S$ as the union
$T\cup S_1\cup \ldots \cup S_k$, where $T$ is the union of trivial strongly connected components and each $S_i$ is a nontrivial scc. Every successful run of $\mathcal M$ on every infinite string $\alpha$ eventually ends up in one of the scc{\em s} $S_i$.
Consider B\"uchi automata $\mathcal M_1$, $\ldots$, $\mathcal M_k$ such that the initial states and the state diagrams of $\mathcal M_i$ all  coincide with that of  $\mathcal M$ but $F_i=F\cap S_i$. It is clear that 
$$
(\star) \hspace{1cm} L(\mathcal M)=L(\mathcal M_1) \cup \ldots \cup L(\mathcal M_k).
$$
Hence describing the atlas $[L(\mathcal M)]$  boils  down to describing the atlases of $[L(\mathcal M_i)]$, $i=1, \ldots, k$.  Assume that $\alpha$ is accepted by $\mathcal M_i$. Then we can write $\alpha$ as $v\alpha'$ such that an accepting run of $\mathcal M_i$ after reading $v$ stays inside $S_i$. Therefore, the large scale geometry $[\alpha]$ of $\alpha$ is quasi-isometric to either $[0\alpha']$ or $[1\alpha']$ (or both). Therefore, we can postulate the following assumption till the end of this section unless otherwise stated.

\smallskip
\noindent
{\bf Postulate}. The state space $S$ can be written as $\{q_0\}\cup S'$ such that (1) $q_0$ is the only initial state and $q_0\not \in S'$, (2) $S'$ is the only nontrivial strongly connected component of $\mathcal M$, and (3) any transition from $q_0$ goes into $S'$. 

\begin{lem} \label{L:Sigma-omega} If $\mathcal M$ has a 0-loop and a 1-loop,  then the atlas $[L(\mathcal M)]$ coincides with the atlas  $[\Sigma^{\omega}]\setminus X$ for some $X\subseteq  \{[0^\omega], [1^{\omega}], [10^\omega], [01^{\omega}]\}$. 
\end{lem}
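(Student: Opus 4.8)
The plan is to use the Postulate together with the structure theory of B\"uchi automata to show two containments: $[L(\mathcal M)] \subseteq [\Sigma^\omega] \setminus X$ is automatic for any $X$, so the real content is that every large scale geometry except possibly those in $\{[0^\omega],[1^\omega],[10^\omega],[01^\omega]\}$ is realised by some $\alpha \in L(\mathcal M)$, and that the missing geometries form a subset of that four-element set. Fix an arbitrary string $\gamma \in \Sigma^\omega$ whose geometry we want to realise; I would like to find $\alpha \in L(\mathcal M)$ with $\alpha \sim_{QI} \gamma$, except in the few degenerate cases. The key tool is componentwise reducibility (Theorem of Section \ref{S:Colour-equivalence}): since $\alpha \sim_{QI} \gamma$ is equivalent to mutual componentwise reducibility, it suffices to build $\alpha \in L(\mathcal M)$ that admits a witnessing partition matching one for $\gamma$ up to a uniform bound on block lengths and colour inclusion.

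First I would use the $0$-loop and $1$-loop hypotheses. Since $S'$ is a single non-trivial scc, there is a fixed constant $p$ (say the maximum of the loop lengths and of the lengths of connecting paths inside $S'$) such that: from any state of $S'$ one can reach the base state of the $0$-loop in at most $p$ steps, traverse the $0$-loop to read a block $0^{l_0}$ for the fixed loop-length $l_0$, and similarly for $1$; and one can get back. Thus, reading any word of the form $(0^{l_0})$ or $(1^{l_1})$ can be ``simulated'' inside $S'$ at the cost of a bounded detour, and we may arrange the run to visit an accepting state infinitely often (accepting states of $\mathcal M_i$ lie in $S'$). Concretely, given $\gamma = c_1 c_2 \cdots$, I partition $\gamma$ into maximal blocks of constant colour, then for each block of colour $\sigma$ and length $\ell$ I emit a word read along a path in $S'$ that stays colour $\sigma$ as much as possible: a bounded ``steering'' prefix of length $\le p$, then $\lfloor \ell / l_\sigma \rfloor$ repetitions of the $\sigma$-loop, absorbing the remainder and the steering symbols into the neighbouring blocks. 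The resulting string $\alpha$ has a witnessing partition whose $i$-th block has the same colour as the $i$-th block of $\gamma$ (hence $u_i \sqsubseteq v_i$ trivially, both being monochromatic of the same colour) and lengths within a bounded factor, provided $\gamma$'s blocks are not too short — and here is where the exceptions enter.

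The main obstacle, and the reason for the set $X$, is handling strings $\gamma$ whose block structure is too sparse to be matched: if $\gamma$ has only finitely many $0$'s, i.e. $\gamma \sim_{QI} 1^\omega$ or $\gamma \sim_{QI} 01^\omega$ (after collapsing the finite prefix; recall quasi-isometry is insensitive to bounded prefixes and to finitely many colour changes, as exploited in Proposition \ref{Prop:elementary}), and symmetrically for finitely many $1$'s. In these cases the construction above needs a genuinely monochromatic tail inside $S'$, which exists iff $\mathcal M$ has a $\sigma$-loop — it does — so in fact $[1^\omega], [0^\omega], [01^\omega], [10^\omega]$ are each realised too \emph{unless} the initial transitions out of $q_0$ or the steering detours force an extra colour change that cannot be absorbed; whether a given one of these four is in $[L(\mathcal M)]$ depends on finer features of $\mathcal M$ (e.g. whether $q_0$ has an outgoing $0$-edge and $1$-edge into $S'$, and whether the $0$-loop and $1$-loop share a state). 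I would therefore prove: (i) every $[\gamma]$ with $\gamma$ having infinitely many $0$'s and infinitely many $1$'s is in $[L(\mathcal M)]$, by the block-matching argument above; (ii) each of the four special geometries is either realised or not, and in all cases $[L(\mathcal M)]$ differs from $[\Sigma^\omega]$ only within $\{[0^\omega],[1^\omega],[10^\omega],[01^\omega]\}$, giving the claimed $X$. Step (i) is the substantive one; the bookkeeping of steering prefixes and the verification that the emitted string is genuinely accepted (infinitely many visits to $F \cap S'$, which is guaranteed since every trip through a loop in a non-trivial scc can be routed through any prescribed state of $S'$) is routine given the Postulate.
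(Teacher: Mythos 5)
Your overall strategy is the same as the paper's: for an arbitrary target $\gamma$ containing infinitely many $0$s and infinitely many $1$s, synthesise an accepted string by alternately traversing the $0$-loop and the $1$-loop, joined by connecting words of length bounded by the size of the state space, argue the result is componentwise/colour-equivalent (hence quasi-isometric) to $\gamma$, and then observe that the only geometries that can possibly be missing from the atlas are $[0^\omega],[1^\omega],[10^\omega],[01^\omega]$, which is exactly how $X$ arises. The paper does precisely this, writing $\gamma=0^{n_0}1^{m_0}0^{n_1}1^{m_1}\ldots$ and producing an accepted $\beta=u\,0^{n_1'}v1^{m_1'}w\ldots$ with $n_i=n_i'+c_i$, $m_i=m_i'+c_i'$ and $c_i,c_i'<|S|$.

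There is, however, one concrete hole in your step (i), caused by a mis-diagnosis of where the exceptions come from. The exceptional set $X$ has nothing to do with $\gamma$ having blocks that are ``too short''; it consists only of the eventually monochromatic geometries. Strings with infinitely many \emph{short} runs of both colours are squarely inside the class that step (i) must cover --- $(01)^\omega$ is the extreme case, and being the greatest geometry (Proposition \ref{Prop:elementary}) it is forced to lie in $[L(\mathcal M)]$ by the lemma. But for a run of length $\ell<l_\sigma$ your recipe emits $\lfloor\ell/l_\sigma\rfloor=0$ traversals of the $\sigma$-loop, i.e.\ possibly no symbol of colour $\sigma$ at all in the corresponding segment: if, say, $l_1\geq 2$ and the steering words between the two loop base states contain no $1$, then applied to $\gamma=(01)^\omega$ your construction outputs a string with only finitely many $1$s, which is not quasi-isometric to $\gamma$. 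The repair is easy and is implicit in the paper's bookkeeping: traverse the $\sigma$-loop $\max(1,\lfloor\ell/l_\sigma\rfloor)$ times, and absorb the bounded excess, the remainder $<l_\sigma$ and the steering word into junction blocks of the witnessing partition, whose colour sets are $\{0,1\}$ on both sides because a colour change of $\gamma$ occurs there. A related small point: witnessing partitions for $\leq_{CR}$ require uniformly \emph{bounded} block lengths, so ``lengths within a bounded factor'' on maximal monochromatic runs is not the definition you invoked; since your discrepancies are in fact bounded additively, you should subdivide long runs letter by letter and push the bounded differences into the junction blocks. With these repairs your argument coincides with the paper's proof.
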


\begin{proof}
Assume that $L_0$ is a 0-loop and $L_1$ is a 1-loop. Let $\alpha$ be a string with $\omega$-many $0$s and
$\omega$-many $1$s. Say, $\alpha=0^{n_0}1^{m_0}0^{n_1}1^{m_1}\ldots$. We can build a  $\beta$ of the form
$$
u \ \ 0^{n_1'} v 1^{m_1'}w  \ \ 0^{n_2'} v1^{m_2'} w \  \  0^{n_3'} v1^{m_3'} w  \ \ \ldots \  \ 0^{n_i'} v1^{m_i'} w \  \ \ldots
$$
such that (1) $\beta$ is accepted by $\mathcal M$, (2) $n_i=n_i'+c_i$ and $m_i=m_i'+c_i'$, where $c_i<|S|$ and $c_i' <|S|$ for all $i$, (3) $v$ is a string that labels a path from a state $s_0$ in $L_0$ to a state $s_1$ in $L_1$, and (4) $w$ is a string that labels a path from $s_1$ to $s_0$, and (5) $v$ is a string that labels a path from the initial state to $s_0$.  This implies that $\alpha$ and $\beta$ are colour-equivalent. Hence they are quasi-isometric. 


Note that none of the strings $0^\omega$, $1^{\omega}$, $10^\omega$, $01^{\omega}$ is quasi-isometric to a string with both $\omega$-many $0$s and $1$s. Also, these four strings are pairwise not quasi-isometric. Hence, the choice of $X$ is dependent on whether or not $\mathcal M$ accepts some (or all) of these four strings. For instance, $0^{\omega}\in X$ iff $0^{\omega}$ is not accepted by $\mathcal M$. 
\end{proof}

\begin{lem}\label{L:no-0-1}
If $\mathcal M$ has no 0-loop and has no 1-loop, then the atlas $[L(\mathcal M)]$ equals the singleton atlas 
$[\{(01)^\omega\}]$.
\end{lem}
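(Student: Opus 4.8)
The plan is to show that under the Postulate together with the hypothesis ``no $0$-loop and no $1$-loop,'' every string accepted by $\mathcal M$ has both the lengths of its $0$-blocks and the lengths of its $1$-blocks universally bounded, so that the atlas $[L(\mathcal M)]$ is contained in $\mathcal X(b)$; then Theorem \ref{Thm:filters}(3) finishes the job. First I would record the following combinatorial consequence of having no $0$-loop: since $S$ is finite and there is no cycle in the state diagram all of whose edge labels are $0$, any run of $\mathcal M$ can read at most $|S|-1$ consecutive $0$'s before being forced to repeat a state along a purely-$0$ path, which would produce a $0$-loop. Hence every run, and in particular every accepting run, reads blocks of $0$'s of length at most $|S|-1$; symmetrically for $1$'s. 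Therefore for any $\alpha\in L(\mathcal M)$, writing $\alpha = 0^{n_0}1^{m_0}0^{n_1}1^{m_1}\ldots$ (with the obvious adjustment if $\alpha$ begins with $1$), we have $n_i\leq |S|-1$ and $m_i\leq |S|-1$ for all $i$.

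Next I would observe that $L(\mathcal M)$ is nonempty: under the Postulate $S'$ is a nontrivial strongly connected component reachable from $q_0$, so it contains a cycle, and any run that enters $S'$ and cycles forever through an accepting state of $S'$ witnesses a string in $L(\mathcal M)$. (If $F\cap S'=\varnothing$ then $L(\mathcal M)=\varnothing$ and the statement is vacuously true in the sense that the atlas is empty; but after the reduction preceding the Postulate each $\mathcal M_i$ has $F_i = F\cap S_i\neq\varnothing$, so this degenerate case does not arise.) Moreover every such accepted string has both $0$-blocks and $1$-blocks bounded by $|S|-1$, and—crucially—it must contain infinitely many $0$'s and infinitely many $1$'s: if it contained only finitely many $1$'s it would end in $0^\omega$, forcing an eventual $0$-loop, contradicting the hypothesis; symmetrically for $0$'s. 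So every $\alpha\in L(\mathcal M)$ lies in $\mathcal X(b)$, i.e.\ $[\alpha]=[(01)^\omega]$ by Theorem \ref{Thm:filters}(3). Since $L(\mathcal M)$ is nonempty, $[L(\mathcal M)] = \{[(01)^\omega]\}$, as claimed.

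The main obstacle, and the point that needs a careful sentence or two, is the passage from ``no $0$-loop'' to ``bounded $0$-blocks along \emph{every} run.'' One has to be precise that a $0$-loop means a cycle $s_1,\ldots,s_{n+1}=s_1$ in the state space each of whose consecutive transitions carries the label $0$; if some run read $|S|$ consecutive $0$'s, the pigeonhole principle applied to the $|S|+1$ states visited during those reads produces a repeated state, and the segment of the run between the two occurrences is exactly such a $0$-labelled cycle. Hence no run reads $|S|$ consecutive $0$'s, giving the bound $|S|-1$ on every $0$-block of every accepted string, and symmetrically on every $1$-block. Everything else is bookkeeping with Theorem \ref{Thm:filters}(3) and the nonemptiness of $L(\mathcal M)$.
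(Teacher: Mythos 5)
Your proof is correct, and it reaches the conclusion by a slightly different route than the paper. The paper's proof works directly with an accepting run $\rho$: using the pigeonhole principle it cuts $\alpha$ into consecutive segments $w_1w_2\ldots$ of length at most $|S|$ such that $\rho$ traverses a loop inside each $w_i$; since no loop can be labelled purely by $0$'s or purely by $1$'s, each $w_i$ contains both letters, so the partitions $w_1w_2\ldots$ of $\alpha$ and $(01)(01)\ldots$ of $(01)^\omega$ witness colour-equivalence, hence quasi-isometry. You instead bound the maximal monochromatic blocks (pigeonhole on $|S|+1$ states forces a repeated state, i.e.\ a monochromatic loop, if a block has length $\geq |S|$), show separately that both letters occur infinitely often (a tail $0^\omega$ or $1^\omega$ would again force a monochromatic loop), and then invoke Theorem \ref{Thm:filters}(3), i.e.\ $\mathcal X(b)=\{[(01)^\omega]\}$, as the finishing lemma. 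The underlying mechanism (finiteness of $S$ plus absence of monochromatic loops) is the same, but your reduction to Theorem \ref{Thm:filters}(3) buys you an explicit treatment of two points the paper leaves implicit: that accepted strings cannot end in a one-letter tail, and that $L(\mathcal M)$ is nonempty (so the atlas is the singleton rather than possibly empty); the latter caveat about $F\cap S'$ possibly being empty after the decomposition $(\star)$ is a genuine degenerate case worth flagging, exactly as you do. The paper's argument, on the other hand, is marginally more self-contained since it does not route through the $\mathcal X(b)$ machinery.
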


\begin{proof} Let $\alpha=\sigma_0 \sigma_1\ldots $ be a string accepted by $\mathcal M$.   Let $\rho=s_1, s_2, \ldots$ be an accepting run of $\mathcal M$ on $\alpha$. We can write $\alpha$ as $w_1 w_2\ldots$ so that (1) the length of each $w_i$ is bounded by $|S|$ and (2) the run $\rho$ along each $w_i$ has a loop; that loop within $w_i$ contains both $0$ and $1$. This implies that $(01)^{\omega}$ and $\alpha$ are colour-equivalent. Hence they are quasi-isometric. 
\end{proof}

\begin{lem}\label{L:no-1}
Suppose that $\mathcal M$ contains no 1-loop but has a 0-loop $L_0$. Then the atlas $[L(\mathcal M)]$ coincides with one of the following atlases: 
\begin{center}{$[\{0^{\omega}\}]$, $[\{10^{\omega}\}]$,  $[\{0^{\omega}, 10^{\omega}\}]$, $[\{0^{n_1} 10^{n_2}10^{n_3}\ldots \mid n_i>0\}
\setminus  \{0^{\omega}\}]$ or $[\{0^{n_1} 10^{n_2}10^{n_3}\ldots \mid n_i>0\}\cup \{0^{\omega}\}]$.}
\end{center}
\end{lem}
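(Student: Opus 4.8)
The plan is to analyze the possible shapes of strings accepted by $\mathcal M$ under the Postulate, given that $\mathcal M$ has a $0$-loop $L_0$ but no $1$-loop. The key structural consequence of ``no $1$-loop'' is that there is a uniform bound $N = |S|$ on the length of any run of consecutive $1$-transitions inside $S'$: indeed, a path of more than $|S|$ consecutive $1$-transitions would repeat a state and hence give a $1$-loop. Therefore every string $\alpha$ accepted by $\mathcal M$ has all of its $1$-blocks of length at most $N$, so the lengths of $1$-blocks are universally bounded. By the component-wise reducibility machinery of Section~\ref{S:Colour-equivalence} (and the reasoning already used in Lemma~\ref{L:no-0-1} and Lemma~\ref{L:Sigma-omega}), collapsing each bounded $1$-block to a single $1$ shows that $[\alpha]$ depends only on the pattern of $0$-blocks interrupted by occurrences of $1$; more precisely $[\alpha]$ is quasi-isometric to a string of the form $0^{n_1}10^{n_2}10^{n_3}\ldots$ (finite or infinite number of $1$s) or to $0^\omega$, $1^\omega$-type degenerate cases. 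Since $\mathcal M$ has a $0$-loop, $0^\omega$-behaviour is realizable at least along that loop; since there is no $1$-loop, $1^\omega$ and $01^\omega$ are \emph{not} accepted, ruling those out.

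Next I would case-split on how many $1$s can appear in accepted strings. If $\mathcal M$ accepts strings with infinitely many $1$s, then (using the Postulate: $S'$ is a single nontrivial scc containing the $0$-loop, and every accepting run lives eventually in $S'$) by pumping the $0$-loop one obtains, for any choice of positive block lengths $n_i$, an accepted string quasi-isometric to $0^{n_1}10^{n_2}10^{n_3}\ldots$; this gives the atlas $[\{0^{n_1}10^{n_2}\ldots \mid n_i>0\}]$, and one then checks whether $0^\omega$ itself lies in the atlas (i.e. whether a string with only finitely many $1$s, or with $1$-blocks that disappear under the collapse, can be accepted and is quasi-isometric to $0^\omega$) — this yields the two options with $\cup\{0^\omega\}$ or $\setminus\{0^\omega\}$. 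If instead $\mathcal M$ accepts only strings with finitely many $1$s, then every accepted string is quasi-isometric to either $0^\omega$ or $10^\omega$ (a single block of $1$s, which under collapse becomes one $1$, followed by $0^\omega$; the leading-$1$ case survives because $\mathcal M$ may force a $1$-transition out of $q_0$). Depending on whether $\mathcal M$ can avoid producing any surviving $1$, and whether it can produce one, the atlas is $[\{0^\omega\}]$, $[\{10^\omega\}]$, or $[\{0^\omega,10^\omega\}]$. These are exactly the five listed atlases.

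The main obstacle is the bookkeeping in the direction ``every listed atlas member is actually realized'': one must argue that, given the Postulate and the existence of the $0$-loop $L_0$ inside the unique nontrivial scc $S'$, \emph{arbitrary} positive block-length sequences $(n_i)$ can be realized up to bounded error — i.e. one needs the analogue of conditions (2)--(5) in the proof of Lemma~\ref{L:Sigma-omega}: a label $v$ from the initial state (or from a state where the previous block ended) into $L_0$, the $0$-loop itself to inflate each $0$-block to the desired length modulo $|S|$, and a label $w$ from $L_0$ back to a state from which the next $1$ can be read and $L_0$ re-entered, all while hitting an accepting state infinitely often. Bounding all the connecting segments by $|S|$ and invoking colour-equivalence then finishes it. The other delicate point is pinning down exactly which of the boundary types $0^\omega$, $10^\omega$ are in the atlas; this is a finite case analysis on whether $q_0$'s outgoing transitions and the reachability structure force or forbid a surviving leading $1$ and whether finitely-many-$1$s strings are accepted, entirely parallel to the ``$0^\omega\in X$ iff $0^\omega$ not accepted'' remark in Lemma~\ref{L:Sigma-omega}.
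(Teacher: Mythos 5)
Your proposal is correct and follows essentially the same route as the paper's proof: the $|S|$-bound on $1$-blocks coming from the absence of $1$-loops, the collapse of bounded $1$-blocks to isolated $1$s via colour-equivalence, the case split on whether $\mathcal M$ accepts strings with infinitely many $1$s (the paper phrases this as $L(\mathcal M)=V0^{\omega}$ or not), the realisation of arbitrary positive block lengths up to an error $<|S|$ by pumping the $0$-loop $L_0$ with a connecting label $v$ containing a $1$, and the finite check on the boundary types. The only small slip is your parenthetical about $0^\omega$: no string containing a $1$ is quasi-isometric to $0^\omega$ (cf.\ Proposition \ref{Prop:elementary}), so that check reduces to whether $0^\omega\in L(\mathcal M)$, which is exactly what the paper does.
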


\begin{proof}
The assumption of the lemma implies that if $\alpha$ is accepted by $\mathcal M$ then (1) $\alpha$ contains infinitely many $0$s and (2) the lengths of any sequence of consecutive $1$s occurring in $\alpha$ is bounded by $|S|$. If  $L(\mathcal M)=V0^{\omega}$ for some regular language $V\subseteq \Sigma^\star$, then the atlas $[L(\mathcal M)]$ is either
$[\{0^{\omega}\}]$ or $[\{10^{\omega}\}]$ or $[\{0^{\omega}, 10^{\omega}\}]$. \ So, we assume that $\mathcal M$ accepts a string with $\omega$ many $1$s'.  
Let $\alpha$ be any string of the form $0^{t_1} 10^{t_2}1\ldots$, where all $t_i>0$ for all $i$. We can build a string $\beta$ of the form 
$u \ \ 0^{n_1} v  \ \ 0^{n_2} v \  \  0^{n_3} v  \ \ \ldots \  \ 0^{n_i} v \  \ \ldots$
such that (1) $\beta$ is accepted by $\mathcal M$, (2) $t_i=n_i+c_i$, where $c_i<|S|$ for all $i$, and 
(3) $v$ is a string that labels a path from a state $s_0$ in $L_0$ back to $s_0$, and $v$ contains $1$.
The string $\beta$ is quasi-isometric to string $\beta'$ obtained from $\beta$ by replacing all occurrences of $v$ with $1$ and removing the prefix $u$. In turn $\beta'$ is colour equivalent to $\alpha$. Hence, $\alpha\sim_{QI} \beta$.


Now suppose that $\beta$ is accepted by $\mathcal M$ and $\beta$ has infinitely many $1$s. Then $\beta$ is of the form
$u0^{n_1}1^{m_1} 0^{n_2} 1^{m_2}\ldots$ where we have $n_i, m_i>0$ for all $i$. As noted above the numbers $m_i$ are uniformly bounded. The string $\beta$ is quasi-isometric to string $\beta'$ obtained from $\beta$ by replacing all $1^{m_i}$ 
with just $1$ and removing the prefix $u$. Clearly $\beta'$ is of the form desired. Note that $\mathcal M$ accepts the string quasi-isometric to $10^{\omega}$. Hence, if $0^{\omega}$ is accepted by $\mathcal M$ then the atlas $[L(\mathcal M)]$ coincides with $[\{0^n_1 10^{n_2}10^{n_3}\ldots \mid n_i>0\}\cup \{0^{\omega}\}]$. Otherwise,  the atlas $[L(\mathcal M)]$ equals
$[\{0^n_1 10^{n_2}10^{n_3}\ldots \mid n_i>0\} 
\setminus  \{0^{\omega}\}]$. 
\end{proof}
\begin{lem}\label{L:no-0}
Suppose that $\mathcal M$ contains no 0-loop but has a 1-loop $L_0$. Then the atlas $[L(\mathcal M)]$ coincides with one of the following atlases: 
\begin{center}{$[\{1^{\omega}\}]$, $[\{01^{\omega}\}]$, $[\{1^{\omega}, 01^{\omega}\}]$, $[\{1^{n_1} 01^{n_2}01^{n_3}\ldots \mid n_i>0\} 
\setminus  \{1^{\omega}\}]$ or\\
\  \   \   \  \  \  \   \   \  \  \ $[\{1^{n_1} 01^{n_2}01^{n_3}\ldots \mid n_i>0\}\cup \{1^{\omega}\}]$.\qed}
\end{center}
\end{lem}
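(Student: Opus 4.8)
The statement is the mirror image of Lemma~\ref{L:no-1}, obtained by interchanging the roles of the colours $0$ and $1$. The plan is therefore to invoke the symmetry of the setup under the colour swap $0\leftrightarrow 1$ and reduce to the already-proved lemma, supplying the small amount of bookkeeping needed to make this rigorous.

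First I would make the colour-swap formal. Define the involution $\tau:\Sigma^\star\cup\Sigma^\omega\to\Sigma^\star\cup\Sigma^\omega$ that replaces every occurrence of $0$ by $1$ and every occurrence of $1$ by $0$. Since $\tau$ is a colour-preserving bijection of metric spaces once we relabel the alphabet by $\tau$ itself, it induces an automorphism of the partial order $\Sigma^\omega_{QI}$: for any $\alpha,\beta$ we have $\alpha\leq_{QI}\beta$ iff $\tau(\alpha)\leq_{QI}\tau(\beta)$ (just post-compose/pre-compose a witnessing quasi-isometry with the identity on positions, which now preserves the swapped colours). Hence $[\tau(\alpha)]$ depends only on $[\alpha]$, and $[\alpha]\mapsto[\tau(\alpha)]$ is a well-defined bijection on quasi-isometry types carrying atlases to atlases.

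Next I would apply $\tau$ at the level of automata. Given the B\"uchi automaton $\mathcal M=(S,\iota,\Delta,F)$ with no $0$-loop but with a $1$-loop $L_0$, let $\mathcal M^\tau=(S,\iota,\Delta^\tau,F)$ where $\Delta^\tau=\{(s,\tau(\sigma),s')\mid(s,\sigma,s')\in\Delta\}$. Then $L(\mathcal M^\tau)=\tau(L(\mathcal M))$, and the loops of $\mathcal M^\tau$ are exactly the $\tau$-images of the loops of $\mathcal M$; in particular $\mathcal M^\tau$ has no $1$-loop and has the $0$-loop $L_0$. Also $\mathcal M^\tau$ still satisfies the Postulate, since the Postulate only constrains the state diagram, which is unchanged. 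So Lemma~\ref{L:no-1} applies to $\mathcal M^\tau$ and gives that $[L(\mathcal M^\tau)]$ is one of the five atlases listed there. Applying the order automorphism $[\alpha]\mapsto[\tau(\alpha)]$ to that conclusion, and using $[L(\mathcal M)]=[\tau(L(\mathcal M^\tau))]$ together with $\tau(0^\omega)=1^\omega$, $\tau(10^\omega)=01^\omega$, $\tau(0^{n_1}10^{n_2}10^{n_3}\cdots)=1^{n_1}01^{n_2}01^{n_3}\cdots$, we obtain precisely the five atlases in the statement of Lemma~\ref{L:no-0}.

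There is essentially no obstacle here; the only point requiring a little care is verifying that $\tau$ genuinely induces an automorphism of $(\Sigma^\omega_{QI},\leq_{QI})$ rather than merely a set bijection, i.e.\ that quasi-isometries are preserved under the colour swap. This is immediate from the definition because a colour-preserving quasi-isometry $f:\alpha\to\beta$ is, as a function on positions, simultaneously a colour-preserving quasi-isometry $f:\tau(\alpha)\to\tau(\beta)$ once both alphabets are relabelled by $\tau$; the metric data and the constants $A,B$ are untouched. With that observation in hand the rest is purely mechanical translation of Lemma~\ref{L:no-1}.
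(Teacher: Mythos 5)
Your proposal is correct and matches the paper's intent: the paper states Lemma~\ref{L:no-0} with no written proof, treating it as immediate from Lemma~\ref{L:no-1} by the $0\leftrightarrow 1$ symmetry, which is exactly the reduction you carry out. Your explicit verification that the colour swap $\tau$ preserves quasi-isometries, commutes with relabelling the automaton, and leaves the Postulate intact is just the bookkeeping the paper leaves implicit.
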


\noindent
We now remove our postulate put on the state space of B\"uchi automata. Using equality $(\star)$ given above, and the lemmas, we obtain the following characterisation result:

\begin{thm} \label{Thm:Atlas}
Any B\"uchi recognisable atlas $[L]$ 
is a union from the following list of atlases:
\begin{itemize}
\item $[\Sigma^{\omega}]\setminus X$, where $X\subseteq  \{[0^\omega], [1^{\omega}], [10^\omega], [01^{\omega}]\}$.
\vspace{-0.5mm}
\item  $[\{0^{n_1} 10^{n_2}10^{n_3}\ldots \mid n_i>0\} 
\setminus  \{0^{\omega}\}]$, \ $[\{1^{\omega}\}]$,
\vspace{-0.5mm}
\item $[\{0^{n_1} 10^{n_2}10^{n_3}\ldots \mid n_i>0\}\cup \{0^{\omega}\}]$,  \ $[\{0^{\omega}\}]$,
\vspace{-0.5mm}
\item     $[\{01^{\omega}\}]$,  $[\{10^{\omega}\}]$,  $[\{0^{\omega}, 10^{\omega}\}]$, $[\{1^{\omega}, 01^{\omega}\}]$,
\item $[\{1^{n_1} 01^{n_2}01^{n_3}\ldots \mid n_i>0\}
\setminus  \{1^{\omega}\}]$,
\vspace{-0.5mm}
\item $[\{1^{n_1} 01^{n_2}01^{n_3}\ldots \mid n_i>0\}\cup \{1^{\omega}\}]$. \qed
\end{itemize}
\end{thm}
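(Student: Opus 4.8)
The plan is to combine the decomposition $(\star)$, namely $L(\mathcal{M}) = L(\mathcal{M}_1)\cup\cdots\cup L(\mathcal{M}_k)$ where each $\mathcal{M}_i$ has its accepting states confined to a single nontrivial scc $S_i$, with the four structural lemmas (Lemmas~\ref{L:Sigma-omega}, \ref{L:no-0-1}, \ref{L:no-1}, \ref{L:no-0}). Since $[L(\mathcal{M})] = \bigcup_i [L(\mathcal{M}_i)]$, it suffices to show that each $[L(\mathcal{M}_i)]$ is a \emph{single} atlas drawn from the displayed list; the union over $i$ then lands in the set of finite unions of list items, which is exactly the claim. So the whole theorem reduces to: for an automaton satisfying the Postulate, $[L(\mathcal{M})]$ is one of the listed atlases.

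Next I would verify that the Postulate can genuinely be assumed for each $\mathcal{M}_i$. This is the argument sketched in the text: if $\alpha$ is accepted by $\mathcal{M}_i$, write $\alpha = v\alpha'$ where the accepting run enters $S_i$ after reading $v$; then $[\alpha]$ equals $[0\alpha']$ or $[1\alpha']$ (prepending a single letter to $\alpha'$, which differs from $\alpha$ only by the bounded-length prefix $v$, hence is quasi-isometric). One must check that replacing $q_0$'s behaviour by ``read one arbitrary letter, then jump into $S_i$'' does not change the atlas --- this follows since the new automaton accepts exactly $\{0\alpha',1\alpha' : v\alpha'\in L(\mathcal{M}_i)\}$ up to the $\sim_{QI}$-equivalence just described, and prefix changes of bounded length preserve $\sim_{QI}$ (an immediate consequence of componentwise reducibility / colour-equivalence as used repeatedly in the lemmas). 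With the Postulate in force, one does the obvious case split on whether $\mathcal{M}$ has a $0$-loop and whether it has a $1$-loop: all four cases are covered by Lemmas~\ref{L:Sigma-omega}--\ref{L:no-0} respectively, and each lemma's conclusion is a single entry (or a short explicit disjunction of entries) from the theorem's list. One bookkeeping point: Lemma~\ref{L:no-1} and Lemma~\ref{L:no-0} each yield one of \emph{five} possible atlases, and these five are exactly the ones itemised in the theorem (e.g. $[\{0^\omega\}]$, $[\{10^\omega\}]$, $[\{0^\omega,10^\omega\}]$, and the two ``$0^{n_1}10^{n_2}10^{n_3}\ldots$'' atlases), so nothing new needs proving here --- just assembling the list.

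The main obstacle is not any single deep step but the care required to see that the union of the per-component atlases does not produce anything outside the stated list, and that the Postulate-reduction is faithful. Concretely one should confirm that the list is \emph{closed under the unions that actually arise}: e.g. $[\Sigma^\omega]\setminus X$ for varying $X\subseteq\{[0^\omega],[1^\omega],[10^\omega],[01^\omega]\}$ absorbs the four ``corner'' singletons and pairs whenever a component contributing an $[\Sigma^\omega]\setminus X$ atlas coexists with a component contributing, say, $[\{0^\omega\}]$; and that unions of the remaining small atlases among themselves stay representable as unions of list items (which is trivial, since each is already a list item). I would state this as a short closing paragraph: ``by $(\star)$, $[L] = \bigcup_{i=1}^k [L(\mathcal{M}_i)]$, each summand is a single atlas from the list by the Postulate-reduction and Lemmas~\ref{L:Sigma-omega}--\ref{L:no-0}, hence $[L]$ is a finite union from the list,'' and be done.
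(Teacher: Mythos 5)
Your proposal is correct and follows essentially the same route the paper intends: the paper offers no separate argument for Theorem~\ref{Thm:Atlas} beyond invoking the decomposition $(\star)$, the Postulate reduction, and Lemmas~\ref{L:Sigma-omega}--\ref{L:no-0}, which is exactly what you spell out. The only point worth flagging (a defect of the paper's list rather than of your argument) is that the case of Lemma~\ref{L:no-0-1} yields the singleton atlas $[\{(01)^{\omega}\}]$, which does not literally appear among the itemised atlases, so strictly speaking it should be added to the list.
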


\noindent
We obtain the following complexity-theoretic result solving the equality problem for B\"uchi recognisable 
atlases.  
In comparison, the equality problem for B\"uchi automata is  PSPACE complete.

\begin{cor}\label{Cor:Atlas-equality}
There exists an algorithm that, given  B\"uchi automata $\mathcal A$ and $\mathcal B$, decides if the atlases 
$[L(\mathcal A)]$ and $[L(\mathcal B)]$ coincide. Furthermore, the algorithm runs in linear time on the size of the input automata.
\end{cor}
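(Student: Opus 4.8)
The plan is to reduce the equality problem for atlases to a finite case analysis based on the structural characterisation in Theorem~\ref{Thm:Atlas}. The key observation is that, although there are infinitely many B\"uchi automata, there are only finitely many possible B\"uchi recognisable atlases: each is a union of members of the explicit list in Theorem~\ref{Thm:Atlas}, and that list is finite. So the algorithm need only compute, from a given automaton $\mathcal M$, \emph{which} of these finitely many ``building block'' atlases contribute to $[L(\mathcal M)]$, and then compare the resulting two finite sets of building blocks.

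First I would make precise how to read off the contributing building blocks from $\mathcal M$. Following the development preceding the theorem, decompose the state space of $\mathcal M$ into strongly connected components and, using $(\star)$, split $\mathcal M$ into the automata $\mathcal M_1,\dots,\mathcal M_k$ (one per nontrivial scc reachable with an accepting state). For each $\mathcal M_i$ one checks, by elementary graph reachability in the transition graph: (a) whether the relevant scc contains a $0$-loop and/or a $1$-loop; (b) which of the finitely many ``exceptional'' strings $0^\omega, 1^\omega, 10^\omega, 01^\omega$ are accepted; and (c) when exactly one of the two loop-types is absent, whether a string with infinitely many occurrences of the minority letter is accepted, and whether the all-majority-letter string is accepted. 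Each of these is a reachability/emptiness question on a graph of size $O(|\mathcal M|)$, decidable in linear time. Lemmas~\ref{L:Sigma-omega}, \ref{L:no-0-1}, \ref{L:no-1}, \ref{L:no-0} then tell us exactly which building-block atlas each $\mathcal M_i$ yields, so $[L(\mathcal M)]$ is determined as a union of a subset of the list in Theorem~\ref{Thm:Atlas}.

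Next I would record the (finite, once-and-for-all) containment relations among the building-block atlases in the list, so that a union of building blocks can be put into a canonical form, e.g.\ the inclusion-minimal set of building blocks whose union it is; note for instance that $[\{0^\omega\}]$ and $[\{10^\omega\}]$ are absorbed by $[\{0^{n_1}10^{n_2}\ldots\}\cup\{0^\omega\}]$, and that the various $[\Sigma^\omega]\setminus X$ atlases refine one another as $X$ shrinks. Since there are only finitely many building blocks, all such relations can be precomputed. Then $[L(\mathcal A)]=[L(\mathcal B)]$ iff the canonical forms computed for $\mathcal A$ and $\mathcal B$ agree, which is an $O(1)$ comparison after the linear-time extraction. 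Putting the pieces together gives the linear-time algorithm.

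The main obstacle I expect is purely bookkeeping rather than conceptual: one must verify that the pairwise containments and disjointnesses among the building-block atlases are exactly as claimed (in particular that the five atlases in Lemma~\ref{L:no-1}, and their mirror images in Lemma~\ref{L:no-0}, are genuinely distinct and that no unexpected coincidences arise when unioning an $[\Sigma^\omega]\setminus X$ block with one of the ``sparse'' blocks), and that the graph-theoretic tests in step one can indeed all be run within a single linear-time pass over $\mathcal M$ (the scc decomposition is linear, and each loop/acceptance test reduces to reachability, but one should confirm that the number of tests is bounded independently of $|\mathcal M|$, which it is, being bounded by a constant times the number of scc's). Neither difficulty is deep, so the result follows.
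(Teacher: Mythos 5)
Your proposal is correct and follows essentially the same route as the paper: decompose $\mathcal M$ into the components $\mathcal M_1,\dots,\mathcal M_k$ via a linear-time scc decomposition, use Lemmas~\ref{L:Sigma-omega}--\ref{L:no-0} (with linear-time loop and acceptance tests for $0^\omega,1^\omega,10^\omega,01^\omega$) to read off which building-block atlases from Theorem~\ref{Thm:Atlas} each component contributes, and then compare the two resulting finite descriptions. Your extra step of normalising the union to a canonical form via precomputed containments among the finitely many building blocks is a sensible explicit treatment of a comparison the paper leaves implicit.
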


\begin{proof}
Let $\mathcal M$ be a B\"uchi automaton. Our goal is to find atlases from the list provided in Theorem \ref{Thm:Atlas} such that the union of these atlases coincides with the atlas $[L(\mathcal M)]$. For this, we decompose $\mathcal M$ into $k$ automata $\mathcal M_1$, $\ldots$, $\mathcal M_k$ so that the equality $(\star)$ holds and each $\mathcal M_i$ satisfies the postulate. This decomposition can be done in linear time on the size of $\mathcal M$ (e.g. using Tarjan's algorithm that decomposes a directed graph into strongly connected components). Now for each $\mathcal M_i$, we check the assumptions of Lemma \ref{L:Sigma-omega} through Lemma \ref{L:no-0}. This can also be done in linear time.  


As an example, the assumptions of Lemma \ref{L:Sigma-omega} can be checked as follows. Assume that the automaton under consideration is $\mathcal M_i$. In order to check if $\mathcal M_i$ contains a 0-loop, we  remove all transitions labeled with $1$ from the state diagram of $\mathcal M_i$. This gives us a directed graph whose edges are transitions labelled with $0$. In this graph we check if there is a loop. If there  exists a loop, then $\mathcal M_i$ has a 0-loop. Otherwise $\mathcal M_i$ does not have a 0-loop. Similarly, to check if $\mathcal M_i$ contains a 1-loop, we  remove all transitions labeled with $1$. This gives us a directed graph. If there is a loop in this directed graph,  then $\mathcal M_i$ has a 1-loop;  Otherwise, not. To find the set $X$ from (the statement of) the Lemma we just need to check which of the strings $0^{\omega}$, $1^{\omega}$, $10^{\omega}$, $01^{\omega}$ is accepted by $\mathcal M_i$. This can also be done in linear time on the size of $\mathcal M_i$. 


The argument shows that we can find, in linear time on size of $\mathcal M$, the  atlases from the list provided in Theorem \ref{Thm:Atlas} such that the union of these atlases is the atlas $[L(\mathcal M)]$. Thus, given  B\"uchi automata $\mathcal A$ and $\mathcal B$ we can decide in linear time if  
$[L(\mathcal A)]=[L(\mathcal B)]$. 
\end{proof}
\vspace{-4mm}

\section{The quasi-isometry problem}\label{S:complexity}

The quasi-isometry problem $QIP$ is to find if, given strings $\alpha$ and $\beta$, there is a quasi-isometry from $\alpha$ to $\beta$: 
$$
QIP=\{(\alpha, \beta) \mid \alpha, \beta \in \Sigma^{\omega} \ \& \ [\alpha]\leq_{QI} [\beta]\}.
$$
Let $\alpha$ and $\beta$ be coloured metric spaces and $A,B$ be constants. Our goal is to construct a rooted tree $T(\alpha, \beta, A,B)$ such that the following properties hold:\\
1. The tree $T(\alpha, \beta, A,B)$ is finitely branching and computable in $\alpha$ and $\beta$. In particular, if $\alpha$ and $\beta$ are computable then so is $T(\alpha, \beta, A,B)$. \\
2. For any $n$ one can effectively compute, with an oracle for $\alpha$ and $\beta$, the number of nodes in the tree at distance $n$ from the root.\\
3. The tree $T(\alpha, \beta, A,B)$ is infinite iff there exists an $(A,B)$-quasi-isometric map from $\alpha$ into $\beta$.\\
4. There is a bijection between $(A,B)$-quasi-isometric maps from $\alpha$ to $\beta$ and the set $[T(\alpha, \beta, A,B)]$ of all infinite paths of the tree $T(\alpha, \beta, A,B)$.

\smallskip

Informally, the nodes of  $T(\alpha, \beta, A,B)$ are finite partial functions that can potentially  be extended 
to $(A,B)$-quasi-isometric maps from $\alpha$ to $\beta$.  Formally:\\
1. The root of the tree $T(\alpha, \beta, A,B)$ is the empty set $\emptyset$, that is, nowhere defined partial function.\\
2. Let $x$ be a node of the tree $T(\alpha, \beta, A,B)$ constructed so far, and $f_x$ be the partial function associated with $x$.
We identify $x$ and $f_x$. We assume that
\begin{enumerate}
\item  $x$ is at distance $n$ from the root, and $f_x=\{(0,i_0), \ldots, (n-1, i_{n-1})\}$. 
\item For all $k,m$ from the domain of $f_x$ we have that $f_x$ is colour preserving and 
the $(A,B)$-quasi-isometry condition is satisfied:
\begin{center}{$
(1/A)\cdot d(k,m)-B < d(f_x(k),f_x(m))$\\
$< A\cdot d(k,m)+B.$}
\end{center}
\end{enumerate}
We list all the extension $\phi$ of $f_x$ such that the domain of $\phi$ is the set $\{0,1,\ldots, n\}$ and $\phi$ satisfies the 
the $(A,B)$-quasi-isometry condition. The children of $f_x$ will be all these functions $\phi$ extending $f_x$. 

\smallskip

In Part 2 of the construction the number of extensions $\phi$ of  $f_x$ is finite since these extensions must satisfy the 
quasi-isometry condition. Moreover, the number of these extensions is computed with an oracle for $\alpha$ and $\beta$.
If such extensions $\phi$  of $f_x$ do not exist, then $x$ is a leaf. 

\begin{lem} \label{L:Tree-lemma}
The tree  $T(\alpha, \beta, A,B)$ is infinite iff there is an $(A,B)$-quasi-isometric map from $\alpha$ into $\beta$.
There is a bijection between all infinite paths of $T(\alpha, \beta, A,B)$ and $(A,B)$-quasi-isometric maps from $\alpha$ to $\beta$.
\end{lem}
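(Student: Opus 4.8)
The plan is to prove both assertions by setting up an explicit bijection between $(A,B)$-quasi-isometries $f\colon\alpha\to\beta$ and the infinite branches of $T=T(\alpha,\beta,A,B)$, and then reading off the first assertion from that bijection together with König's Lemma. In one direction, to a quasi-isometry $f$ I would associate the branch whose node at level $n$ is the restriction $f\restriction\{0,1,\dots,n-1\}$: each such restriction is colour preserving and satisfies the distortion inequality on its domain (being a restriction of $f$), so it is a node of $T$, and $f\restriction\{0,\dots,n\}$ is precisely one of the extensions listed as a child of $f\restriction\{0,\dots,n-1\}$ in Part~2 of the construction; hence these restrictions form an infinite branch. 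In the other direction, from an infinite branch $x_0\subseteq x_1\subseteq x_2\subseteq\cdots$ (with $\operatorname{dom}f_{x_n}=\{0,\dots,n-1\}$) I would take the union $f=\bigcup_n f_{x_n}$, a total function on $\omega$; colour preservation, and the distortion inequality for any fixed pair of arguments, are already witnessed by the node $x_n$ once $n$ is large enough, so $f$ has them for all arguments and all pairs.

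These two assignments are visibly mutually inverse: the union of the restrictions of $f$ is $f$, and conversely, since $\operatorname{dom}f_{x_n}$ is exactly $\{0,\dots,n-1\}$, restricting the union back to $\{0,\dots,n-1\}$ returns $f_{x_n}$. This establishes the claimed bijection. The first assertion then follows: if a quasi-isometry exists, its associated branch yields infinitely many nodes, so $T$ is infinite; conversely, by the observation following the construction every node has only finitely many children, so $T$ is a finitely branching rooted tree, and if it is infinite König's Lemma supplies an infinite branch, whose associated union is an $(A,B)$-quasi-isometry from $\alpha$ to $\beta$.

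The single delicate point --- and the place I would spend the most care --- is the coarse-density clause of Definition~\ref{Defn:QIM} (``every $y\in\beta$ lies within distance $A$ of $f(\alpha)$''), which the distortion inequality alone does not force (for instance $f(i)=i+1000$ satisfies the inequality on $\alpha=\beta=0^{\omega}$ with $A=1$, $B=0$, yet misses all of $[0,999]$). To make the union of a branch coarsely dense I would strengthen the node condition at level $n$ so that it additionally requires a suitable initial segment of $\beta$, of length tending to infinity with $n$, to be $A$-covered by $\{f_{x}(0),\dots,f_{x}(n-1)\}$; this keeps the branching finite, guarantees that every infinite branch yields a genuine quasi-isometry, and is still satisfied by the restrictions of any genuine quasi-isometry, since such an $f$ already $A$-covers each finite initial segment of $\beta$ using only finitely many arguments. (If, instead, ``$(A,B)$-quasi-isometric map'' is meant in this section only in the sense of a colour-preserving map obeying the displayed distortion inequality, no such strengthening is needed.) With this point settled, the remaining argument is the routine bookkeeping of restrictions, unions, and König's Lemma described above.
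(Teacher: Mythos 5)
Your argument is essentially the paper's own proof: the paper also sends a quasi-isometry $g$ to the path of its restrictions to initial segments and sends an infinite path to the limit (union) of its nodes, with K\"onig's Lemma supplying the branch in the finitely-branching tree for the ``infinite $\Rightarrow$ map exists'' direction. The coarse-density caveat you flag is real and is in fact glossed over by the paper, whose proof simply declares the union of a path to be an $(A,B)$-quasi-isometry even though the node condition only enforces colour preservation and the distortion inequality, not the clause of Definition~\ref{Defn:QIM} requiring every point of $\beta$ to lie within $A$ of the image; so as literally stated the bijection holds only under the distortion-only reading, or after a strengthening of the node condition like the one you propose. Either repair leaves the intended application (Theorem~\ref{Thm:QI-complexity}) intact, since a colour-preserving map with the $(A,B)$ distortion bounds is a genuine quasi-isometry after enlarging the constants, and the $\Sigma_3^0$ statement quantifies over all constants anyway.
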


\begin{proof}
If $g:\alpha\rightarrow \beta$ is an $(A,B)$-quasi-isometric map from $\alpha$ to $\beta$ then the sequence
$\{g_i\}_{i\in \omega}$, where $g_i$ is the restriction of $g$ on the initial segment $\{0,\ldots, i\}$, is an infinite path through the
tree $T(\alpha, \beta, A,B)$. If $x_1, x_2, \ldots$ is an infinite path through the tree $T(\alpha, \beta, A,B)$ then $f$ defined as the limit of the sequence $f_{x_1}, f_{x_2}, \ldots$ is an $(A,B)$-quasi-isometry from $\alpha$ to $\beta$.
\end{proof}

The next theorem that solves the quasi-isometry problem. 
\begin{thm}\label{Thm:QI-complexity}
The following statements are true:
\begin{enumerate}
\item Given quasi-isometric strings $\alpha$ and $\beta$, there exists a quasi-isometry between $\alpha$ and $\beta$ computable in the halting set relative to $\alpha$ and $\beta$.
\item The quasi-isometry problem between computable strings
is a complete $\Sigma_3^0$-set.\qed
\end{enumerate}
\end{thm}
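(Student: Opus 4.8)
The plan is to establish the two parts separately, with most of the work going into the $\Sigma^0_3$-completeness claim.

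For part (1), I would use the tree $T(\alpha,\beta,A,B)$ constructed above. If $\alpha \leq_{QI} \beta$, then $[\alpha] \leq_{QI} [\beta]$, so there exist constants $A,B$ and an $(A,B)$-quasi-isometry; by Lemma \ref{L:Tree-lemma} the tree $T(\alpha,\beta,A,B)$ is infinite. Since the tree is finitely branching and computable from $\alpha$ and $\beta$ (property 1), and the number of nodes at each level is computable from $\alpha,\beta$ (property 2), a standard application of K\"onig's lemma lets an oracle for the halting problem relative to $\alpha \oplus \beta$ compute an infinite path: at each stage, query whether the subtree below a given node is infinite (this is a $\Pi^0_1(\alpha\oplus\beta)$, hence $\emptyset'$-relative, question) and descend into a child with an infinite subtree. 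The limit of the finite partial functions along this path is the desired quasi-isometry, computable in $(\alpha \oplus \beta)'$. One first needs to search for suitable $A,B$: since membership in $QIP$ is witnessed by \emph{some} pair of constants, the halting oracle can also locate a working pair.

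For part (2), upper bound: $(\alpha,\beta) \in QIP$ iff $\exists A \exists B$ such that the tree $T(\alpha,\beta,A,B)$ is infinite. ``$T(\alpha,\beta,A,B)$ is infinite'' is a $\Pi^0_2$ statement (for every $n$ there is a node at level $n$ — and by property 2 the existence of a level-$n$ node is decidable relative to $\alpha,\beta$, so in fact this is $\Pi^0_1$ relative to the oracle, making the whole thing $\Pi^0_1$ in the computable case... let me be careful: for computable $\alpha,\beta$ the predicate ``there is a node at level $n$'' is computable, so ``$T$ is infinite'' is $\Pi^0_1$); prefixing $\exists A \exists B$ gives a $\Sigma^0_2$ form. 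That is too strong — so the honest bound must come differently: the catch is that the bijection in Lemma \ref{L:Tree-lemma} shows infiniteness of the tree is equivalent to existence of an $(A,B)$-quasi-isometry, but the tree being computable in $\alpha,\beta$ for \emph{computable} inputs means infiniteness is genuinely $\Pi^0_1$, so $QIP$ restricted to computable strings is $\Sigma^0_2$. This contradicts the claimed $\Sigma^0_3$-completeness, so I must be mis-reading the construction: presumably the branching bound or the level-counting is only $\alpha,\beta$-computable in a way that, when $\alpha,\beta$ are \emph{indices} of computable strings rather than oracles, costs two extra jumps (evaluating $\alpha(i)$ from an index is $\Delta^0_2$-ish via the halting problem, and the uniformity needed pushes ``$T$ infinite'' up to $\Pi^0_2$, hence $QIP$ to $\Sigma^0_3$). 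So the upper bound argument is: writing $QIP$ for computable strings with inputs as programs, ``$\alpha(i) = \sigma$'' is $\Sigma^0_1 \wedge \Pi^0_1$, the tree is computable relative to $\emptyset'$, ``node exists at level $n$'' becomes $\Sigma^0_2$, ``$T$ infinite'' becomes $\Pi^0_3$... and one must check the quantifier bookkeeping carefully to land exactly on $\Sigma^0_3$ rather than higher.

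\textbf{Hardness} is where the real content lies. I would reduce a known $\Sigma^0_3$-complete set — e.g. $\mathrm{Cof} = \{e : W_e \text{ is cofinite}\}$, or more conveniently $\{e : \lim_n f(e,n) = \infty\}$ for a suitable computable $f$ — to $QIP$. The idea is to encode the $\Sigma^0_3$ predicate into the block structure of a string: given $e$, build a computable string $\alpha_e$ whose $0$-block lengths are controlled by the approximation to the $\Sigma^0_3$ condition, and compare it (via $\leq_{QI}$) to a fixed target string, exploiting the structure theorems of Section \ref{S:order} — in particular the anti-chains and the chain of type $\mathbb{Z}$ from Theorem \ref{P:infinite-chain}, and the characterization of when one block-structured string quasi-isometrically embeds into another (block lengths must be comparable up to a multiplicative constant, by Lemma \ref{L:pre-image}). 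Concretely: arrange that $\alpha_e \leq_{QI} \beta$ holds iff infinitely often the blocks of $\alpha_e$ ``stabilize to the right size,'' which is a $\Pi^0_2$-in-the-limit, i.e. $\Sigma^0_3$, event. The main obstacle, and the step I expect to be delicate, is designing the encoding so that the quasi-isometry constants cannot ``cheat'': one must ensure that no bounded distortion $A$ can compensate for the block-length mismatches that occur when the $\Sigma^0_3$ condition fails, which requires the gaps to grow fast enough (super-polynomially in block index, exactly as in the $\lim = +\infty$ computations in the proof of Theorem \ref{P:infinite-chain}), while simultaneously keeping $\alpha_e$ uniformly computable from $e$. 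Getting a matching lower \emph{and} upper bound pinning the complexity at exactly $\Sigma^0_3$ — neither $\Sigma^0_2$ nor $\Pi^0_3$ — is the crux.
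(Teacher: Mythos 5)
Your part (1) follows the paper's route exactly (the tree $T(\alpha,\beta,A,B)$, K\"onig's lemma, and a path computable from the jump, as in the Jockusch--Soare argument the paper cites), and it is fine, though the search for $(A,B)$ is unnecessary: the constants may be fixed non-uniformly, since the strings are given to be quasi-isometric. The genuine gap is in part (2), and it is twofold. First, your upper-bound discussion never arrives at an argument: you derive a $\Sigma^0_2$ bound, declare it in tension with the theorem, and then conjecture that presenting $\alpha,\beta$ by indices rather than oracles costs extra jumps. It does not: for a total computable $\alpha$ given by an index, $\alpha(i)$ is still computable, so the tree remains computable and your bookkeeping ends where it started --- ``one must check the quantifier bookkeeping carefully'' is not a proof. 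The paper's own count is the crude one: $\alpha\leq_{QI}\beta$ iff $\exists A\exists B\,[T(\alpha,\beta,A,B)$ is infinite$]$ by Lemma \ref{L:Tree-lemma}, with ``infinite'' bounded by $\Pi^0_2$ (every level is nonempty), which gives the $\Sigma^0_3$ upper bound. If you insist on the sharper $\Pi^0_1$ count of infiniteness you must first settle finite branching at the root: nothing in the two-sided inequality constrains $f(0)$, so either the coarse-density clause must be built into the nodes or $f(0)$ must be quantified separately --- precisely the point your sketch glosses over.

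Second, and more seriously, the hardness reduction --- the substantive half of part (2) --- is absent from your proposal: you name a candidate complete set ($\mathrm{Cof}$, or $\lim_n f(e,n)=\infty$), gesture at a block-length encoding against a fixed target using the growth-rate arguments of Theorem \ref{P:infinite-chain}, and explicitly defer the crux (``designing the encoding so that the quasi-isometry constants cannot cheat''). The paper does something different and concrete: it reduces $\mathrm{Fin}=\{i: W_i \text{ is finite}\}$, building $\alpha_i,\beta_i$ in stages so that each new element enumerated into $W_i$ destroys $(A_{s-1},B_{s-1})$-quasi-isometry while keeping the strings $(A_s,B_s)$-quasi-isometric for larger constants; if $W_i$ is finite the constants stabilise and the strings are quasi-isometric, while if $W_i$ is infinite every fixed pair of constants is eventually defeated. (Be aware that the paper cites $\mathrm{Fin}$ as $\Sigma^0_3$-complete, whereas classically $\mathrm{Fin}$ is $\Sigma^0_2$-complete; so your instinct to reach for a $\mathrm{Cof}$-type set is the right one if genuine $\Sigma^0_3$-hardness is the target, but then the construction must make quasi-isometry depend on a cofiniteness/limit condition rather than on ``only finitely many injuries,'' and that construction is exactly what your proposal leaves open.) As it stands, neither the upper bound nor the lower bound of part (2) is actually proved in your proposal.
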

\begin{proof}

Assume that $\alpha$ and $\beta$ are $(A,B)$-quasi-isometric. The tree $T(\alpha, \beta, A,B)$ is infinite by the Lemma above.
Such a tree has a path computable in the halting set for the tree \cite{Soare}. The tree, as we constructed above, is computable in $\alpha$ and $\beta$. This proves the first part.

We prove the second part of the theorem. By Lemma \ref{L:Tree-lemma} computable strings $\alpha$ and $\beta$ are quasi-isometric if and only if there exists are constants $A, B$ such that the computable tree $T(\alpha, \beta, A,B)$ is infinite. This is a $\Sigma_3^0$-statement.
Hence, the quasi-isometry problem between computable strings $\alpha$ and $\beta$ is a $\Sigma_3^0$-set.
To prove that the problem is complete, we reduce a $\Sigma_3^0$-complete problem to the quasi-isometry problem. 
Below is an informal explanation of the reduction.

Let $W_0, W_1, \ldots$ be a standard enumeration of all c.e. sets. It is known that the set $Fin=\{i\mid W_i$ is finite$\}$ is a $\Sigma_3^0$-complete problem \cite{Rogers}. For each $W_i$ we need to construct $\alpha_i$ and $\beta_i$ two infinite strings such that $W_i$ is finite if and only if $\alpha_i$ and $\beta_i$ are quasi-isometric. For this we start enumerating $W_i$ by stages $1$, $2$, $\ldots$. At stage $0$, we consider $\alpha_{i,0}$ and $\beta_{i,0}$ finite prefixes that can be extended to a $(1,1)$-quasi-isometry and set $A_0=B_0=1$. By stage $s$ we will have a finite prefixes $\alpha_{i,s-1}$ of $\alpha_i$ and $\beta_{i,s-1}$ of $\beta_i$  built. If at stage $s$ the enumeration of $W_i$ outputs a new element, then we start extending $\alpha_s$ and $\beta_s$ so that the following holds. \  If $W_i$ never increases its size from stage $s$ on then $\alpha$ and $\beta$ are $(A_s, B_s)$-quasi-isometric but not $(A_{s-1}, B_{s-1})$-quasi-isometric, where $A_{s}>A_{s-1}$ and $B_s>B_{s-1}$. This can easily be achieved in two steps. In the first step, one  ensures that a large part of $\alpha$ contains a long consecutive sequence of elements of one colour, and  
the same positions of $\beta$ have another colour. This will guarantee that $\alpha_i$ and $\beta_i$ are not $(A_{s-1}, B_{s-1})$-quasi-isometric. In the second step, one ensures that those long sequences in the first step do not conflict with $(A_s, B_s)$-quasi-isometry and one can continue on extending $\alpha_{i,s}$ and $\beta_{i,s}$ so that they are $(A_s, B_s)$-quasi-isometric. Thus, if $W_i$ is infinite then there is an infinite sequence $s_1$, $s_2$, $\ldots$ of increasing stages at which the enumeration of $W_i$ increases the size of $W_i$. Hence, from stage $s_i$ on  $\alpha$ and $\beta$ are not $(A_{s_i}, B_{s_i})$-quasi-isometric. This implies that $W_i$ is infinite then $\alpha_i$ and $\beta_i$  are  not quasi-isometric.  
\end{proof}

\section{Asymptotic cones}\label{Cones}

Let $\mathcal F$ be a non-principal ultra-filter on $\omega$. Recall that a non-principal ultra-filter is a non-empty 
maximal 
subset of $P(\omega)$  that satisfies the following properties:
(1) No finite set belongs to $\mathcal F$. So,  $\mathcal F$ does not contain the empty set; 
(2) For all $A, B\in \mathcal F$ we have $A\cap B\in \mathcal F$; 
(3) For all $A, B\subseteq \omega$ if $A\in \mathcal F$ and $A\subseteq B$ then $B \in \mathcal F$. 

Every ultrafilter $\mathcal F$ has the following two properties: (1) For every set $A\subseteq \omega$, either $A\in \mathcal F$ or $\omega\setminus A\in \mathcal F$. (2) For all pairwise disjoint sets $A, B\subseteq \omega$ if $A\cup B\in \mathcal F$ then either $A\in \mathcal F$ or $B\in \mathcal F$. 

Let $\alpha\in \Sigma^{\omega}$ and  $s$ be a strictly increasing monotonic function on $\omega$ with $s(0)=1$. Call the mapping $s(n)$ a {\em scaling factor}.   Define the following sequence of metric spaces:
$$
X_{0,\alpha}=(\alpha, d_0), \ X_{1,\alpha}=(\alpha,d_1), \ldots, X_{n,\alpha}=(\alpha, d_n), \ldots  
$$
where  $d_n(i,j)=|i-j|/s(n)$. \ Informally, we move $\alpha$ away from us by scaling the metric down. For instance, 
in the metric space $X_{0,\alpha}$ the distance from $0$ to $s(n)$ equals $s(n)$, while in $X_{n, \alpha}$ the distance from $0$ to $s(n)$ is 
$1$. We assume that the domains of these metric spaces are disjoint pairwise:  $X_{i,\alpha}\cap X_{j,\alpha}=\emptyset$ for all $i\neq j$. 


Let ${\bf a}=(a_n)_{n\geq 0}$ be a sequence, where each $a_n\in X_{n, \alpha}$ for all $n$. Call the sequence 
{\em bounded} 
if there is a constant $L$ 
such that
 the set 
$\{n \mid d_n(0, a_n) < L \}\in \mathcal F$. Let ${\bf B}(\mathcal F, s)$ be the set of all bounded sequences. 
Say that two bounded sequences {\bf a} and {\bf b} are $\mathcal F$-equivalent, written 
${\bf a}\sim_{\mathcal F} {\bf c}$, if $\{n \mid d_n(a_n,b_n)\leq \epsilon\}\in \mathcal F$ for every $\epsilon$.

\begin{defn}\label{Dfn:Cone}
Define the {\bf asymptotic cone of $\alpha$}, written $Cone(\alpha, \mathcal F, s)$, with respect to the scaling function $s(n)$ and the ultra-filter $\mathcal F$ to be the factor set
$$
{\bf B}(\mathcal F, s) /\sim_{\mathcal F}
$$
equipped with the following metric $D$ and colour $C$:
\begin{enumerate}
\item $D({\bf a}, {\bf b} )=r$ iff 
$\{n \mid r-\epsilon \leq d_n(a_n,b_n)\leq r+\epsilon\}\in \mathcal F$ for all $\epsilon>0$. 
\item $C({\bf a})=\sigma$ iff  $\{n\mid a_n$ has colour $\sigma\}\in \mathcal F$. 
\end{enumerate}
\end{defn}
\noindent
This definition implies that some elements {\bf a} of the asymptotic cone might have several colours.

\begin{lem}\label{Lem:closed}
The set 
$\{ {\bf a} \in Cone(\alpha, \mathcal{F},s) \ | \ \sigma \in C({\bf a}) \}$  
is closed for each $\sigma \in \Sigma$,
\end{lem}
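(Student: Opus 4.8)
The plan is to show that the complement
$U_\sigma = \{\,\mathbf{a}\in Cone(\alpha,\mathcal F,s)\mid \sigma\notin C(\mathbf a)\,\}$
is open in the metric $D$. Fix $\mathbf a=(a_n)_n$ with $\sigma\notin C(\mathbf a)$. By the definition of the colour function on the cone, $\{n\mid a_n \text{ has colour }\sigma\}\notin\mathcal F$, so by the ultrafilter property its complement $S=\{n\mid a_n \text{ does not have colour }\sigma\}\in\mathcal F$. For infinite strings each position carries a unique colour, so $S=\{n\mid \alpha(a_n)\neq\sigma\}$. The key observation is a purely local, scale-dependent fact: if $\alpha(i)\neq\sigma$, then every position $j$ within distance $1$ of $i$ in the metric $d_n$ — i.e.\ with $|i-j|<s(n)$ — lies in an interval of length $2s(n)$ around $i$; but this interval may well contain $\sigma$. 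So a naive "ball of radius $1$" argument fails, and this is exactly where the main obstacle sits.

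The way around it is to exploit that the \emph{scaling factor $s(n)$ grows}. First I would choose, for each $n\in S$, the distance $\delta_n = \min\{\,|a_n-j|/s(n) : j\in\omega,\ \alpha(j)=\sigma\,\}$, the rescaled distance from $a_n$ to the nearest $\sigma$-coloured position (set $\delta_n=+\infty$ if there is no such $j$, e.g.\ if $\sigma$ does not occur in $\alpha$; that case makes $U_\sigma$ the whole cone and is trivial). If $\liminf_{n\in S,\,n\to\infty}\delta_n = 0$ were false along an $\mathcal F$-large set, we would be done directly; the genuine difficulty is that $\delta_n$ can tend to $0$ along $S$. So instead I would argue contrapositively: suppose $\mathbf a$ is a limit of points $\mathbf b^{(k)}$ with $\sigma\in C(\mathbf b^{(k)})$, i.e.\ $D(\mathbf a,\mathbf b^{(k)})\to 0$, and derive $\sigma\in C(\mathbf a)$, which shows $U_\sigma^c=\{\mathbf a\mid\sigma\in C(\mathbf a)\}$ is closed. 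For each $k$, the set $T_k=\{n\mid b^{(k)}_n\text{ has colour }\sigma\}=\{n\mid \alpha(b^{(k)}_n)=\sigma\}\in\mathcal F$, and $D(\mathbf a,\mathbf b^{(k)})<1/k$ gives an $\mathcal F$-large set of $n$ with $d_n(a_n,b^{(k)}_n)<2/k$, i.e.\ $|a_n-b^{(k)}_n|<(2/k)s(n)$. Intersecting, there is an $\mathcal F$-large set $R_k$ of $n$ on which there is a $\sigma$-coloured position within $d_n$-distance $2/k$ of $a_n$.

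The remaining step is a diagonalisation across $k$ using a standard ultrafilter trick. Define $g:\omega\to\omega$ by $g(n)=$ the largest $k\le n$ such that $n\in R_1\cap\cdots\cap R_k$ (and $g(n)=1$ if no such $k$). Then for each fixed $k$, the set $\{n\mid g(n)\ge k\}\supseteq R_1\cap\cdots\cap R_k$ minus a finite set, hence lies in $\mathcal F$; so $g(n)\to\infty$ along $\mathcal F$. Now for each $n$ pick a $\sigma$-coloured position $c_n$ with $d_n(a_n,c_n)<2/g(n)$ (using that $n\in R_{g(n)}$), and let $\mathbf c=(c_n)_n$. Since $g(n)\to\infty$ $\mathcal F$-ae, we get $\{n\mid d_n(a_n,c_n)<\epsilon\}\in\mathcal F$ for every $\epsilon>0$, and one checks $\mathbf c$ is bounded because $\mathbf a$ is; hence $\mathbf c\sim_{\mathcal F}\mathbf a$ and $\{n\mid c_n\text{ has colour }\sigma\}=\omega\in\mathcal F$, so $\sigma\in C(\mathbf c)=C(\mathbf a)$. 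This is the desired conclusion, and the main obstacle throughout is precisely engineering the function $g$ so that the per-$k$ approximating sets combine into a single $\mathcal F$-large set witnessing membership of $\sigma$ in $C(\mathbf a)$.
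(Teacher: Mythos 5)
Your proof is correct and takes essentially the same route as the paper's: both arguments diagonalise across the approximating colour-$\sigma$ points to produce a representative $\mathbf{c}\sim_{\mathcal F}\mathbf{a}$ whose coordinates carry colour $\sigma$ on an $\mathcal F$-large set, your $g(n)$-based bookkeeping being an explicit version of the paper's diagonal choice. (One cosmetic fix: when $g(n)$ defaults to $1$ with $n\notin R_1$, choose $c_n$ arbitrarily; the colour and closeness sets are then $\mathcal F$-large rather than all of $\omega$, which is all the argument needs.)
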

{\em Proof.}
Let {\bf a} be a limit point of the set above, ${\bf a_n}= (a_{nm})_{m \geq 0} \in Cone(\alpha, \mathcal{F},s)$ be a point of colour $\sigma$ such that $D({\bf a},{\bf a_n}) \leq 2^{-n}$. Also let $r = D({\bf 0},{\bf a})$, $r_n = D({\bf 0},{\bf a_n})$, and $i_m$ be a number such that $i_m \leq m$ and $|r-d_n(0,a_{i_mm})| \leq |r-d_n(0,a_{jm})|$ for all $j \leq m$. Then ${\bf b} = (a_{i_nn})_{n\geq0} \sim {\bf a}$ and $\sigma \in C({\bf b}) $: indeed, let $A_n = \{m \in \mathbb{N} \ |  \ |r_n -d_m(0,a_{nm})| \leq 2^{-n}\}$. Then for given $n$,
\begin{eqnarray*}
         && \{m \in \mathbb{N} \ |  \ |r -d_m(0,b_{m})| \leq 2^{-(n+1)}\} \\
&\supseteq& \{m \in \mathbb{N} \ |  \ |r -d_m(0,a_{i_mm})| \leq 2^{-(n+1)}\} \cup [n,\infty) \\
&\supseteq& A_{n+2} \cup [n,\infty) \in \mathcal{F}.
\end{eqnarray*}
Also $\{m \in \mathbb{N} \ |  \ |r -d_m(0,a_{m})| \leq 2^{-(n+1)}\} \in \mathcal{F}$, hence
\[
\{m \in \mathbb{N} \ |  \ d_m(a_{m},b_{m}) \leq 2^{-n}\} \in \mathcal{F}.\qed
\]

\begin{thm}
For eventually periodic word $\alpha=uv^{\omega}$  all  asymptotic cones $Cone(\alpha, \mathcal F, s)$ equal
the coloured metric
space $(\mathbb R_{\geq 0}; d, C)$ such that every real $r>0$ has colours of $v$ and the real $0$ has colours of both $u$ and $v$.
\end{thm}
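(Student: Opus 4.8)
\emph{Proof approach (plan).} The plan is to exhibit an explicit colour‑compatible isometry $\Phi$ from $Cone(\alpha,\mathcal F,s)$ onto $(\mathbb R_{\geq 0};d,C)$. Given a bounded sequence ${\bf a}=(a_n)_{n\geq 0}$ representing a cone point, the real sequence $(a_n/s(n))_n$ is bounded on a set in $\mathcal F$, so a standard ultralimit argument yields a unique real $r=\lim_{\mathcal F}a_n/s(n)$, characterised by $\{n:|a_n/s(n)-r|\leq\epsilon\}\in\mathcal F$ for every $\epsilon>0$; put $\Phi({\bf a})=r$. First I would verify that $\Phi$ is distance preserving: if $\Phi({\bf a})=x$ and $\Phi({\bf b})=y$, intersecting the sets $\{n:|a_n/s(n)-x|\leq\epsilon/2\}$ and $\{n:|b_n/s(n)-y|\leq\epsilon/2\}$ (both in $\mathcal F$) gives $\{n:\,|x-y|-\epsilon\leq|a_n-b_n|/s(n)\leq|x-y|+\epsilon\}\in\mathcal F$ for all $\epsilon$, so $D({\bf a},{\bf b})=|x-y|$ by Definition \ref{Dfn:Cone}. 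Taking $x=y$ this shows ${\bf a}\sim_{\mathcal F}{\bf b}$ iff $\Phi({\bf a})=\Phi({\bf b})$, so $\Phi$ is a well‑defined injection (and, incidentally, $D$ is a genuine metric). Surjectivity is immediate: for $r\geq 0$ the sequence $a_n=\lfloor r\,s(n)\rfloor$ is bounded and $a_n/s(n)\to r$ ordinarily, hence $\mathcal F$‑converges to $r$.

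The colour analysis is where the content lies. Write $q=|u|$, $p=|v|$; then $\alpha(i)\in Cl(v)$ for every $i\geq q$, and $Cl(\alpha)=Cl(u)\cup Cl(v)$. Suppose $\Phi({\bf a})=r>0$. Since $a_n/s(n)\to_{\mathcal F}r>0$ and $s(n)\to\infty$, the set $\{n:\,a_n\geq(r/2)s(n)\}\cap\{n:\,s(n)\geq 2q/r\}$ belongs to $\mathcal F$, and on it $a_n\geq q$, hence $\alpha(a_n)\in Cl(v)$; intersecting with $\{n:\alpha(a_n)=\sigma\}$ shows $C({\bf a})\subseteq Cl(v)$. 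Conversely, given $\sigma\in Cl(v)$, choose $j$ with $v(j)=\sigma$ and let $a_n$ be the largest integer that is $\leq r\,s(n)$, is $\geq q$, and is $\equiv q+j\pmod p$; then $|a_n-r\,s(n)|<p$ forces $\Phi({\bf a})=r$, while $\alpha(a_n)=\sigma$ for all large $n$, so $\sigma\in C({\bf a})$. Thus, ranging over representatives, the colours realised at a cone point over $r>0$ are exactly $Cl(v)$. For $r=0$: trivially $C({\bf a})\subseteq Cl(\alpha)=Cl(u)\cup Cl(v)$, and each $\sigma\in Cl(u)\cup Cl(v)$ is realised by a constant sequence $a_n\equiv i_0$, where $i_0<q$ is a position of $\alpha$ of colour $\sigma$ if $\sigma\in Cl(u)$, and $i_0=q+j$ with $v(j)=\sigma$ otherwise; then $a_n/s(n)\to 0$ and $C({\bf a})=\{\sigma\}$. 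So the colours realised at $0$ are exactly those of $u$ together with those of $v$, as claimed.

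Since $\Phi$ is defined in the same way for every non‑principal ultra‑filter $\mathcal F$ and every scaling factor $s$, this proves the assertion for \emph{all} asymptotic cones at once. The statement ``$Cone(\alpha,\mathcal F,s)$ equals $(\mathbb R_{\geq 0};d,C)$'' should be read, as the remark after Definition \ref{Dfn:Cone} makes clear, in the sense that $\Phi$ is a bijective isometry, every colour of every representative of $\Phi^{-1}(r)$ lies in $C(r)$, and every colour of $C(r)$ is attained by some representative. I expect the one step with genuine combinatorial content to be the realisability half for $r>0$: keeping $a_n$ within distance $p$ of $r\,s(n)$ while forcing it into the residue class mod $p$ that carries colour $\sigma$; all remaining steps are routine ultralimit manipulations. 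The resulting picture is also consistent with Lemma \ref{Lem:closed}, since any colour of $u$ absent from $v$ is realised only at the closed singleton $\{0\}$.
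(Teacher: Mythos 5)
Your proposal is correct and takes essentially the same route as the paper: the heart of the matter---realising each colour of $v$ at a point over $r>0$ by picking, near $r\,s(n)$, a position of that colour inside the periodic part (your residue-class choice is exactly the paper's ``within one period of $r\,s(n)$'' construction)---is identical. The additional material you supply (the ultralimit map $\Phi$ identifying the underlying space with $\mathbb{R}_{\geq 0}$, and the containment $C({\bf a})\subseteq Cl(v)$ for $r>0$) correctly fills in steps the paper treats as immediate.
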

\begin{proof} Let $\mathcal F$ be any filter and $s:\omega\rightarrow \omega$ be a monotonic function. It is easy to see that $0$ has all colours of both $u$ and $v$. Let $r \in \mathbb R$ and $r>0$. For any $\sigma$ present in $v$ there is a sequence $(a_n)_{n\geq 0}$ such that $a_n\in X_{n, \alpha}$, the colours of $a_n$ all are $\sigma$ and for every $\epsilon>0$ there is an $N_{\epsilon}$ such that for all $n>N_{\epsilon}$ we have \ 
$|d_n(a_n,0)-r|\leq \epsilon$. \ 
Indeed, for each $n \in \mathbb{N}$ such that $|u|/s(N_\epsilon) < r$ let $a_n$ be any element which has the colour $\sigma$ and satisfies $(|u|+i|v|)/s(n)\leq d_n(0,a_n) < (|u|+(i+1)|v|)/s(n)$, where $i$ is the unique integer which satisfies $(|u|+i|v|)/s(n)\leq r < (|u|+(i+1)|v|)/s(n)$. $|d_n(a_n,0)-r| \leq |v|/s(n)$. This inequality implies the claim. 
\end{proof}
\noindent
We now prove a theorem akin to a known result in geometric group theory stating that there are non-quasi-isometric groups that realise the same cones. For instance, in \cite{DP1} \cite{DP2} it is proved that all asymptotic cones of non-elementary hyperbolic group are isometric.

\begin{thm}
There are non-quasi-isometric strings $\alpha, \beta \in \{0,1\}^{\omega}$, 
a scale factor $s(n)$, and filter $\mathcal F$ such that  
$Cone (\alpha, \mathcal F, s)=Cone(\beta, \mathcal F, s)$ and $\alpha, \beta \in \mathcal X(1)$.
\end{thm}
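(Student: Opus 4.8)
The plan is to build $\alpha$ and $\beta$ from the trivial string $1^\omega$ into which we insert isolated $0$s at positions that grow so fast that, after any fixed scaling $s(n)$, the $0$s either disappear to infinity or collapse onto the integer lattice in a way that is insensitive to the precise spacing. Concretely, I would take a rapidly increasing sequence $p_0<p_1<\dots$ (say $p_k = 2^{2^k}$) and set $\alpha$ to be $1^\omega$ with a single $0$ placed at each position $p_k$, and $\beta$ to be $1^\omega$ with a single $0$ placed at each position $q_k$ for a second rapidly increasing sequence $q_k$ chosen so that $p_k/q_k \to \infty$ (e.g. $q_k = 2^{2^{k-1}}$, so $p_k = q_{k+1}$). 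Since in both strings every $1$-block containing no inserted $0$ is arbitrarily long while the (trivial) $0$-blocks have length $1$, both strings lie in $\mathcal X(1)$; this also makes the non-quasi-isometry argument available, because by Lemma~\ref{L:pre-image} a quasi-isometry $f:\alpha\to\beta$ would have to send the $k$-th $0$ of $\alpha$ near the $k'$-th $0$ of $\beta$ for some eventually-constant shift $k'-k$, and then the ratio of consecutive $1$-block lengths $p_{k+1}-p_k$ versus $q_{k'+1}-q_{k'}$ is unbounded (exactly the argument used in Theorem~\ref{P:infinite-chain} and Theorem~\ref{Thm:filters}), giving a contradiction. So $\alpha\not\sim_{QI}\beta$.

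Next I would choose the scale factor and the ultrafilter. Take $s(n)$ to grow strictly between the two sequences — for instance interleave, setting $s(n)$ so that for each $k$ there is an $n$ with $s(n)$ lying between $q_k$ and $p_k$ — or, more robustly, just take $s$ to be any strictly increasing function with $s(0)=1$ and observe that the key point is that the gaps of inserted $0$s are super-exponential in both strings. Then I claim that for *every* non-principal $\mathcal F$ and this $s$, both cones are the coloured real half-line $(\mathbb R_{\ge 0}; d, C)$ in which $C(r)=\{1\}$ for all $r$ (and possibly $C(0)\supseteq\{1\}$, with $0$ perhaps also carrying colour $0$ depending on $\mathcal F$ and the first inserted position). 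The reason: a bounded sequence ${\bf a}=(a_n)$ with $d_n(0,a_n)$ near some $r>0$ means $a_n\approx r\cdot s(n)$; the set of $n$ for which $r\cdot s(n)$ happens to be one of the finitely-many-below-it inserted $0$-positions is, for each fixed tolerance, a set whose complement contains a cofinite set (since the inserted $0$s have density zero — indeed their gaps tend to infinity), hence is not in $\mathcal F$. So every cone point of positive norm has colour exactly $\{1\}$ in both cones, and the underlying metric space is $\mathbb R_{\ge 0}$ by the standard argument (as in the eventually-periodic cone theorem just proved). Thus $Cone(\alpha,\mathcal F,s)=Cone(\beta,\mathcal F,s)$.

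The main obstacle I anticipate is the colour computation at and near $0$, and making the "$0$s have density zero, hence avoid $\mathcal F$" step fully rigorous for an arbitrary non-principal ultrafilter: one must argue that for a fixed $r>0$ and $\epsilon>0$, the set $\{n : [\,(r-\epsilon)s(n),(r+\epsilon)s(n)\,]$ contains some inserted-$0$ position$\}$ is small, which uses that for large $n$ this interval, while long, sits strictly between two consecutive super-exponentially spaced $0$-positions for all but finitely many $n$ — so the set is actually finite and hence not in $\mathcal F$. Once that is nailed down, the equality of the two cones is immediate because the description "$\mathbb R_{\ge 0}$ with every positive point coloured $\{1\}$" does not refer to the sequences $p_k,q_k$ at all. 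A secondary subtlety is ensuring the cones agree *including* the colouring of $0$; the cleanest fix is to also put the very first inserted $0$ at the *same* position in $\alpha$ and $\beta$ (e.g. position $2$), so that $C(\mathbf 0)$ is literally the same set in both cones, and only the later, cone-invisible $0$s differ. With that adjustment the proof goes through cleanly.
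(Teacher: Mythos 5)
Your plan hinges on the claim that, for \emph{every} non-principal ultrafilter $\mathcal F$ and an essentially arbitrary strictly increasing $s$, no point $r>0$ of either cone receives the sparse colour, because the set $\{n : [(r-\epsilon)s(n),(r+\epsilon)s(n)] \text{ contains an inserted position}\}$ is finite. Density zero of the inserted positions does not give this, and the claim is false as stated. For $s(n)=n+1$ the interval contains $p_k$ for every $n$ in a window of length roughly $2\epsilon p_k/r^2$ around $p_k/r$, so the set is infinite for every $r>0$; even for the coordinated choice $s(n)=2^n$, $p_k=2^{2^k}$, the set is infinite whenever $r=2^i$ is an integer power of $2$ (it is $\{2^k-i : k\in\omega\}$). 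Once these sets are infinite, whether they lie in $\mathcal F$ is a genuine property of $\mathcal F$, and it can hold for the $\alpha$-set while failing for the corresponding $\beta$-set (the two families of bad sets are essentially disjoint), so ``both cones are the featureless half-line for every $\mathcal F$'' is wrong and the equality of cones is not established. To rescue the approach you must either (i) coordinate $s$, $(p_k)$, $(q_k)$ so that both ratio sets $\{p_k/s(n)\}$ and $\{q_k/s(n)\}$ are locally finite in $(0,\infty)$ — e.g. all doubly exponential with exponent differences tending to infinity, say $p_k=2^{2^{2k}}$, $q_k=2^{2^{2k+1}}$, $s(n)=2^{3\cdot 2^{2n}}$ — or (ii) keep a general $s$ but \emph{choose} $\mathcal F$ to avoid the countably many bad sets; your write-up does neither, and this is the missing core of the argument. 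This is also where you diverge from the paper, which goes the opposite way: it fixes $s(n)=2^n$, lets the cone of $\alpha$ retain colour-$1$ points at all integer powers of $2$, and puts all the work into building a specific ultrafilter (generated by shifted blocks $[n_k,2n_k]$) intended to make the sparser string $\beta$ yield the same cone.

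Two further defects, both real though more easily repaired. First, your illustrative choice $q_k=2^{2^{k-1}}$ gives the \emph{same} set of inserted positions as $p_k=2^{2^k}$ (indeed $p_k=q_{k+1}$), so $\alpha=\beta$ and the non-quasi-isometry claim collapses; you need genuinely different position sets, and the asserted ``eventually constant index shift'' of a putative quasi-isometry is not automatic — it must be replaced by an explicit growth/ratio contradiction in the style of Theorem \ref{P:infinite-chain}, which you gesture at but do not carry out. Second, with background colour $1$ and isolated $0$'s your strings have unbounded $1$-blocks and bounded $0$-blocks, hence lie in $\mathcal X(0)$ rather than in $\mathcal X(1)$ as the theorem demands; swapping the roles of $0$ and $1$ throughout fixes this (and brings your construction closer to the paper's, whose strings have isolated $1$'s).
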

\begin{proof}
Our scale factor is $s(n)=2^n$.  The string $\alpha$ is such that $\alpha(i)=1$ if and only if $i$ is a power of $2$.  
Let $\mathcal F$ be any ultra-filter. Consider the cone $Cone(\alpha, \mathcal F, s)$. It is note hard to see that
the cone coincides with the coloured metric space $(\mathbb R_{\geq 0}; d, C)$, where $d$ is the usual metric on reals, all reals have colour $0$, and a real $r$ has also colour $1$ iff $r$ is an integer power of $2$ or $r=0$. 

So, we need to construct $\beta$ and a filter $\mathcal F$ such that $\alpha$ and $\beta$ are not quasi-isometric but the cone
$Cone(\beta, \mathcal F, s)$ coincides with the above coloured metric space $(\mathbb R_{\geq 0}; d, C)$.
Let $\beta$ be a string of the form
$$
0^{m_0} 1 0^{m_1} 10^{m_2} 10^{m_3}\ldots 0^{m_k} 1 0^{m_{k+1}}\ldots
$$
such that the following properties hold:
\begin{enumerate}
\item The positions where $\beta$ contains $1$ are powers of two, and let us list these positions as the  sequence $2^{n_0}, 2^{n_1}, 2^{n_2}, \ldots$.
\item For each $i\in \omega$ we have $2n_i< n_{i+1}$.
\item The strings $\alpha$ and $\beta$ are not quasi-isometric.
\end{enumerate}
Now we construct our filter $\mathcal F$. Let $X$ be the set 
$$
\{n_0,n_0+1,\ldots, 2n_0, \ldots, n_k, n_k+1, \ldots, 2n_k, \ldots\}.
$$
Call the sequences of the form $n_k, n_k+1, \ldots, 2n_k$ {\em blocks} of the set $X$. Note that the lengths of the blocks is unbounded. For each integer $i$, consider the set $X+i=\{x+i\mid x\in X, \  x+i\geq 0 \}$. 
Since the sizes of the blocks is unbounded,  the collections of sets is a bases of a filter:
$$
X, \  X+1, \ X-1, \ X+2, \ X-2,  \ldots, X+i, \  X-i, \ldots
$$
Let $\mathcal F$ be the ultrafilter that contains the collection.

Consider the cone $Cone(\beta, \mathcal F, s)$. We can view the domain of this cone as the set $\mathbb R_{\geq 0}$. 
It is not too hard to note that every real $r$ gets colour $0$ in the cone $Cone(\beta, \mathcal F, s)$. Now we need to show 
that $r$ gets colour $1$ if and only if $r$ is an integer power of $2$. It suffices to show that if $r$ is an integer power of $2$ then
$r$ has colour $1$. But, this is implied by the fact that the sets $X+i$ are in $\mathcal F$. Indeed, assume that $r$ is of the form $2^i$. Then, since $X+i \in \mathcal F$, from the definition of $Cone(\beta, \mathcal F, s)$ we see that $r$ has colour $1$.

To finish the proof, we just need to show that we can select the sequence $n_0, n_1, \ldots$ such that $\alpha$ and $\beta$ are not quasi-isometric. To see this, let $n_{k+1} = 2n_k +2$: then we can show that $\alpha$ and $\beta$ are not quasi-isometric in the same manner as the proof of Theorem \ref{P:infinite-chain}.
\end{proof}

In contrast to the theorem above, we show that the same coloured metric space can produce two asymptotic cones that are
not quasi-isometric. This is similar to the result in geometric group theory where one group can realise two non homeomorphic asymptotic cones  \cite{Thomas}.

\begin{thm}
There is a sequence $\alpha$, scaling factors $s_0$ and $s_1$ so that for all ultrafilters $\mathcal{F}$  cones  
$Cone(\alpha, \mathcal F, s_0)$ and $Cone(\alpha, \mathcal F, s_1)$ are not quasi-isometric.
\end{thm}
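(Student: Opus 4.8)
The plan is to produce a single string $\alpha\in\{0,1\}^{\omega}$ organised in ``levels'', so that one rescaling exposes a fixed pattern of \emph{long} $1$-intervals while a second rescaling exposes a fixed \emph{discrete} pattern of isolated $1$'s, and then to show that these two coloured copies of $\mathbb R_{\geq0}$ cannot be quasi-isometric. Concretely, fix the tower $T_0=1$, $T_{n+1}=2^{2^{T_n}}$, and reserve the positions $[T_n,T_{n+1})$ for level $n$. For even $n$ put $\alpha(i)=1$ exactly on $\bigcup_{0\le k\le T_n}[T_n4^k,\,2T_n4^k]$ (a union of $1$-blocks of length $\approx T_n4^k$ separated by $0$-blocks of comparable length) and $\alpha(i)=0$ elsewhere in the level; for odd $n$ put $\alpha(i)=1$ exactly at the positions $T_n4^k$, $0\le k\le T_n$, and $\alpha(i)=0$ elsewhere in the level; set $\alpha(0)=0$. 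The estimate $2T_n4^{T_n}<T_{n+1}$ guarantees each level fits inside its block of positions. As scaling factors take $s_0(n)=T_{2n}$ and $s_1(0)=1$, $s_1(n)=T_{2n-1}$ for $n\geq1$; both are strictly increasing and equal $1$ at $0$.

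Fix a non-principal ultra-filter $\mathcal F$. I claim $Cone(\alpha,\mathcal F,s_0)$ is $(\mathbb R_{\geq0},d)$ with colour-$1$ set $\{0\}\cup\bigcup_{k\geq0}[4^k,2\cdot4^k]$, and $Cone(\alpha,\mathcal F,s_1)$ is $(\mathbb R_{\geq0},d)$ with colour-$1$ set $\{0\}\cup\{4^k:k\geq0\}$ (the colour-$0$ pattern, which holds everywhere outside the interiors of these $1$-intervals, is not needed below). The point is that the rescaled picture stabilises: at scale $T_{2n}$, level $2n$ rescales \emph{exactly} to the blocks $[4^k,2\cdot4^k]$ with $0\le k\le T_{2n}$, every level below $2n$ is squeezed into $[0,\varepsilon_n]$ with $\varepsilon_n\to0$, and every level above $2n$ is pushed past $[0,R]$ for each fixed $R$ once $n$ is large; this is exactly where $T_{n+1}=2^{2^{T_n}}$ dominating the span $2T_n4^{T_n}$ of a level is used. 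Hence for each fixed $r>0$ and $\varepsilon>0$ the set $\{n:\ \text{some }1\text{-position of }\alpha\text{ lies within }\varepsilon T_{2n}\text{ of }rT_{2n}\}$ is finite or cofinite, so the colour of the cone point $r$ is forced regardless of $\mathcal F$; the analogous computation at scale $T_{2n-1}$ handles $s_1$. Thus the colour-$1$ set of the $s_0$-cone contains intervals of length $4^k\to\infty$, whereas the colour-$1$ set of the $s_1$-cone is a closed discrete set whose consecutive gaps $3\cdot4^k$ tend to $\infty$.

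It now suffices to show there is no colour-preserving quasi-isometry $f$ from $Cone(\alpha,\mathcal F,s_0)$ \emph{into} $Cone(\alpha,\mathcal F,s_1)$, for then the cones are not $\sim_{QI}$. Suppose $f$ has constants $A,B$. On the colour-$1$ set $S_1=\{0\}\cup\{4^k:k\geq0\}$ of the target, put an edge between any two points at distance $\leq A+B$; since the gaps of $S_1$ tend to $\infty$ there is a bound $R_0=R_0(A,B)$ on the diameter of every connected component of this graph (all but finitely many points of $S_1$ are isolated, and the rest form a single bounded chain). Choose $k$ with $4^k/A-B>R_0$, and let $x_i=4^k+i$ for $0\le i\le N:=4^k$; then $x_0,\dots,x_N$ are colour-$1$ points of the source (they lie in $[4^k,2\cdot4^k]$) with $|x_i-x_{i+1}|=1$. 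Colour-preservation gives $f(x_i)\in S_1$, the quasi-isometry upper bound gives $d(f(x_i),f(x_{i+1}))\leq A+B$, and the lower bound gives $d(f(x_0),f(x_N))\geq 4^k/A-B>R_0$. But then $f(x_0),\dots,f(x_N)$ all lie in one connected component of the graph, whose diameter is $\leq R_0$ --- a contradiction. Since nothing in this argument depended on $\mathcal F$, we get $Cone(\alpha,\mathcal F,s_0)\not\sim_{QI}Cone(\alpha,\mathcal F,s_1)$ for every non-principal ultra-filter $\mathcal F$, which is the assertion.

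The step I expect to be the main obstacle is the stabilisation claim: verifying that for each fixed $r>0$ and $\varepsilon>0$ the relevant index set is finite or cofinite, so that the colourings of both cones are completely determined and do not vary with $\mathcal F$. This comes down to the tower growth $T_{n+1}=2^{2^{T_n}}$ comfortably dominating the span $2T_n4^{T_n}$ of a single level, together with the elementary facts that the levels below the ``focus level'' collapse onto the cone point $0$ while the levels above it recede to infinity; everything else is bookkeeping.
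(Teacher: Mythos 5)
Your proof is correct in substance, but it takes a genuinely different route from the paper. The paper's construction is a single string of alternating mega-blocks $0^{p_n}1^{q_n}$ with $p_{n+1}=n\sum_{i\le n}(p_i+q_i)$ and $q_{n+1}=n\bigl(\sum_{i\le n}(p_i+q_i)+p_{n+1}\bigr)$, and the two scalings $s_0(n)=\sum_{i\le n}(p_i+q_i)$, $s_1(n)=s_0(n)+p_{n+1}$ are chosen so that at scale $s_j(n)$ the whole window of rescaled distances $(1,n)$ sits inside a single $j$-block; hence in one cone every real $r>1$ carries only colour $0$ and in the other only colour $1$, and non-quasi-isometry is then immediate (no colour-preserving map can send far-apart colour-$1$ points into a bounded set), with independence from $\mathcal F$ coming, as in your argument, from the relevant index sets being cofinite. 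You instead arrange for \emph{both} cones to have unbounded colour-$1$ sets but of different coarse shape (intervals $[4^k,2\cdot4^k]$ of growing length versus the sparse set $\{4^k\}$ with gaps $3\cdot4^k\to\infty$), and you then need the extra chain/bounded-component argument on the target's colour-$1$ set to exclude a colour-preserving quasi-isometry; that argument is sound, and it shows the finer point that cones can be distinguished by the large-scale geometry of a colour class even when both colours occur cofinally, at the price of a longer stabilisation analysis that the paper's block design avoids. Two small repairs you should make: the base case of your construction overflows (for $T_0=1$ one has $2T_04^{T_0}=8>T_1=4$, so level $0$ collides with level $1$; start the levels at $n=1$ or restrict $k$ so that $2T_n4^k<T_{n+1}$), and the phrase ``every level below $2n$ is squeezed into $[0,\varepsilon_n]$'' is accurate only for the \emph{colour-$1$ positions} of those levels (their reserved position ranges extend up to $T_{2n}$, i.e.\ up to rescaled distance $1$); since your non-quasi-isometry argument uses only the colour-$1$ sets away from $[0,1]$, this is harmless once stated correctly.
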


\begin{proof}
We construct $\alpha$ and two scaling factors $s_0$ and $s_1$ such that in the asymptotic cone $Cone(\alpha, \mathcal F, s_0)$ all reals $r>1$ have colour $0$, and in  the asymptotic cone $Cone(\alpha, \mathcal F, s_1)$ all reals $r>1$ have colour $1$ only. These two metric spaces are clearly not quasi-isometric. 


Define $p_n$ and $q_n$, inductively, by: $p_1=q_1=1$, and 
\begin{eqnarray*}
p_{n+1} &=& n \sum_{i=1}^n (p_i + q_i) , \\
 q_{n+1} & = & n \biggl( \sum_{i=1}^n (p_i + q_i) + p_{n+1} \biggr)
\end{eqnarray*}
Now consider the string $\alpha = 0^{p_1} 1^{q_1} 0^{p_2} 1^{q_2} \ldots$, and define the following two scale functions: 
\begin{itemize}
\item $s_0(n) = \sum_{i=1}^n (p_i + q_i)$, and 
\item $s_1(n) = \bigl( \sum_{i=1}^n (p_i + q_i) + p_{n+1} \bigr)$.
\end{itemize}
Also let $d_{j,n}(x,y)= |y-x|/s_j(n) \ (j=0,1)$. Then all points $a$ in $(\alpha, d_{j,n})$ such that $1 < d_{j,n}(0,a) < n $ have the colour $j$. Hence for any $r>1$, the colour of $r$ is $0$ in $Cone(\alpha, \mathcal F, s_0)$ and $1$ in $Cone(\alpha, \mathcal F, s_1).$ 
\end{proof}

The next theorem shows that the asymptotic cones of all Martin-L\"of random strings \cite{Nies} coincide when the scaling factor is a computable function.
\begin{thm}\label{Thm:RandomCone}
If $\alpha$ is Martin-L\"of random, then for all computable scaling factors $s$ and ultra-filters $\mathcal F$, 
the asymptotic cone $Cone(\alpha, \mathcal F, s)$ coincides with the space $(\mathcal R_{\geq 0}; d, C)$, where 
every real has all colours from $\Sigma$. 
\end{thm}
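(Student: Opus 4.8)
The plan is to show that for any fixed symbol $\sigma\in\Sigma$, any point $\mathbf a=(a_n)_{n\ge0}$ of the cone has colour $\sigma$; since colours are inherited from the $a_n\in X_{n,\alpha}$ this amounts to showing that in every scaled copy $X_{n,\alpha}$, near \emph{every} prescribed distance $r$ there is a position carrying colour $\sigma$, at a scale that shrinks to $0$ as $n\to\infty$. Concretely, given $\mathbf a$ with $D(\mathbf 0,\mathbf a)=r$, I want a competitor sequence $\mathbf b=(b_n)$ with $b_n$ of colour $\sigma$, $d_n(0,b_n)\approx d_n(0,a_n)$, and $d_n(a_n,b_n)\to0$; then $\mathbf b\sim_{\mathcal F}\mathbf a$ and $\sigma\in C(\mathbf a)$ by definition. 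The search radius needed at stage $n$ is roughly a fixed number of bits of $\alpha$ around position $s(n)\cdot r$, and since $s$ is computable the interval $[\,\lfloor s(n) r\rfloor - w,\ \lfloor s(n)r\rfloor + w\,]$ is a computably located window of length $2w+1$ inside $\alpha$.

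\textbf{The incompressibility step.} The heart of the argument is: for a Martin-L\"of random $\alpha$ and a computable scaling factor $s$, for every $\sigma$ and every tolerance $\epsilon>0$ there is $w\in\omega$ such that for all sufficiently large $n$ the factor $\alpha{\restriction}[\lfloor s(n)(r-\epsilon)\rfloor,\ \lfloor s(n)(r+\epsilon)\rfloor]$ contains an occurrence of $\sigma$. I would prove the contrapositive via a Martin-L\"of test: the set of $X$ such that infinitely often a block of length $\ell_n := \lfloor s(n)(r+\epsilon)\rfloor - \lfloor s(n)(r-\epsilon)\rfloor$ at the computably-given location misses $\sigma$ has measure zero, and in fact can be covered by a uniformly c.e. sequence of open sets of rapidly decreasing measure (since $\ell_n\to\infty$ because $s$ is strictly increasing and $\epsilon>0$, the probability that a fixed block of length $\ell_n$ avoids $\sigma$ is $(1-|\Sigma|^{-1})^{\ell_n}$, which is summable along a computable subsequence). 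Hence no Martin-L\"of random string lies in this null set: for each rational $r$ and rational $\epsilon$ there is $N$ with the block at stage $n\ge N$ containing $\sigma$. Taking $a_n'$ to be such an occurrence gives $|d_n(0,a_n')-r|\le\epsilon$ and colour $\sigma$, and combining over a decreasing sequence $\epsilon_k\to0$ (as in the proof of Lemma~\ref{Lem:closed}, a diagonal choice $b_n := a_n'$ with $\epsilon$ depending on $n$) yields the desired $\mathbf b$.

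\textbf{Assembling the cone.} Once every point of the cone is shown to carry every colour, it remains to identify the underlying coloured metric space. The metric part is the standard fact that the asymptotic cone of $\omega$ (with any scaling to infinity) is $(\mathbb R_{\ge0},d)$ — the same computation as in the eventually-periodic theorem above, using that $s(n)\to\infty$ so the image of $\{0,\dots,s(n)k\}$ fills $[0,k]$ up to mesh $1/s(n)$; the colour-closedness Lemma~\ref{Lem:closed} together with the pointwise claim forces $C(\mathbf a)=\Sigma$ for all $\mathbf a$. I would note the argument is uniform in $\mathcal F$: nowhere do we use any property of $\mathcal F$ beyond it being a non-principal ultrafilter, because the block-occurrence statement holds for \emph{cofinitely many} $n$, hence for every set in $\mathcal F$.

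\textbf{Main obstacle.} The delicate point is the uniformity and summability in the Martin-L\"of test: I must handle \emph{all} rational $r$ and $\epsilon$ simultaneously and account for the shifting window location $\lfloor s(n)r\rfloor$, which depends on $n$ through the computable function $s$. The block length $\ell_n$ could grow only polynomially (or even logarithmically) in $s(n)$ while the number of candidate windows and the rate at which $s$ grows are uncontrolled, so I need to pass to a computable subsequence of stages $n_1<n_2<\cdots$ along which $\ell_{n_j}\ge j$, say, to make $\sum_j(1-|\Sigma|^{-1})^{\ell_{n_j}}<\infty$; then argue that restricting attention to this subsequence still suffices because any cofinite condition on the subsequence, transported back, is good enough to build $\mathbf b$ (one only needs the block-occurrence statement to hold \emph{along a set in $\mathcal F$}, and — since it holds for cofinitely many stages on a computable infinite set — one replaces "cofinitely many $n$" by "cofinitely many $n$ in a fixed computable infinite set", which still meets every member of the non-principal ultrafilter $\mathcal F$ provided that computable set is itself in $\mathcal F$; choosing $\mathcal F$ is not allowed here, so instead I relativize the construction of $\mathbf b$ to use only stages in the computable subsequence, redefining $b_n$ arbitrarily off it — this is harmless since an $\mathcal F$-equivalence class is unchanged by altering a sequence off a set in $\mathcal F$ only if that set is in $\mathcal F$, so more carefully one should make the test cover $X$ whenever the bad event happens for \emph{some} window of length $\ge\log n$, giving density-style control for all large $n$ at once). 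Getting this bookkeeping exactly right, so that a single Martin-L\"of test handles every $(r,\epsilon)$ and every large $n$, is the technical crux.
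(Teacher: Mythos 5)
Your proposal follows essentially the same route as the paper's proof: use the closedness lemma to reduce to rational $r$, take the contrapositive (if some cone point at distance $r$ misses a colour $\sigma$, then for some $\epsilon>0$ infinitely many scaled windows $[\,s(n)(r-\epsilon),\,s(n)(r+\epsilon)\,]$ of $\alpha$ avoid $\sigma$), and contradict Martin-L\"of randomness with a test built from these windows, then assemble the representative sequence by a diagonal choice over shrinking $\epsilon$. The only real divergence is bookkeeping, and your ``main obstacle'' is not actually an obstacle: since $s$ is strictly increasing the window length is at least of order $\epsilon n$, so the measures $(1-|\Sigma|^{-1})^{\ell_n}$ are geometrically summable with no subsequence needed, and no single universal test over all $(r,\epsilon)$ is required --- the paper simply runs one Solovay test per rational pair $(r,\epsilon)$ (Solovay tests handle the ``infinitely often'' form directly and are equivalent to ML-tests), which is exactly the clean version of what you are attempting.
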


\begin{proof}

We assume that $\Sigma=\{0,1\}$. For given $r \in \mathbb{R}_{\geq0}$, assume the following:
\[
\exists A \in \mathcal{F} \forall \epsilon \forall^\infty n \in A \exists r' \in \alpha_n [r-\epsilon \leq r' \leq r+\epsilon \land C_n(r') = 0].
\]
Then we have a sequence $(a_n)$ such that the subsequence $\{a_n \ | \ n \in A\}$ converges to $r$ and all elements of the subsequence are coloured by $0$, and hence $Cone(\alpha, \mathcal F, s)$ has the colour $0$ at $r$.

Taking the contraposition of this fact and letting $A = \mathbb{N}$, we have the following:
if $Cone(\alpha, \mathcal{F}, s)$ does not have the colour $0$ at $r$ (by virtue of Lemma \ref{Lem:closed} we can assume that  $r \in \mathbb{Q}_{\geq 0}$), then for some $\epsilon >0$ we have infinitely many $n \in \mathbb{N}$ such that all points $r'$ of $\alpha_n$ in an interval $[r-\epsilon, r+\epsilon]$ has the colour 1.

 Now assume such an $r$ exists and consider the set:
\[
G_n = \{X \in 2^\omega \ | \  s(n)(r-\epsilon) \leq i \leq  s(n)(r+\epsilon) \Rightarrow X(i)=1\}
\]
In the Cantor space $\{0,1\}^{\omega}$, the sequence of open sets $(G_n)$ is uniformly computably enumerable and  the 
sum of the measures of $G_n$ is bounded. Such sequences are called Solovay tests.  This Solovay test fails the string $\alpha$,
that is, $\alpha\in G_n$ for infinitely many $n$ (cf. Definition 3.2.18 in \cite{Nies}). It is known that falling Solovay tests is equivalent failing Martin-L\"of tests (cf.  Proposition 3.2.19 in  \cite{Nies}). 
This shows that $\alpha$ is not random. 
\end{proof}

A culmination of asymptotic cones construction is that $[\alpha]=[\beta]$ implies 
bi-Lipschitz equivalence between  $Cone(\alpha, \mathcal F, s)$ and $Cone(\beta, \mathcal F, s)$. The 
proof is standard 
with the difference that our metric spaces are coloured.  However, since our base metric space is a linear order  $\mathbb \omega$, the conclusion of our theorem is a little stronger as it implies order preservance.  

\begin{thm} \label{Thm:Cones}
If strings $\alpha$ and $\beta$ have the same large scale geometry  then  there are colour-preserving and order preserving  homeomorphisms 
\begin{center}{
$H:Cone(\alpha, \mathcal F, s)\rightarrow Cone(\beta, \mathcal F, s)$ and\\
$G:Cone(\beta, \mathcal F, s) \rightarrow Cone(\alpha, \mathcal F, s)$,}
\end{center} 
constants $C_H$, $C_G$ such that for all \ ${\bf a}, {\bf b} \in Cone(\alpha, \mathcal F, s)$ and  \ ${\bf c}, {\bf d} \in Cone(\beta, \mathcal F, s)$ we have:
\begin{center}{
$(1/C_{H} )\cdot D({\bf a}, {\bf b})\leq D(H({\bf a}), H({\bf b}))\leq C_{H}\cdot D({\bf a}, {\bf b})$\\ and \\
\  \  \  $(1/C_{G}) \cdot D({\bf c}, {\bf d})\leq D(G({\bf c}), G({\bf d}))\leq C_{G}\cdot D({\bf c}, {\bf d})$.\qed}
\end{center}
\end{thm}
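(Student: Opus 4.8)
The plan is to transport the quasi-isometries between the strings directly into bi-Lipschitz homeomorphisms of the cones, using the observation that the additive distortion constant of a quasi-isometry becomes negligible after the metric is scaled by $s(n)\to\infty$. Since $\alpha\sim_{QI}\beta$, fix quasi-isometries $f:\alpha\to\beta$ and $g:\beta\to\alpha$ with constants $(A_f,B_f)$ and $(A_g,B_g)$. Define $H$ on $Cone(\alpha,\mathcal F,s)$ by sending the class of a bounded sequence $\mathbf{a}=(a_n)_n$ to the class of $(f(a_n))_n$ (with $f(a_n)$ read inside $X_{n,\beta}$), and $G$ symmetrically from $g$. The routine preliminary checks are: (i) $(f(a_n))_n$ is again bounded, from $d(0_\beta,f(a_n))\le d(0_\beta,f(0_\alpha))+A_f\,d(0_\alpha,a_n)+B_f$ after dividing by $s(n)$; and (ii) dividing the quasi-isometry inequality for the pair $(a_n,b_n)$ by $s(n)$ gives
\[
(1/A_f)\,d_n(a_n,b_n)-B_f/s(n)\ \le\ d_n(f(a_n),f(b_n))\ \le\ A_f\,d_n(a_n,b_n)+B_f/s(n).
\]
Since $B_f/s(n)\to 0$ and every cofinite set lies in $\mathcal F$ (non-principality), taking $\mathcal F$-limits shows both that $H$ respects $\sim_{\mathcal F}$ (so it is well defined on classes) and that $(1/A_f)\,D(\mathbf{a},\mathbf{b})\le D(H\mathbf{a},H\mathbf{b})\le A_f\,D(\mathbf{a},\mathbf{b})$. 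Thus $C_H=A_f$ and $C_G=A_g$ work, and in particular $H$ and $G$ are injective.

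For the homeomorphism claim, surjectivity of $H$ comes from the coarse-density clause of Definition \ref{Defn:QIM}: given $\mathbf{c}=(c_n)_n$ in $Cone(\beta,\mathcal F,s)$, choose $x_n$ with $d(c_n,f(x_n))\le A_f$; then $(x_n)_n$ is bounded (again by the quasi-isometry inequality relative to $0_\alpha$) and $(f(x_n))_n\sim_{\mathcal F}\mathbf{c}$ because $A_f/s(n)\to 0$, so $H$ maps the class of $(x_n)_n$ onto $\mathbf{c}$. Continuity of $H$ is the upper Lipschitz bound and the inverse is bi-Lipschitz by the lower bound, so $H$, and likewise $G$, is a bi-Lipschitz homeomorphism. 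Order preservation is where the Small Cross Over Lemma (Lemma \ref{L:no-big-crossing}) enters: there is a constant $K\le 0$ with $f(m)-f(n)\ge K$ whenever $n\le m$, so representing $[\mathbf{a}]\le[\mathbf{b}]$ by sequences with $a_n\le b_n$ for all $n$ yields $(f(b_n)-f(a_n))/s(n)\ge K/s(n)\to 0$, whence $H\mathbf{a}\le H\mathbf{b}$ in the cone — whose underlying metric space is readily identified, via $[\mathbf{a}]\mapsto D(\mathbf{0},\mathbf{a})$, with $(\mathbb R_{\ge 0},d)$ carrying its usual linear order.

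The delicate point, which I expect to be the main obstacle, is colour preservation, because a point of a cone may carry several colours — it is the set of colours reachable through its various $\sim_{\mathcal F}$-representatives (cf. Lemma \ref{Lem:closed}). One inclusion is immediate: if a representative $(a_n)_n$ of $\mathbf{a}$ has $\{n:\alpha(a_n)=\sigma\}\in\mathcal F$, then $(f(a_n))_n$ is a representative of $H\mathbf{a}$ with exactly the same index set, since $f$ preserves colours on the nose and each position of a string has a unique colour, so $\sigma\in C(H\mathbf{a})$. The converse inclusion is the problem: given a representative $(c_n)_n$ of $H\mathbf{a}$ witnessing $\sigma$, pushing it back through $g$ produces colour $\sigma$ on an $\mathcal F$-large set but at the scaled location $(G\circ H)(r)$, not at $r$, so no naive pull-back works (and $G\circ H$ need not be the identity). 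I would circumvent this by routing through the structural decomposition $f=f_3\circ f_2\circ f_1$ of Proposition \ref{Prop:decomposition}: the ``mixing'' factor $f_3$ is a bounded-crossing bijection and hence induces the identity on the cone, while the monotonic injection $f_1$ and monotonic surjection $f_2$ send each block of $\alpha$ onto a block whose colour set contains it and is contained in a bounded union of its preimage blocks, so one can track colour sets block-by-block through $f_1$ and $f_2$ (and the analogue for $g$) and obtain both inclusions. Combining $H$ and $G$ with the constants $C_H=A_f$, $C_G=A_g$ then finishes the proof.
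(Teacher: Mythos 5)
Your construction is exactly the paper's: the paper also defines $H(\mathbf a)=(h(a_n))_n$ for a fixed quasi-isometry $h:\alpha\to\beta$, observes that dividing the quasi-isometry inequality by $s(n)$ makes the additive constant vanish so that $C_H=A$ works, gets surjectivity from the coarse-density clause, and reads off continuity, injectivity and order preservation from the bi-Lipschitz bounds (your use of Lemma \ref{L:no-big-crossing} for monotonicity is a harmless variant). The one place you diverge is the last third of your proposal: you impose a stronger requirement than the theorem's proof does, namely the \emph{equality} $C(H\mathbf a)=C(\mathbf a)$ of colour sets. The paper's ``colour-preserving'' claim is exactly the inclusion you call immediate (push a representative witnessing $\sigma$ through the colour-preserving $h$; it remains a representative of $H\mathbf a$ with the same $\mathcal F$-large index set), and nothing more is proved or used. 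Moreover, your proposed repair of the converse inclusion via Proposition \ref{Prop:decomposition} would not work as sketched: the only possible source of extra colours at $H\mathbf a$ is colours of $\beta$ carried by positions outside the image of $h$, and in the decomposition these are precisely the positions created at the ``pollution'' stage $f_1$, so the same difficulty reappears verbatim for the cone map induced by $f_1$; the atomic-crossing and collapsing stages are irrelevant to it. Since $g$ need not be a quasi-inverse of $h$ (as you rightly note), the strong equality is genuinely delicate and should either be dropped (it is not needed for the statement as proved) or argued by a substantially more careful analysis than block-by-block tracking; with it dropped, your proof is complete and coincides with the paper's.
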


\section*{Acknowledgment}
This work was supported by JST ERATO Grant Number JPMJER1603, Japan. The authors thank Kazushige Terui for discussions. The first author thanks JSPS and RIMS of Kyoto University for support.

\end{document}